\newcommand{\half}{\tfrac12}
\newcommand{\fg}{\mathfrak{g}}
\newcommand{\fa}{\mathfrak{a}}
\newcommand{\fd}{\mathfrak{d}}
\newcommand{\fh}{\mathfrak{h}}
\newcommand{\fk}{\mathfrak{k}}
\newcommand{\fl}{\mathfrak{l}}
\newcommand{\fr}{\mathfrak{r}}
\newcommand{\fs}{\mathfrak{s}}
\newcommand{\ft}{\mathfrak{t}}
\newcommand{\fz}{\mathfrak{z}}
\newcommand{\fso}{\mathfrak{so}}
\newcommand{\fsu}{\mathfrak{su}}
\newcommand{\RR}{\mathbb{R}}
\newcommand{\CC}{\mathbb{C}}
\newcommand{\ZZ}{\mathbb{Z}}
\newcommand{\eL}{\mathscr{L}}
\newcommand{\eV}{\mathscr{V}}
\newcommand{\eD}{\mathscr{D}}
\newcommand{\eA}{\mathscr{A}}
\newcommand{\eF}{\mathscr{F}}
\newcommand{\be}{\boldsymbol{e}}
\DeclareMathOperator{\End}{End}
\DeclareMathOperator{\ad}{ad}
\theoremstyle{plain}
\newtheorem{lemma}{Lemma}
\newtheorem{theorem}[lemma]{Theorem}
\theoremstyle{definition}
\newcommand{\MUNCH}[1]{\relax}
\begin{document}
\title{Metric 3-Lie algebras for unitary Bagger--Lambert theories}
\author[de Medeiros, Figueroa-O'Farrill, Méndez-Escobar, Ritter]{Paul de
  Medeiros, José Figueroa-O'Farrill, Elena Méndez-Escobar and Patricia Ritter}
\address{Maxwell Institute and School of Mathematics,
  University of Edinburgh, UK}
\email{\{P.deMedeiros,J.M.Figueroa,E.Mendez\}@ed.ac.uk, P.D.Ritter@sms.ed.ac.uk}
\date{\today}
\begin{abstract}
  We prove a structure theorem for finite-dimensional indefinite-signature metric 3-Lie algebras admitting a maximally isotropic
  centre.  This algebraic condition indicates that all the negative-norm states in the associated Bagger--Lambert theory can be
  consistently decoupled from the physical Hilbert space.  As an immediate application of the theorem, new examples beyond index
  $2$ are constructed.  The lagrangian for the Bagger--Lambert theory based on a general physically admissible 3-Lie algebra of
  this kind is obtained.  Following an expansion around a suitable vacuum, the precise relationship between such theories and
  certain more conventional maximally supersymmetric gauge theories is found.  These typically involve particular combinations
  of $N=8$ super Yang-Mills and massive vector supermultiplets.  A dictionary between the 3-Lie algebraic data and the physical
  parameters in the resulting gauge theories will thereby be provided.
\end{abstract}
\maketitle
\tableofcontents

\section{Introduction and Summary}
\label{sec:introduction}

The fundamental ingredient in the Bagger--Lambert--Gustavsson (BLG) model \cite{BL1,GustavssonAlgM2,BL2}, proposed as the
low-energy effective field theory on a stack of coincident M2-branes, is a metric 3-Lie algebra $V$ on which the matter fields
take values.  This means that $V$ is a real vector space with a symmetric inner product $\left<-,-\right>$ and a trilinear,
alternating 3-bracket $[-,-,-]: V \times V \times V \to V$ obeying the fundamental identity \cite{Filippov}
\begin{equation}
  \label{eq:FI}
  [x,y,[z_1,z_2,z_3]] = [[x,y,z_1],z_2,z_3] + [z_1,[x,y,z_2],z_3] + [z_1,z_2,[x,y,z_3]]~,
\end{equation}
and the metricity condition
\begin{equation}
  \label{eq:metricity}
  \left<[x,y,z_1],z_2\right> = - \left<z_1,[x,y,z_2]\right>~,
\end{equation}
for all $x,y,z_i \in V$.  We say that $V$ is indecomposable if it is not isomorphic to an orthogonal direct sum of nontrivial
metric 3-Lie algebras.  Every indecomposable metric 3-Lie algebra gives rise to a BLG model and this motivates their
classification.  It is natural to attempt this classification in increasing index --- the index of an inner product being the
dimension of the maximum negative-definite subspace.  In other words, index $0$ inner products are positive-definite (called
euclidean here), index $1$ are lorentzian, et cetera.  To this date there is a classification up to index $2$, which we now
review.

It was conjectured in \cite{FOPPluecker} and proved in \cite{NagykLie} (see also \cite{GP3Lie,GG3Lie}) that there exists a
unique nonabelian indecomposable metric 3-Lie algebra of index $0$.  It is the simple 3-Lie algebra \cite{Filippov} $S_4$ with
underlying vector space $\RR^4$, orthonormal basis $e_1,e_2,e_3,e_4$, and 3-bracket
\begin{equation}
  \label{eq:S43b}
  [e_i,e_j,e_k] = \sum_{\ell =1}^4 \varepsilon_{ijk\ell} e_\ell~,
\end{equation}
where $\varepsilon = e_1 \wedge e_2 \wedge e_3 \wedge e_4$.  Nonabelian indecomposable 3-Lie algebras of index $1$ were
classified in \cite{Lor3Lie} and are given either by
\begin{itemize}
\item the simple lorentzian 3-Lie algebra $S_{3,1}$ with underlying vector space $\RR^4$, orthonormal basis $e_0,e_1,e_2,e_3$
  with $e_0$ timelike, and 3-bracket
  \begin{equation}
    \label{eq:S313b}
    [e_\mu,e_\nu,e_\rho] = \sum_{\sigma =0}^3 \varepsilon_{\mu\nu\rho\sigma} s_\sigma e_\sigma~,
  \end{equation}
  where $s_0 = -1$ and $s_i = 1$ for $i=1,2,3$; or
\item $W(\fg)$, with underlying vector space $\fg \oplus \RR u \oplus \RR v$, where $\fg$ is a semisimple Lie algebra with a
  choice of positive-definite invariant inner product, extended to $W(\fg)$ by declaring $u,v \perp \fg$ and $\left<u,u\right> =
  \left<v,v\right> = 0$ and $\left<u,v\right>=1$, and with 3-brackets
  \begin{equation}
    \label{eq:Wg3b}
    [u,x,y] = [x,y] \qquad\text{and}\qquad [x,y,z] = - \left<[x,y],z\right> v~,
  \end{equation}
  for all $x,y,z \in \fg$.
\end{itemize}
The latter metric 3-Lie algebras were discovered independently in \cite{GMRBL,BRGTV,HIM-M2toD2rev} in the context of the BLG
model.  The index 2 classification is presented in \cite{2p3Lie}.  There we found two classes of solutions, termed Ia and IIIb.
The former class is of the form $W(\fg)$, but where $\fg$ is now a lorentzian semisimple Lie algebra, whereas the latter class
will be recovered as a special case of the results in this paper and hence will be described in more detail below.

Let us now discuss the BLG model from a 3-algebraic perspective.  The $V$-valued matter fields in the BLG model
\cite{BL1,GustavssonAlgM2,BL2} comprise eight bosonic scalars $X$ and eight fermionic Majorana spinors $\Psi$ in three-dimensional
Minkowski space $\RR^{1,2}$.  Triality allows one to take the scalars $X$ and fermions $\Psi$ to transform respectively in the
vector and chiral spinor representations of the $\fso(8)$ R-symmetry.  These matter fields are coupled to a nondynamical gauge
field $A$ which is valued in $\Lambda^2V$ and described by a so-called twisted Chern--Simons term in the Bagger--Lambert
lagrangian \cite{BL1,BL2}.  The inner product $\left< -,- \right>$ on $V$ is used to describe the kinetic terms for the matter
fields $X$ and $\Psi$ in the Bagger--Lambert lagrangian.  Therefore if the index of $V$ is positive (i.e. not euclidean signature)
then the associated BLG model is not unitary as a quantum field theory, having `wrong' signs for the kinetic terms for those
matter fields in the negative-definite directions on $V$, thus carrying negative energy.

Indeed, for the BLG model based on the index-1 3-Lie algebra $W(\fg)$, one encounters just this problem.  Remarkably though, as
noted in the pioneering works \cite{GMRBL,BRGTV,HIM-M2toD2rev}, here the matter field components $X^v$ and $\Psi^v$ along
precisely one of the two null directions $(u,v)$ in $W(\fg)$ never appear in any of the interaction terms in the Bagger--Lambert
lagrangian.  Since the interactions are governed only by the structure constants of the 3-Lie algebra then this property simply
follows from the absence of $v$ on the left hand side of any of the 3-brackets in \eqref{eq:Wg3b}.  Indeed the one null direction
$v$ spans the centre of $W(\fg)$ and the linear equations of motion for the matter fields along $v$ force the components $X^u$ and
$\Psi^u$ in the other null direction $u$ to take constant values (preservation of maximal supersymmetry in fact requires $\Psi^u =
0$).  By expanding around this maximally supersymmetric and gauge-invariant vacuum defined by the constant expectation value of
$X^u$, one can obtain a unitary quantum field theory.  Use of this strategy in \cite{HIM-M2toD2rev} gave the first indication that
the resulting theory is nothing but $N=8$ super Yang--Mills theory on $\RR^{1,2}$ with the euclidean semi-simple gauge algebra
$\fg$.  The super Yang--Mills theory gauge coupling here being identified with the $SO(8)$-norm of the constant $X^u$.  This
procedure is somewhat reminiscent of the novel Higgs mechanism introduced in \cite{MukhiBL} in the context of the Bagger--Lambert
theory based on the euclidean Lie 3-algebra $S_4$.  In that case an $N=8$ super Yang-Mills theory with $\fsu(2)$ gauge algebra is
obtained, but with an infinite set of higher order corrections suppressed by inverse powers of the gauge coupling.  As found in
\cite{HIM-M2toD2rev}, the crucial difference is that there are no such corrections present in the lorentzian case.

Of course, one must be wary of naively integrating out the free matter fields $X^v$ and $\Psi^v$ in this way since their absence
in any interaction terms in the Bagger--Lambert lagrangian gives rise to an enhanced global symmetry that is generated by
shifting them by constant values.  To account for this degeneracy in the action functional, in order to correctly evaluate the
partition function, one must gauge the shift symmetry and perform a BRST quantisation of the resulting theory.  Fixing this
gauged shift symmetry allows one to set $X^v$ and $\Psi^v$ equal to zero while the equations of motion for the new gauge fields
sets $X^u$ constant and $\Psi^u = 0$.  Indeed this more rigorous treatment has been carried out in \cite{BLSNoGhost,GomisSCFT}
whereby the perturbative equivalence between the Bagger--Lambert theory based on $W( \fg )$ and maximally supersymmetric
Yang--Mills theory with euclidean gauge algebra $\fg$ was established (see also \cite{D2toD2}).  Thus the introduction of
manifest unitarity in the quantum field theory has come at the expense of realising an explicit maximal superconformal symmetry
in the BLG model for $W( \fg )$, i.e. scale-invariance is broken by a nonzero vacuum expectation value for $X^u$.  It is
perhaps worth pointing out that the super Yang--Mills description seems to have not captured the intricate structure of a
particular \lq degenerate' branch of the classical maximally supersymmetric moduli space in the BLG model for $W( \fg )$ found
in \cite{Lor3Lie}.  The occurrence of this branch can be understood to arise from a degenerate limit of the theory wherein the
scale $X^u = 0$ and maximal superconformal symmetry is restored.  However, as found in \cite{BLSNoGhost,GomisSCFT}, the
maximally superconformal unitary theory obtained by expanding around $X^u = 0$ describes a rather trivial free theory for eight
scalars and fermions, whose moduli space does not describe said degenerate branch of the original moduli space.

Consider now a general indecomposable metric 3-Lie algebra with index $r$ of the form $V = \bigoplus_{i=1}^r ( \RR u_i \oplus
\RR v_i ) \oplus W$, where $\left< u_i , u_j \right> = 0 = \left< v_i , v_j \right>$, $\left< u_i , v_j \right> = \delta_{ij}$
and $W$ is a euclidean vector space.  As explained in section 2.4 of \cite{2p3Lie}, one can ensure that none of the null
components $X^{v_i}$ and $\Psi^{v_i}$ of the matter fields appear in any of the interactions in the associated Bagger--Lambert
lagrangian provided that no $v_i$ appear on the left hand side of any of the 3-brackets on $V$.  This guarantees one has an
extra shift symmetry for each of these null components suggesting that all the associated negative-norm states in the spectrum
of this theory can be consistently decoupled after gauging all the shift symmetries and following BRST quantisation of the
gauged theory.  A more invariant way of stating the aforementioned criterion is that $V$ should admit a \emph{maximally
  isotropic centre}: that is, a subspace $Z\subset V$ of dimension equal to the index of the inner product on $V$, on which the
inner product vanishes identically and which is central, so that $[Z,V,V] = 0$ in the obvious notation.  The null directions
$v_i$ defined above along which we require the extra shift symmetries are thus taken to provide a basis for $Z$.  In
\cite{2p3Lie} we classified all indecomposable metric 3-Lie algebras of index $2$ with a maximally isotropic centre.  There are
nine families of such 3-Lie algebras, which were termed type IIIb in that paper.  In the present paper we will prove a structure
theorem for general metric 3-Lie algebras which admit a maximally isotropic centre, thus characterising them fully.  Although
the structure theorem falls short of a classification, we will argue that it is the best possible result for this problem.  The
bosonic contributions to the Bagger--Lambert lagrangians for such 3-Lie algebras will be computed but we will not perform a
rigorous analysis of the physical theory in the sense of gauging the shift symmetries and BRST quantisation.  We will limit
ourselves to expanding the theory around a suitable maximally supersymmetric and gauge-invariant vacuum defined by a constant
expectation value for $X^{u_i}$ (with $\Psi^{u_i} = 0$).  This is the obvious generalisation of the procedure used in
\cite{HIM-M2toD2rev} for the lorentzian theory and coincides with that used more recently in \cite{Ho:2009nk} for more general
3-Lie algebras.  We will comment explicitly on how all the finite-dimensional examples considered in section 4 of
\cite{Ho:2009nk} can be recovered from our formalism.

As explained in sections 2.5 and 2.6 of \cite{2p3Lie}, two more algebraic conditions are necessary in order to interpret the BLG
model based on a general metric 3-Lie algebra with maximally isotropic centre as an M2-brane effective field theory.  Firstly,
the 3-Lie algebra should admit a (nonisometric) conformal automorphism that can be used to absorb the formal coupling
dependence in the BLG model.  In \cite{2p3Lie} we determined that precisely four of the nine IIIb families of index $2$ 3-Lie
algebras with maximally isotropic centre satisfy this condition.  Secondly, parity invariance of the BLG model requires the
3-Lie algebra to admit an isometric antiautomorphism.  This symmetry is expected of an M2-brane effective field theory based on
the assumption that it should arise as an IR superconformal fixed point of $N=8$ super Yang--Mills theory.  In \cite{2p3Lie} we
determined that each of the four IIIb families of index $2$ 3-Lie algebras admitting said conformal automorphism also admitted
an isometric antiautomorphism.

It is worth emphasising that the motivation for the two conditions above is distinct from that which led us to demand a
maximally isotropic centre.  The first two are required only for an M-theoretic interpretation while the latter is a basic
physical consistency condition to ensure that the resulting quantum field theory is unitary.  Moreover, even given a BLG model
based on a 3-Lie algebra satisfying all three of these conditions, it is plain to see that the procedure we shall follow must
generically break the initial conformal symmetry since it has introduced scales into the problem corresponding to the vacuum
expectation values of $X^{u_i}$.  It is inevitable that this breaking of scale-invariance will also be a feature resulting from
a more rigorous treatment in terms of gauging shift symmetries and BRST quantisation.

Thus we shall concentrate just on the unitarity condition and, for the purposes of this paper, we will say that a metric 3-Lie
algebra is \textbf{(physically) admissible} if it is indecomposable and admits a maximally isotropic centre.  The first part of
the present paper will be devoted in essence to characterising finite-dimensional admissible 3-Lie algebras.  The second part
will describe the general structure of the gauge theories which result from expanding the BLG model based on these physically
admissible 3-Lie algebras around a given vacuum expectation value for $X^{u_i}$.  Particular attention will be paid to
explaining how the 3-Lie algebraic data translates into physical parameters of the resulting gauge theories.

This paper is organised as follows.  Section \ref{sec:towards-class-admiss} is concerned with the proof of Theorem \ref{thm:main},
which is outlined at the start of that section.  The theorem may be paraphrased as stating that every finite-dimensional
admissible 3-Lie algebra of index $r>0$ is constructed as follows.  We start with the following data:
\begin{itemize}
\item for each $\alpha=1,\dots,N$, a nonzero vector $0 \neq \kappa^\alpha \in \RR^r$ with components $\kappa^\alpha_i$, a
  positive real number $\lambda_\alpha >0$ and a compact simple Lie algebra $\fg_\alpha$;

\item for each $\pi =1,\dots, M$, a two-dimensional euclidean vector space $E_\pi$ with a complex structure $H_\pi$, and two
  linearly independent vectors $\eta^\pi, \zeta^\pi \in \RR^r$;

\item a euclidean vector space $E_0$ and $K \in \Lambda^3\RR^r \otimes E_0$ obeying the quadratic equations
  \begin{equation*}
    \left<K_{ijn},K_{k\ell m}\right> - \left<K_{ijm},K_{nk\ell}\right> + \left<K_{ij\ell},K_{mnk}\right> - \left<K_{ijk},K_{\ell mn}\right>  = 0,
  \end{equation*}
  where $\left<-,-\right>$ is the inner product on $E_0$;

\item and $L \in \Lambda^4 \RR^r$.
\end{itemize}
On the vector space
\begin{equation*}
  V = \bigoplus_{i=1}^r \left(\RR u_i \oplus \RR v_i\right) \oplus \bigoplus_{\alpha =1}^N \fg_\alpha \oplus \bigoplus_{\pi=1}^M
  E_\pi \oplus E_0,
\end{equation*}
we define the following inner product extending the inner product on
$E_\pi$ and $E_0$:
\begin{itemize}
\item $\left<u_i,v_j\right>=\delta_{ij}$, $\left<u_i,u_j\right>=0$, $\left<v_i,v_j\right> = 0$ and $u_i,v_j$ are orthogonal to
  the $\fg_\alpha$, $E_\pi$ and $E_0$; and
\item on each $\fg_\alpha$ we take $-\lambda_\alpha$ times the Killing form.
\end{itemize}
This makes $V$ above into an inner product space of index $r$.  On $V$ we define the following 3-brackets, with the tacit
assumption that any 3-bracket not listed here is meant to vanish:
\begin{equation}
  \begin{aligned}[m]
    [u_i,u_j,u_k] &= K_{ijk} + \sum_{\ell=1}^r L_{ijk\ell} v_\ell\\
    [u_i,u_j,x_0] &= - \sum_{k=1}^r \left<K_{ijk},x_0\right> v_k\\
    [u_i,u_j,x_\pi] &= (\eta^\pi_i \zeta^\pi_j - \eta^\pi_j \zeta^\pi_i) H_\pi x_\pi\\
    [u_i,x_\pi,y_\pi] &= \left<H_\pi x_\pi,y_\pi\right> \sum_{j=1}^r (\eta^\pi_i \zeta^\pi_j -\eta^\pi_j \zeta^\pi_i) v_j\\
    [u_i,x_\alpha,y_\alpha] &= \kappa_i^\alpha [x_\alpha,y_\alpha]\\
    [x_\alpha,y_\alpha,z_\alpha] &= - \left<[x_\alpha,y_\alpha],z_\alpha\right> \sum_{i=1}^r \kappa_i^\alpha v_i,
  \end{aligned}
\end{equation}
for all $x_0 \in E_0$, $x_\pi,y_\pi\in E_\pi$, and $x_\alpha, y_\alpha, z_\alpha \in \fg_\alpha$.  The resulting metric 3-Lie
algebra has a maximally isotropic centre spanned by the $v_i$.  It is indecomposable provided that there is no $x_0 \in E_0$ which
is perpendicular to all the $K_{ijk}$, whence in particular $\dim E_0 \leq \binom{r}{3}$.  The only non-explicit datum in the
above construction are the $K_{ijk}$ since they are subject to certain quadratic equations.  However we will see that these
equations are trivially satisfied for $r<5$.  Hence the above results constitutes, in principle, a classification for indices 3
and 4, extending the classification of index 2 in \cite{2p3Lie}.

Using this structure theorem we are able to calculate the lagrangian for the BLG model associated with a general physically
admissible 3-Lie algebra.  For the sake of clarity, we shall focus on just the bosonic contributions since the resulting theories
will have a canonical maximally supersymmetric completion.  Upon expanding this theory around the maximally supersymmetric vacuum
defined by constant expectation values $X^{u_i}$ (with all the other fields set to zero) we will obtain standard $N=8$
supersymmetric (but nonconformal) gauge theories with moduli parametrised by particular combinations of the data appearing in
Theorem \ref{thm:main} and the vacuum expectation values $X^{u_i}$.  It will be useful to think of the vacuum expectation values
$X^{u_i}$ as defining a linear map, also denoted $X^{u_i}: \RR^r \to \RR^8$, sending $\xi \mapsto X^\xi := \sum_{i=1}^r \xi_i
X^{u_i}$.  Indeed it will be found that the physical gauge theory parameters are naturally expressed in terms of components in the
image of this map.  That is, in general, we find that neither the data in Theorem \ref{thm:main} nor the vacuum expectation values
$X^{u_i}$ on their own appear as physical parameters which instead arise from certain projections of the components of the data in
Theorem \ref{thm:main} onto $X^{u_i}$ in $\RR^8$.

The resulting Bagger--Lambert lagrangian will be found to factorise into a sum of decoupled maximally supersymmetric gauge
theories on each of the euclidean components $\fg_\alpha$, $E_\pi$ and $E_0$.  The physical content and moduli on each component
can be summarised as follows:
\begin{itemize}
\item On each $\fg_\alpha$ one has an $N=8$ super Yang--Mills theory.  The gauge symmetry is based on the simple Lie algebra
  $\fg_\alpha$.  The coupling constant is given by $\| X^{\kappa^\alpha} \|$, which denotes the $SO(8)$-norm of the image of
  $\kappa^\alpha \in \RR^r$ under the linear map $X^{u_i}$.  The seven scalar fields take values in the hyperplane $\RR^7 \subset
  \RR^8$ which is orthogonal to the direction defined by $X^{\kappa^\alpha}$. (If $X^{\kappa^\alpha}=0$, for a given value of
  $\alpha$, one obtains a degenerate limit corresponding to a maximally superconformal free theory for eight scalar fields and
  eight fermions valued in $\fg_\alpha$.)

\item On each plane $E_\pi$ one has a pair of identical free abelian $N=8$ massive vector supermultiplets.  The bosonic fields in
  each such supermultiplet comprise a massive vector and six massive scalars.  The mass parameter is given by $\| X^{\eta^\pi}
  \wedge X^{\zeta^\pi} \|$, which corresponds to the area of the parallelogram in $\RR^8$ defined by the vectors $X^{\eta^\pi}$
  and $X^{\zeta^\pi}$ in the image of the map $X^{u_i}$.  The six scalar fields inhabit the $\RR^6 \subset \RR^8$ which is
  orthogonal to the plane spanned by $X^{\eta^\pi}$ and $X^{\zeta^\pi}$. (If $\| X^{\eta^\pi} \wedge X^{\zeta^\pi} \| =0$, for a
  given value of $\pi$, one obtains a degenerate massless limit where the vector is dualised to a scalar, again corresponding to a
  maximally superconformal free theory for eight scalar fields and eight fermions valued in $E_\pi$.) Before gauge-fixing, this
  theory can be understood as an $N=8$ super Yang--Mills theory with gauge symmetry based on the four-dimensional Nappi--Witten
  Lie algebra $\fd(E_\pi,\RR)$.  Moreover we explain how it can be obtained from a particular truncation of an $N=8$ super
  Yang-Mills theory with gauge symmetry based on any euclidean semisimple Lie algebra with rank 2, which may provide a more
  natural D-brane interpretation.

\item On $E_0$ one has a decoupled $N=8$ supersymmetric theory involving eight free scalar fields and an abelian Chern--Simons
  term.  Since none of the matter fields are charged under the gauge field in this Chern--Simons term then its overall
  contribution is essentially trivial on $\RR^{1,2}$.
\end{itemize}

\section*{Note added}

During the completion of this work the paper \cite{Ho:2009nk} appeared whose results have noticeable overlap with those found
here.  In particular, they also describe the physical properties of BLG models based on certain finite-dimensional 3-Lie algebras
with index greater than 1 admitting a maximally isotropic centre.  The structure theorem we prove here for such 3-Lie algebras
allows us to extend some of their results and make general conclusions about the nature of those unitary gauge theories which
arise from BLG models based on physically admissible 3-Lie algebras.  In terms of our data in Theorem \ref{thm:main}, the explicit
finite-dimensional examples considered in section 4 of \cite{Ho:2009nk} all have $K_{ijk} = 0 = L_{ijkl}$ with only one $J_{ij}$
nonzero.  This is tantamount to taking the index $r=2$.  The example in sections 4.1 and 4.2 of \cite{Ho:2009nk} has
$\kappa^\alpha = 0$ (i.e. no $\fg_\alpha$ part) while the example in section 4.3 has $\kappa^\alpha = (1,0)^t$.  These are
isomorphic to two of the four physically admissible IIIb families of index $2$ 3-Lie algebras found in \cite{2p3Lie}.

\section{Towards a classification of admissible metric 3-Lie algebras}
\label{sec:towards-class-admiss}

In this section we will prove a structure theorem for
finite-dimensional indecomposable metric 3-Lie algebras admitting a
maximally isotropic centre.  We think it is of pedagogical value to
first rederive the similar structure theorem for metric Lie algebras
using a method similar to the one we will employ in the more involved
case of metric 3-Lie algebras.

\subsection{Metric Lie algebras with maximally isotropic centre}
\label{sec:metric-lie-algebras}

Recall that a Lie algebra $\fg$ is said to be metric, if it possesses an ad-invariant scalar product.  It is said to be
indecomposable if it is not isomorphic to an orthogonal direct sum of metric Lie algebras (of positive dimension).  Equivalently,
it is indecomposable if there are no proper ideals on which the scalar product restricts nondegenerately.  A metric Lie algebra
$\fg$ is said to have index $r$, if the ad-invariant scalar product has index $r$, which is the same as saying that the maximally
negative-definite subspace of $\fg$ is $r$-dimensional.  In this section we will prove a structure theorem for finite-dimensional
indecomposable metric Lie algebras admitting a maximally isotropic centre, a result originally due to Kath and Olbrich
\cite{KathOlbrich2p}.

\subsubsection{Preliminary form of the Lie algebra}
\label{sec:preliminaries}

Let $\fg$ be a finite-dimensional indecomposable metric Lie algebra
of index $r>0$ admitting a maximally isotropic centre.   Let $v_i$,
$i=1,\dots,r$, denote a basis for the centre.  The inner product is
such that $\left<v_i,v_j\right>=0$.  Since the inner product on $\fg$
is nondegenerate, there exist $u_i$, $i=1,\dots,r$, which obey
$\left<u_i,v_j\right> = \delta_{ij}$.  It is always possible to
choose the $u_i$ such that $\left<u_i,u_j\right>=0$.  Indeed, if the
$u_i$ do not span a maximally isotropic subspace, then redefine them
by $u_i \mapsto u_i - \half \sum_{j=1}^r \left<u_i,u_j\right> v_j$ so
that they do.  The perpendicular complement to the $2r$-dimensional
subspace spanned by the $u_i$ and the $v_j$ is then
positive-definite.  In summary, $\fg$ admits the following vector
space decomposition
\begin{equation}
  \label{eq:decomp-2Lie}
  \fg = \bigoplus_{i=1}^r \left(\RR u_i \oplus \RR v_i\right) \oplus
  \fr,
\end{equation}
where $\fr$ is the positive-definite subspace of $\fg$ perpendicular
to all the $u_i$ and $v_j$.

Metricity then implies that the most general Lie brackets on $\fg$ are
of the form
\begin{equation}
  \label{eq:2-brackets}
  \begin{aligned}[m]
    [u_i,u_j] &= K_{ij} + \sum_{k=1}^r L_{ijk} v_k\\
    [u_i,x] &=  J_ix - \sum_{j=1}^r \left<K_{ij},x\right> v_j\\
    [x,y] &= [x,y]_{\fr} - \sum_{i=1}^r \left<x, J_i y\right> v_i,
  \end{aligned}
\end{equation}
where $K_{ij} = - K_{ji} \in \fr$, $L_{ijk} \in \RR$ is totally
skewsymmetric in the indices, $J_i \in \fso(\fr)$ and $[-,-]_\fr:\fr
\times \fr \to \fr$ is bilinear and skewsymmetric.  Metricity and the
fact that the $v_i$ are central, means that no $u_i$ can appear on the
right-hand side of a bracket.  Finally, metricity also implies that
\begin{equation}
  \label{eq:metric-Lie}
   \left<[x,y]_\fr, z \right> = \left<x, [y,z]_\fr \right>,
\end{equation}
for all $x,y,z \in \fr$.

It is not hard to demonstrate that the Jacobi identity for $\fg$ is
equivalent to the following identities on $[-,-]_{\fr}$, $J_i$ and
$K_{ij}$, whereas $L_{ijk}$ is unconstrained:
\begin{subequations}
  \begin{align}
    [x,[y,z]_\fr]_\fr - [[x,y]_\fr,z]_\fr - [y,[x,z]_\fr]_\fr &= 0    \label{eq:h-is-Lie}\\
    J_i [x,y]_{\fr} - [J_ix,y]_{\fr} -  [x, J_i y]_{\fr} &=  0 \label{eq:J-is-deriv}\\
    J_i J_j x - J_j J_i x - [K_{ij},x]_\fr &= 0 \label{eq:JJK}\\
    J_i K_{jk} + J_j K_{ki} +     J_k K_{ij}  &= 0 \label{eq:JK}\\
    \left<K_{\ell i},K_{jk}\right> + \left<K_{\ell j},K_{ki}\right> +
    \left<K_{\ell k},K_{ij}\right> &=0, \label{eq:KK}
  \end{align}
\end{subequations}
for all $x,y,z \in \fr$.

\subsubsection{$\fr$ is abelian}
\label{sec:fr-abelian}

Equation \eqref{eq:h-is-Lie} says that $\fr$ is a Lie algebra under
$[-,-]_\fr$, which because of equation \eqref{eq:metric-Lie} is
metric.  Being positive-definite, it is reductive, whence an orthogonal
direct sum $\fr = \fs \oplus \fa$, where $\fs$ is semisimple and $\fa$
is abelian.  We will show that for an indecomposable $\fg$, we are
forced to take $\fs=0$, by  showing that $\fg = \fs \oplus \fs^\perp$
as a metric Lie algebra.

Equation \eqref{eq:J-is-deriv} says that $J_i$ is a derivation of
$\fr$, which we know to be skewsymmetric.  The Lie algebra of
skewsymmetric derivations of $\fr$ is given by $\ad \fs \oplus
\fso(\fa)$.  Therefore under this decomposition, we may write $J_i =
\ad z_i + J^\fa_i$, for some unique $z_i \in \fs$ and $J^\fa_i \in
\fso(\fa)$.

Decompose $K_{ij} = K^{\fs}_{ij} + K^{\fa}_{ij}$, with $K^{\fs}_{ij}
\in \fs$ and $K^{\fa}_{ij} \in \fa$.  Then equation \eqref{eq:JJK}
becomes the following two conditions
\begin{align}
  [z_i,z_j]_\fr &= K^{\fs}_{ij}   \label{eq:zzKs}\\
  \intertext{and}
  [J^\fa_i, J^\fa_j] &= 0.  \label{eq:Js-commute}
\end{align}

One can now check that the $\fs$-component of the Jacobi identity for
$\fg$ is automatically satisfied, whereas the $\fa$-component gives
rise to the two equations
\begin{align}
  J^\fa_i K^{\fa}_{jk} +   J^\fa_j K^{\fa}_{ki} +   J^\fa_k K^{\fa}_{ij} &= 0 \label{eq:JKa}\\
  \intertext{and}
    \left<K^{\fa}_{\ell i},K^{\fa}_{jk}\right> + \left<K^{\fa}_{\ell j},K^{\fa}_{ki}\right> +
    \left<K^{\fa}_{\ell k},K^{\fa}_{ij}\right> &=0. \label{eq:KKa}
\end{align}

We will now show that $\fg \cong \fs \oplus \fs^\perp$, which violates
the indecomposability of $\fg$ unless $\fs = 0$.  Consider the
isometry $\varphi$ of the vector space $\fg$ defined by
\begin{equation}
  \label{eq:isometry-2}
  \begin{aligned}[m]
    \varphi(u_i) &= u_i - z_i - \half \sum_{j=1}^r \left<z_i,z_j\right> v_j\\
    \varphi(v_i) &= v_i\\
    \varphi(x) &= x + \sum_{i=1}^r \left<z_i, x\right> v_i,
  \end{aligned}
\end{equation}
for all $x \in \fr$.  Notice that if $x \in \fa$, then $\varphi(x) =
x$.  It is a simple calculation to see that for all $x,y\in \fs$,
\begin{equation}
  [\varphi(u_i), \varphi(x)] = 0 \qquad\text{and}\qquad [\varphi(x),
  \varphi(y)] = \varphi ([x,y]_\fr).
\end{equation}
In other words, the image of $\fs$ under $\varphi$ is a Lie subalgebra
of $\fg$ isomorphic to $\fs$ and commuting with its perpendicular
complement in $\fg$.  In other words, as a metric Lie algebra $\fg
\cong \fs \oplus \fs^\perp$, violating the decomposability of $\fg$
unless $\fs = 0$.

In summary, we have proved the following

\begin{lemma}\label{lem:h-is-abelian}
  Let $\fg$ be a finite-dimensional indecomposable metric Lie algebra
  with index $r>0$ and admitting a maximally isotropic centre.  Then
  as a vector space
  \begin{equation}
    \fg = \bigoplus_{i=1}^r \left(\RR u_i \oplus \RR v_i\right) \oplus E,
  \end{equation}
  where $E$ is a euclidean space, $u_i, v_i \perp E$ and
  $\left<u_i,v_j\right>= \delta_{ij}$, $\left<u_i,u_j\right> =
  \left<v_i,v_j\right> = 0$.  Moreover the Lie bracket is given by
  \begin{equation}
    \label{eq:Lie-brackets}
    \begin{aligned}[m]
      [u_i,u_j] &= K_{ij} + \sum_{k=1}^r L_{ijk} v_k\\
      [u_i,x] &=  J_ix - \sum_{j=1}^r \left<K_{ij},x\right> v_j\\
      [x,y] &= - \sum_{i=1}^r \left<x, J_i y\right> v_i,
    \end{aligned}
  \end{equation}
  where $K_{ij} = - K_{ji} \in E$, $L_{ijk} \in \RR$ is totally
  skewsymmetric in its indices, $J_i \in \fso(E)$ and in addition obey
  the following conditions:
  \begin{subequations}
    \begin{align}
      J_i J_j - J_j J_i &= 0 \label{eq:Jscommute}\\
      J_i K_{jk} + J_j K_{ki} +     J_k K_{ij}  &= 0 \label{eq:JK2}\\
      \left<K_{\ell i},K_{jk}\right> + \left<K_{\ell j},K_{ki}\right> +
      \left<K_{\ell k},K_{ij}\right> &=0. \label{eq:KK2}
    \end{align}
  \end{subequations}
\end{lemma}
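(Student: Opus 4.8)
The plan is to follow the route already sketched before the statement: first bring the bracket into the preliminary form \eqref{eq:2-brackets}, then show the positive‑definite part is abelian, and finally read off the reduced structure equations.

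First I would fix the vector space decomposition \eqref{eq:decomp-2Lie}. Choose a basis $v_i$ for the isotropic centre; by nondegeneracy there are $u_i$ with $\langle u_i,v_j\rangle=\delta_{ij}$, and the shift $u_i\mapsto u_i-\half\sum_j\langle u_i,u_j\rangle v_j$ arranges that the $u_i$ span an isotropic subspace, leaving $\fr$, the orthogonal complement of $\operatorname{span}\{u_i,v_i\}$, positive‑definite. Next I would argue that metricity \eqref{eq:metricity} together with centrality of the $v_i$ forces the brackets to be of the form \eqref{eq:2-brackets}: pairing a bracket with $u_j$ shows no $u_i$ occurs on a right‑hand side (a $u$‑component would feed back into $[v,\cdot]$), while the $v$‑components of $[u_i,x]$ and $[x,y]$ are pinned down by metricity in terms of $K_{ij}$ and $J_i$. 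Expanding the Jacobi identity on the various $u$/$\fr$ combinations then yields exactly \eqref{eq:h-is-Lie}--\eqref{eq:KK}, with $L_{ijk}$ dropping out; in particular \eqref{eq:metric-Lie} holds.

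The heart of the argument is to show $\fr$ is abelian when $\fg$ is indecomposable. Equations \eqref{eq:h-is-Lie} and \eqref{eq:metric-Lie} make $(\fr,[-,-]_\fr)$ a positive‑definite metric Lie algebra, hence reductive: $\fr=\fs\oplus\fa$ orthogonally, $\fs$ semisimple, $\fa$ abelian. Since each $J_i$ is a skewsymmetric derivation of $\fr$ by \eqref{eq:J-is-deriv}, and the skewsymmetric derivations form $\ad\fs\oplus\fso(\fa)$, write $J_i=\ad z_i+J^\fa_i$; splitting $K_{ij}=K^\fs_{ij}+K^\fa_{ij}$ and inserting into \eqref{eq:JJK} gives $[z_i,z_j]_\fr=K^\fs_{ij}$ and $[J^\fa_i,J^\fa_j]=0$. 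I would then build the vector‑space isometry $\varphi$ of \eqref{eq:isometry-2} and check by direct computation that on the $\fs$ summand it intertwines the bracket of $\fg$ with that of $\fs$ while $[\varphi(u_i),\varphi(x)]=0$ for $x\in\fs$. This exhibits $\varphi(\fs)$ as a nondegenerate ideal of $\fg$, so $\fg\cong\fs\oplus\fs^\perp$ as metric Lie algebras, contradicting indecomposability unless $\fs=0$. I expect this verification — that $\varphi$ is an isomorphism of metric Lie algebras onto an orthogonal direct sum — to be the one genuinely nontrivial point; everything else is bookkeeping with metricity and Jacobi.

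Finally, with $\fs=0$ the structure collapses: $\fr=\fa=:E$ is euclidean, $[-,-]_\fr\equiv0$, $J_i\in\fso(E)$, $K_{ij}\in E$, and the surviving Jacobi constraints \eqref{eq:JJK}, \eqref{eq:JK}, \eqref{eq:KK} become precisely \eqref{eq:Jscommute}, \eqref{eq:JK2} and \eqref{eq:KK2}, while $L_{ijk}$ remains free. Substituting $[-,-]_\fr=0$ into \eqref{eq:2-brackets} yields \eqref{eq:Lie-brackets}, completing the proof and recovering the theorem of Kath and Olbrich \cite{KathOlbrich2p}.
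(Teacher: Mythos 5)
Your proposal is correct and follows essentially the same route as the paper: the same preliminary form of the bracket via metricity and centrality, the same reduction of $\fr$ to a reductive algebra $\fs\oplus\fa$ with $J_i=\ad z_i+J^\fa_i$ and $K_{ij}=K^\fs_{ij}+K^\fa_{ij}$, and the same isometry $\varphi$ of \eqref{eq:isometry-2} exhibiting $\varphi(\fs)$ as a nondegenerate ideal forcing $\fs=0$. You also correctly identify the verification that $\varphi$ splits off $\fs$ orthogonally as the one nontrivial step; the remaining bookkeeping matches the paper's derivation of \eqref{eq:Jscommute}--\eqref{eq:KK2}.
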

The analysis of the above equations will take the rest of this
section, until we arrive at the desired structure theorem.

\subsubsection{Solving for the $J_i$}
\label{sec:solving-js}

Equation \eqref{eq:Jscommute} says that the $J_i \in \fso(E)$ are
mutually commuting, whence they span an abelian subalgebra $\fh
\subset \fso(E)$.  Since $E$ is positive-definite, $E$ decomposes as the
following orthogonal direct sum as a representation of $\fh$:
\begin{equation}
  \label{eq:E-split}
  E = \bigoplus_{\pi=1}^s E_\pi \oplus E_0,
\end{equation}
where
\begin{equation}
  \label{eq:e0}
  E_0 = \left\{ x\in E \middle | J_i x = 0~\forall i\right\}
\end{equation}
and each $E_\pi$ is a two-dimensional real irreducible representation of
$\fh$ with certain nonzero weight.  Let $(H_\pi)$ denote the basis for
$\fh$ where
\begin{equation}
  \label{eq:Hs}
  H_\pi H_\varrho =
  \begin{cases}
    0 & \text{if $\pi\neq \varrho$,}\\
    - \Pi_\pi & \text{if $\pi = \varrho$,}
  \end{cases}
\end{equation}
where $\Pi_\pi \in \End(E)$ is the orthogonal projector onto $E_\pi$.
Relative to this basis we can then write $J_i =\sum_\pi J_i^\pi H_\pi$, for
some real numbers $J_i^\pi$.

\subsubsection{Solving for the $K_{ij}$}
\label{sec:solving-ks}

Since $K_{ij} \in E$, we may decompose according to \eqref{eq:E-split} as
\begin{equation}
  K_{ij} = \sum_{\pi=1}^s K^\pi_{ij} + K^0_{ij}.
\end{equation}
We may identify each $E_\pi$ with a complex line where $H_\pi$ acts by
multiplication by $i$.  This turns the complex number $K^\pi_{ij}$ into
one component of a complex bivector $K^\pi \in \Lambda^2\CC^r$.
Equation \eqref{eq:JK2} splits into one equation for each $K^\pi$ and
that equation says that
\begin{equation}
  J^\pi_i K^\pi_{jk} + J^\pi_j K^\pi_{ki} + J^\pi_k K^\pi_{ij} = 0,
\end{equation}
or equivalently that $J^\pi \wedge K^\pi = 0$, which has as unique
solution $K^\pi = J^\pi \wedge t^\pi$, for some $t^\pi \in \RR^r$.  In other
words,
\begin{equation}
  K^\pi_{ij} = J^\pi_i t^\pi_j - J^\pi_j t^\pi_i.
\end{equation}

Now consider the following vector space isometry $\varphi: \fg \to
\fg$, defined by
\begin{equation}
  \label{eq:isometry-3}
  \begin{aligned}[m]
    \varphi(u_i) &= u_i - t_i - \half \sum_{j=1}^r \left<t_i,t_j\right> v_j\\
    \varphi(v_i) &= v_i\\
    \varphi(x) &= x + \sum_{i=1}^r \left<t_i, x\right> v_i,
  \end{aligned}
\end{equation}
for all $x \in E$, where $t_i \in E$ and hence $t_i = \sum_{\pi=1}^s
t_i^\pi + t_i^0$.  Under this isometry the form of the Lie algebra
remains invariant, but $K_{ij}$ changes as
\begin{equation}
  K_{ij} \mapsto K_{ij} - J_i t_j + J_j t_i
\end{equation}
and $L_{ijk}$ changes in a manner which need not concern us here.
Therefore we see that $K^\pi_{ij}$ has been put to zero via this
transformation, whereas $K^0_{ij}$ remains unchanged.  In other words,
we can assume without loss of generality that $K_{ij} \in E_0$, so
that $J_i K_{kl} = 0$, while still being subject to the quadratic
equation \eqref{eq:KK2}.

In summary, we have proved the following theorem, originally due to
Kath and Olbrich \cite{KathOlbrich2p}:

\begin{theorem}\label{thm:main-Lie}
  Let $\fg$ be a finite-dimensional indecomposable metric Lie algebra
  of index $r>0$ admitting a maximally isotropic centre.  Then as a
  vector space
  \begin{equation}
    \fg = \bigoplus_{i=1}^r \left(\RR u_i \oplus \RR v_i\right) \oplus
    \bigoplus_{\pi=1}^s E_\pi \oplus E_0,
  \end{equation}
  where all direct sums but the one between $\RR u_i$ and $\RR v_i$
  are orthogonal and the inner product is as in Lemma
  \ref{lem:h-is-abelian}.  Let $0 \neq J^\pi \in \RR^r$, $K_{ij} \in
  E_0$ and $L_{ijk} \in \RR$ and assume that the $K_{ij}$ obey the
  following quadratic relation
  \begin{equation}
    \label{eq:KK3}
    \left<K_{\ell i},K_{jk}\right> + \left<K_{\ell j},K_{ki}\right> +
    \left<K_{\ell k},K_{ij}\right>. =0.
  \end{equation}
  Then the Lie bracket of $\fg$ is given by
  \begin{equation}
    \label{eq:Lie-brackets-final}
    \begin{aligned}[m]
      [u_i,u_j] &= K_{ij} + \sum_{k=1}^r L_{ijk} v_k\\
      [u_i,x] &=  J^\pi_iH_\pi x\\
      [u_i,z] &= - \sum_{j=1}^r \left<K_{ij},z\right> v_j\\
      [x,y] &= - \sum_{i=1}^r \left<x, J^\pi_i H_\pi y\right> v_i,
    \end{aligned}
  \end{equation}
  where $x,y \in E_\pi$ and $z \in E_0$.  Furthermore, indecomposability
  forces the $K_{ij}$ to span all of $E_0$, whence $\dim E_0 \leq \binom{r}{2}$.
\end{theorem}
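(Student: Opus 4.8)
The plan is to read the statement off the normal forms already obtained in Sections~\ref{sec:preliminaries}--\ref{sec:solving-ks}, and then to settle the final dimension bound $\dim E_0\le\binom{r}{2}$ by an elementary ideal argument. By Lemma~\ref{lem:h-is-abelian} I may start from $\fg=\bigoplus_i(\RR u_i\oplus\RR v_i)\oplus E$ with $E$ euclidean, brackets of the shape \eqref{eq:Lie-brackets}, and $J_i\in\fso(E)$ pairwise commuting, $K_{ij}\in E$, $L_{ijk}\in\RR$ obeying \eqref{eq:JK2}--\eqref{eq:KK2}; only the normalisation of the $J_i$ and the $K_{ij}$ then remains.

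First, the $J_i$: being commuting and skewsymmetric on a euclidean space, they can be simultaneously block-diagonalised, which is exactly the orthogonal $\fh$-decomposition $E=\bigoplus_\pi E_\pi\oplus E_0$ of \eqref{eq:E-split}, with $E_0$ the common kernel and each $E_\pi$ two-dimensional of nonzero weight. Identifying $E_\pi$ with $\CC$ so that $H_\pi$ acts by multiplication by $\sqrt{-1}$ writes $J_i=\sum_\pi J_i^\pi H_\pi$ with $0\ne J^\pi=(J^\pi_i)\in\RR^r$, which is what appears in the $[u_i,x]$ and $[x,y]$ brackets of \eqref{eq:Lie-brackets-final}. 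Next, the $K_{ij}$: decomposing $K_{ij}=\sum_\pi K^\pi_{ij}+K^0_{ij}$, the constraint \eqref{eq:JK2} decouples over the summands, and on $E_\pi\cong\CC$ it reads $J^\pi\wedge K^\pi=0$ in $\Lambda^3(\CC^r)$; since the kernel of $J^\pi\wedge(-)$ on bivectors consists precisely of the decomposable bivectors $J^\pi\wedge t$ (exactness of the Koszul complex of the nonzero vector $J^\pi$), we get $K^\pi=J^\pi\wedge t^\pi$. Feeding a suitably $H_\pi$-rotated version of these $t^\pi$ into the isometry $\varphi$ of \eqref{eq:isometry-3} preserves the form of the Lie algebra and sends $K_{ij}\mapsto K_{ij}-J_it_j+J_jt_i$, hence can be arranged to kill every $K^\pi$ while fixing $K^0$ and merely reshuffling the unconstrained $L_{ijk}$. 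So without loss of generality $K_{ij}\in E_0$, still subject to \eqref{eq:KK3}, and the brackets are precisely \eqref{eq:Lie-brackets-final}.

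It remains to show indecomposability forces the $K_{ij}$ to span $E_0$. I would let $E'_0\subseteq E_0$ be the orthogonal complement inside $E_0$ of the span of the $K_{ij}$ and show $E'_0=0$. From \eqref{eq:Lie-brackets-final}, any $z\in E'_0$ has $[u_i,z]=-\sum_j\langle K_{ij},z\rangle v_j=0$ and $[v_i,z]=0$, and $z$ enters no other bracket, so $E'_0$ is central, hence an ideal; as a subspace of the euclidean space $E$ the inner product restricts to it nondegenerately, and it is proper since it misses every $u_i$. Indecomposability then gives $E'_0=0$, so the $\binom{r}{2}$ vectors $K_{ij}$ span $E_0$ and $\dim E_0\le\binom{r}{2}$.

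The only step that is not essentially bookkeeping is the $K$-normalisation: one must recognise \eqref{eq:JK2} as the condition $J^\pi\wedge K^\pi=0$, extract $K^\pi=J^\pi\wedge t^\pi$, and then verify that the particular map \eqref{eq:isometry-3} really preserves \emph{all} the brackets (not just the inner product) and that the phases of the $t^\pi$ can be chosen so that it indeed cancels the $K^\pi$; I expect this to be the main place where care is needed, the rest running exactly along the lines already used for Lemma~\ref{lem:h-is-abelian}.
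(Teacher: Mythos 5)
Your proposal is correct and follows essentially the same route as the paper: simultaneous block-diagonalisation of the commuting $J_i$ to get $E=\bigoplus_\pi E_\pi\oplus E_0$, recognition of \eqref{eq:JK2} as $J^\pi\wedge K^\pi=0$ with solution $K^\pi=J^\pi\wedge t^\pi$, and removal of the $K^\pi$ by the isometry \eqref{eq:isometry-3}, followed by the standard nondegenerate-central-ideal argument for the span of the $K_{ij}$. Your remark that the $t^\pi$ must be fed into $\varphi$ after an $H_\pi$-rotation (since the shift of $K_{ij}$ is by $J_i t_j-J_j t_i$ with $J_i=J_i^\pi H_\pi$ on $E_\pi$) is a point the paper glosses over, and you handle it correctly.
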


It should be remarked that the $L_{ijk}$ are only defined up to the
following transformation
\begin{equation}
  L_{ijk} \mapsto L_{ijk} + \left<K_{ij},t_k\right> +
  \left<K_{ki},t_j\right> + \left<K_{jk},t_i\right>,
\end{equation}
for some $t_i \in E_0$.

It should also be remarked that the quadratic relation \eqref{eq:KK3}
is automatically satisfied for index $r\leq 3$, whereas for index
$r\geq 4$ it defines an algebraic variety.  In that sense, the
classification problem for indecomposable metric Lie algebras
admitting a maximally isotropic centre is not tame for index $r>3$.

\subsection{Metric 3-Lie algebras with maximally isotropic centre}
\label{sec:max-iso-cent}

After the above warm-up exercise, we may now tackle the problem of interest, namely the classification of finite-dimensional
indecomposable metric 3-Lie algebras with maximally isotropic centre.  The proof is not dissimilar to that of
Theorem~\ref{thm:main-Lie}, but somewhat more involved and requires new ideas.  Let us summarise the main steps in the proof.
\begin{enumerate}
\item In section~\ref{sec:preliminary-form-3} we write down the most general form of a metric 3-Lie algebra $V$ consistent with
  the existence of a maximally isotropic centre $Z$.  As a vector space, $V = Z \oplus Z^* \oplus W$, where $Z$ and $Z^*$ are
  nondegenerately paired and $W$ is positive-definite.  Because $Z$ is central, the 4-form
  $F(x,y,z,w):=\left<[x,y,z],w\right>$ on $V$ defines an element in $\Lambda^4(W \oplus Z)$.  The decomposition
  \begin{equation}
    \Lambda^4(W \oplus Z) = \Lambda^4W \oplus \left( \Lambda^3W \otimes Z \right) \oplus \left(\Lambda^2W \otimes
      \Lambda^2Z\right) \oplus \left(W \otimes \Lambda^3 Z \right) \oplus \Lambda^4Z
  \end{equation}
  induces a decomposition of $F = \sum_{a=0}^4 F_a$, where $F_a \in \Lambda^{4-a}W \otimes \Lambda^aZ$, where the
  component $F_4$ is unconstrained.

\item The component $F_0$ defines the structure of a metric 3-Lie algebra on $W$ which, if $V$ is indecomposable, must be
  abelian, as shown in section~\ref{sec:w-abelian}.

\item The component $F_1$ defines a compatible family $[-,-]_i$ of reductive Lie algebras on $W$.  In
  section~\ref{sec:solving-brackets} we show that they all are proportional to a reductive Lie algebra structure $\fg \oplus \fz$
  on $W$, where $\fg$ is semisimple and $\fz$ is abelian.

\item In section~\ref{sec:solving-Js} we show that the component $F_2$ defines a family $J_{ij}$ of commuting endomorphisms
  spanning an abelian Lie subalgebra $\fa < \fso(\fz)$.  Under the action of $\fa$, $\fz$ breaks up into a direct sum of
  irreducible 2-planes $E_\pi$ and a euclidean vector space $E_0$ on which the $J_{ij}$ act trivially.

\item In section~\ref{sec:solving-K} we show that the component $F_3$ defines elements $K_{ijk} \in E_0$ which are subject to
  a quadratic equation.
\end{enumerate}

\subsubsection{Preliminary form of the 3-algebra}
\label{sec:preliminary-form-3}

Let $V$ be a finite-dimensional metric 3-Lie algebra with index $r>0$
and admitting a maximally isotropic centre.  Let $v_i$, $i=1,\dots,r$,
denote a basis for the centre.  Since the centre is (maximally)
isotropic, $\left<v_i,v_j\right>=0$, and since the inner product on
$V$ is nondegenerate, there exists $u_i$, $i=1,\dots,r$ satisfying
$\left<u_i,v_j\right> = \delta_{ij}$.  Furthermore, it is possible to
choose the $u_i$ such that $\left<u_i,u_j\right>=0$.  The
perpendicular complement $W$ of the $2r$-dimensional subspace spanned
by the $u_i$ and $v_i$ is therefore positive definite.  In other
words, $V$ admits a vector space decomposition
\begin{equation}
  V = \bigoplus_{i=1}^r \left( \RR u_i \oplus \RR v_i\right) \oplus W.
\end{equation}
Since the $v_i$ are central, metricity of $V$ implies that the $u_i$
cannot appear in the right-hand side of any 3-bracket.  The most
general form for the 3-bracket for $V$ consistent with $V$ being a
metric 3-Lie algebra is given for all $x,y,z \in W$ by
\begin{equation}
  \label{eq:3-brackets}
  \begin{aligned}[m]
    [u_i,u_j,u_k] &= K_{ijk} + \sum_{\ell=1}^r L_{ijk\ell} v_\ell\\
    [u_i,u_j,x] &= J_{ij} x - \sum_{k=1}^r \left<K_{ijk},x\right> v_k\\
    [u_i,x,y] &= [x,y]_i - \sum_{j=1}^r \left<x,J_{ij} y\right> v_j\\
    [x,y,z] &= [x,y,z]_W - \sum_{i=1}^r \left<[x,y]_i,z\right> v_i,
  \end{aligned}
\end{equation}
where $J_{ij} \in \fso(W)$, $K_{ijk} \in W$ and $L_{ijk\ell} \in \RR$
are skewsymmetric in their indices, $[-,-]_i : W \times W \to W$ is an
alternating bilinear map which in addition obeys
\begin{equation}
  \label{eq:metricity-i}
  \left<[x,y]_i,z\right> = \left<x,[y,z]_i\right>,
\end{equation}
and $[-,-,-]_W : W \times W \times W \to W$ is an alternating
trilinear map which obeys
\begin{equation}
  \label{eq:metricity-W}
  \left<[x,y,z]_W,w\right> = - \left<[x,y,w]_W,z\right>.
\end{equation}

The following lemma is the result of a straightforward, if somewhat
lengthy, calculation.

\begin{lemma}
  The fundamental identity \eqref{eq:FI} of the 3-Lie algebra $V$
  defined by \eqref{eq:3-brackets} is equivalent to the following
  conditions, for all $t,w,x,y,z\in W$:
  \begin{subequations}\label{eq:FI-V-pre}
    \begin{align}
      [t,w,[x,y,z]_W]_W &= [[t,w,x]_W,y,z]_W +[x,[t,w,y]_W,z]_W + [x,y,[t,w,z]_W]_W \label{eq:W-3la}\\
      [w,[x,y,z]_W]_i &= [[w,x]_i,y,z]_W + [x,[w,y]_i,z]_W + [x,y,[w,z]_i]_W \label{eq:ad-i-der-W}\\
      [x,y,[z,t]_i]_W &= [z,t,[x,y]_i]_W + [[x,y,z]_W,t]_i + [z,[x,y,t]_W]_i \label{eq:C9-W}\\
      J_{ij}[x,y,z]_W &=  [J_{ij}x,y,z]_W + [x,J_{ij}y,z]_W + [x,y,J_{ij}z]_W \label{eq:J-der-W}\\
      J_{ij}[x,y,z]_W - [x,y,J_{ij}z]_W  &= [[x,y]_i,z]_j - [[x,y]_j,z]_i \label{eq:C3-W}\\
      [x,y,K_{ijk}]_W &= J_{jk} [x,y]_i + J_{ki}[x,y]_j + J_{ij}[x,y]_k \label{eq:C6-W}\\
      [J_{ij}x,y,z]_W &= [[x,y]_i,z]_j + [[y,z]_j,x]_i + [[z,x]_i,y]_j \label{eq:C2-W}\\
      J_{ij}[x,y,z]_W &= [z,[x,y]_j]_i + [x,[y,z]_j]_i + [y,[z,x]_j]_i \label{eq:C2-W-bis}\\
      [x,y,K_{ijk}]_W &= J_{ij} [x,y]_k - [J_{ij} x,y ]_k - [x, J_{ij} y ]_k \label{eq:C1-W}\\
      J_{ik}[x,y]_j - J_{ij}[x,y]_k &= [J_{jk}x,y]_i + [x,J_{jk}y]_i \label{eq:J-der-2-1}\\
      [x,J_{jk}y]_i &= [J_{ij}x,y]_k + [J_{ki}x,y]_j + J_{jk}[x,y]_i \label{eq:J-der-2-2}\\
      [K_{ijk},x]_\ell &= [K_{\ell ij},x]_k + [K_{\ell jk},x]_i + [K_{\ell ki},x]_j \label{eq:K-ad-x}\\
      [K_{ijk},x]_\ell - [K_{ij\ell},x]_k &= \left(J_{ij}J_{k\ell} - J_{k\ell}J_{ij}\right) x  \label{eq:C4-W}\\
      [x,K_{jk\ell}]_i &= \left(J_{jk}J_{i\ell} + J_{k\ell}J_{ij} + J_{j\ell}J_{ki}\right) x  \label{eq:C5-W}\\
      J_{im}K_{jk\ell} &= J_{ij} K_{k\ell m} + J_{ik} K_{\ell mj} + J_{i\ell} K_{jkm} \label{eq:C7-bis}\\
      J_{ij}K_{k\ell m} &= J_{\ell m}K_{ijk}  + J_{mk}K_{ij\ell} + J_{k\ell}K_{ijm}  \label{eq:C7-W}\\
      \left<K_{ijm},K_{nk\ell}\right> + \left<K_{ijk},K_{\ell mn}\right>  &= \left<K_{ijn},K_{k\ell m}\right> + \left<K_{ij\ell},K_{mnk}\right> . \label{eq:C8-W}
    \end{align}
  \end{subequations}
\end{lemma}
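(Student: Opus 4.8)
The plan is to expand the fundamental identity \eqref{eq:FI} for the bracket \eqref{eq:3-brackets} and sort the result according to the decomposition $V=\operatorname{span}\{u_i\}\oplus Z\oplus W$, where $Z=\operatorname{span}\{v_i\}$ is the centre. Since the $v_i$ are central and no $u_i$ ever occurs on the right of a 3-bracket, \eqref{eq:FI} holds identically as soon as one of its five arguments lies in $Z$, so it suffices to verify it for arguments drawn from $\operatorname{span}\{u_i\}\oplus W$; by multilinearity one may moreover take each argument to be either a basis vector $u_i$ or a generic element of $W$. I would organise these ``pure'' configurations by the total number $p$ of $u$-type arguments, $0\le p\le 5$, and within each $p$ by how many of the remaining $W$-arguments sit in the outer pair $[t,w,-]$ rather than in the inner triple $[-,-,-]$ of \eqref{eq:FI}. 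Since \eqref{eq:FI} is symmetric in the outer pair and alternating in the inner triple, this leaves only twelve inequivalent shapes to treat.

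For each shape I would expand both sides using \eqref{eq:3-brackets}, being careful with the signs produced whenever an argument must be transposed into the position demanded by the normal forms, and discard every term carrying a $v_i$ inside an outer bracket, as it vanishes by centrality. Both sides then lie in $W\oplus Z$, and equating the $W$-component and the $Z$-component separately yields the identities. For instance the all-$W$ shape ($p=0$) gives \eqref{eq:W-3la} from its $W$-part; the all-$u$ shape ($p=5$), namely $[u_i,u_j,[u_k,u_\ell,u_m]]$, gives the $J$--$K$ relation \eqref{eq:C7-W} from its $W$-part and, once the $L_{ijk\ell}v_\ell$ terms drop out on forming the outer bracket, the purely quadratic relation \eqref{eq:C8-W} from its $Z$-part; and the intermediate shapes with $1\le p\le 4$ produce the remaining relations \eqref{eq:ad-i-der-W} through \eqref{eq:C7-bis}. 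In matching the $Z$-components one uses once the metricity relations \eqref{eq:metricity-i}, \eqref{eq:metricity-W} and the skewsymmetry of $J_{ij}\in\fso(W)$, after which each $Z$-component either reproduces a $W$-component already obtained or contributes a new relation. Because $L_{ijk\ell}$ never re-enters any bracket it stays unconstrained, as claimed. Collecting the output of all twelve shapes reproduces precisely the system \eqref{eq:FI-V-pre}; conversely, if all of \eqref{eq:FI-V-pre} hold then every shape of \eqref{eq:FI} is satisfied, so one obtains the asserted equivalence.

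The only genuine difficulty is bookkeeping: tracking the signs generated by the alternating 3-bracket, and recognising that relations emerging from different shapes, which at first sight differ by a permutation of their free indices, in fact coincide after using the total skewsymmetry of $K_{ijk}$ and $L_{ijk\ell}$ together with the skewsymmetry of $J_{ij}$. This is where one checks that the twelve shapes collapse onto exactly the seventeen listed equations and that none has been overlooked; there is no conceptual obstacle. One could equivalently repackage \eqref{eq:FI} as a condition on the totally skew $4$-form $(a,b,c,d)\mapsto\left<[a,b,c],d\right>$ and read off the same system from its graded components, but the direct term-by-term expansion sketched above is the most transparent route.
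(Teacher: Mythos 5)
Your proposal is correct and is essentially the calculation the paper itself relies on: the paper offers no written proof, merely asserting the lemma is ``the result of a straightforward, if somewhat lengthy, calculation'', and your outline — restricting to arguments in $\operatorname{span}\{u_i\}\oplus W$ since the identity is trivial on central arguments, enumerating the twelve inequivalent shapes by the placement of the $u$'s in the outer pair versus the inner triple, and separating $W$- and $Z$-components using metricity — is exactly that calculation, with the shape count and the sample derivations (e.g.\ \eqref{eq:W-3la} from the all-$W$ case and \eqref{eq:C7-W}, \eqref{eq:C8-W} from the all-$u$ case) checking out. Nothing further is needed.
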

Of course, not all of these equations are independent, but we will not
attempt to select a minimal set here, since we will be able to
dispense with some of the equations easily.

\subsubsection{$W$ is abelian}
\label{sec:w-abelian}

Equation \eqref{eq:W-3la} says that $W$ becomes a 3-Lie algebra under
$[-,-,-]_W$ which is metric by \eqref{eq:metricity-W}.  Since $W$ is
positive-definite, it is reductive
\cite{NagykLie,GP3Lie,GG3Lie,Lor3Lie}, whence isomorphic to an
orthogonal direct sum $W = S \oplus A$, where $S$ is semisimple and
$A$ is abelian.  Furthermore, $S$ is an orthogonal direct sum of
several copies of the unique positive-definite simple 3-Lie algebra
$S_4$ \cite{Filippov,LingSimple}.  We will show that as metric 3-Lie
algebras $V = S \oplus S^\perp$, whence if $V$ is indecomposable then
$S=0$ and $W=A$ is abelian as a 3-Lie algebra.  This is an extension
of the result in \cite{Lor3Lie} by which semisimple 3-Lie algebras $S$
factorise out of one-dimensional double extensions, and we will, in
fact, follow a similar method to the one in \cite{Lor3Lie} by which we
perform an isometry on $V$ which manifestly exhibits a nondegenerate
ideal isomorphic to $S$ as a 3-Lie algebra.

Consider then the isometry $\varphi: V \to V$, defined by
\begin{equation}
  \label{eq:isometry}
    \varphi(v_i) = v_i \qquad \varphi(u_i) = u_i - s_i -\half
    \sum_{j=1}^r \left<s_i,s_j\right> v_j \qquad  \varphi(x) = x +
    \sum_{i=1}^r \left<s_i,x\right> v_i,
\end{equation}
for $x\in W$ and for some $s_i \in W$.  (This is obtained by
extending the linear map $v_i \to v_i$ and $u_i \mapsto u_i - s_i$ to
an isometry of $V$.)  Under $\varphi$ the 3-brackets
\eqref{eq:3-brackets} take the following form
\begin{equation}
  \label{eq:3-brackets-phi}
  \begin{aligned}[m]
    [\varphi(u_i),\varphi(u_j),\varphi(u_k)] &= \varphi(K^\varphi_{ijk}) + \sum_{\ell=1}^r L^\varphi_{ijk\ell} v_\ell\\ 
    [\varphi(u_i),\varphi(u_j),\varphi(x)] &= \varphi(J^\varphi_{ij} x) - \sum_{k=1}^r \left<K^\varphi_{ijk},x\right> v_k\\
    [\varphi(u_i),\varphi(x),\varphi(y)] &= \varphi([x,y]^\varphi_i) - \sum_{j=1}^r \left<x,J^\varphi_{ij} y\right> v_j\\
    [\varphi(x),\varphi(y),\varphi(z)] &= \varphi([x,y,z]_W) - \sum_{i=1}^r \left<[x,y]^\varphi_i,z\right> v_i,
  \end{aligned}
\end{equation}
where
\begin{equation}
  \label{eq:isometry-changes}
  \begin{aligned}[m]
    [x,y]^\varphi_i &= [x,y]_i + [s_i,x,y]_W\\
    J^\varphi_{ij} x &= J_{ij} x + [s_i,x]_j - [s_j,x]_i + [s_i,s_j,x]_W\\
    K^\varphi_{ijk} & = K_{ijk} - J_{ij} s_k - J_{jk} s_i - J_{ki} s_j
    + [s_i,s_j]_k + [s_j,s_k]_i + [s_k,s_i]_j - [s_i,s_j,s_k]_W\\
    L^\varphi_{ijk\ell} &= L_{ijk\ell} + \left<K_{jk\ell},s_i\right> -
    \left<K_{k\ell i},s_j\right> + \left<K_{\ell ij},s_k\right> -
    \left<K_{ijk},s_\ell\right>\\
    &\quad - \left<s_i,J_{k\ell}s_j\right> -
    \left<s_k,J_{j\ell}s_i\right> - \left<s_j,J_{i\ell}s_k\right> +
    \left<s_\ell,J_{jk}s_i\right> + \left<s_\ell,J_{ki}s_j\right> +
    \left<s_\ell,J_{ij}s_k\right>\\
    &\quad + \left<[s_i,s_j]_\ell, s_k\right> - \left<[s_i,s_j]_k,
      s_\ell\right> - \left<[s_k,s_i]_j, s_\ell\right> -
    \left<[s_j,s_k]_i, s_\ell\right> +
    \left<[s_i,s_j,s_k]_W,s_\ell\right>.
  \end{aligned}
\end{equation}

\begin{lemma}
  There exists $s_i\in S$ such that the following conditions are met
  for all $x \in S$:
  \begin{equation}
    [x,-]^\varphi_i = 0 \qquad J^\varphi_{ij} x = 0 \qquad
    \left<K^\varphi_{ijk},x\right> = 0.
  \end{equation}
\end{lemma}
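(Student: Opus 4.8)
The plan is to adapt the Lie-algebra argument of section~\ref{sec:fr-abelian}, where the isometry parameter $z_i$ was pinned down by writing the skew derivation $J_i|_\fs$ as an inner derivation $\ad z_i$. The new feature is that a single family $s_i$ must absorb the brackets $[-,-]_i$ on the semisimple summand $S = \bigoplus_\alpha S^{(\alpha)}$ (each $S^{(\alpha)}$ a copy of $S_4$), after which the conditions on $J_{ij}$ and $K_{ijk}$ will follow automatically. First I would record the structural input: by \eqref{eq:ad-i-der-W} and \eqref{eq:J-der-W} every $[w,-]_i$ and every $J_{ij}$ is a derivation of $(W,[-,-,-]_W)$, skewsymmetric in the first case by \eqref{eq:metricity-i} (and in the second by hypothesis). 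Since $S = [W,W,W]_W$ is a characteristic ideal and $A = S^\perp$, every such derivation preserves $W = S \oplus A$ and each simple summand; and because all derivations of $S_4$ are inner, so that $\Der(S_4) = \fso(4)$ with $\Lambda^2 S^{(\alpha)} \xrightarrow{\sim} \fso(4)$ via $x\wedge y \mapsto [x,y,-]_W$, the operators $[x,-]_i|_{S^{(\alpha)}}$ (for $x \in S^{(\alpha)}$) and $J_{ij}|_{S^{(\alpha)}}$ are inner derivations of $S^{(\alpha)}$. Setting $i=j$ in \eqref{eq:C2-W} shows moreover that each $[-,-]_i$ satisfies the Jacobi identity.

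The key step produces the $s_i$, one summand at a time. I would first show that for $x \in S^{(\alpha)}$ the derivation $[x,-]_i$ is supported on $S^{(\alpha)}$, i.e.\ it annihilates $S^{(\beta)}$ for $\beta \neq \alpha$ and annihilates $A$: writing $x = [a,b,c]_W$ with $a,b,c \in S^{(\alpha)}$ (possible since $S_4$ is perfect) and substituting $(a,b,c,w)$ into \eqref{eq:C9-W}, all three terms on the right vanish because mixed $[-,-,-]_W$-brackets vanish and $[-,-]_i$ preserves simple summands, so $[x,w]_i = 0$. Hence $S^{(\alpha)}$ is an ideal for $[-,-]_i$ and, with \eqref{eq:metricity-i}, $(S^{(\alpha)},[-,-]_i)$ is a four-dimensional positive-definite metric Lie algebra --- abelian or $\fsu(2)\oplus\RR$. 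In either case there is an $s_i^{(\alpha)} \in S^{(\alpha)}$ with $[x,y]_i = [x,s_i^{(\alpha)},y]_W$ for all $x,y \in S^{(\alpha)}$: take $s_i^{(\alpha)}=0$ when abelian; in the $\fsu(2)\oplus\RR$ case the centre is a line $\RR u$, the derived algebra is $(\RR u)^\perp$ with bracket $c$ times the cross product, and one checks that $\{[x,s,-]_W : x \in S^{(\alpha)}\}$ is exactly the stabiliser $\fso(3)<\fso(4)$ of $s$, the assignment $x \mapsto [x,s,-]_W$ being the standard surjection $\RR^3 \to \fso(3)$ scaled by $\|s\|$, so that $s_i^{(\alpha)} = \pm c\,u$ (with $\|u\|=1$) does the job. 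Taking $s_i = \sum_\alpha s_i^{(\alpha)} \in S$ and using once more the vanishing of mixed brackets, $[x,y]_i = [x,s_i,y]_W$ for all $x\in S$ and $y \in W$; that is, $[x,-]^\varphi_i = 0$ for $x \in S$, which is the first assertion of the lemma.

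The other two assertions are then formal. The transformed structure constants $K^\varphi_{ijk}, J^\varphi_{ij}, [-,-]^\varphi_i$ again obey the identities \eqref{eq:W-3la}--\eqref{eq:C8-W}, with $[-,-,-]_W$ unchanged by $\varphi$. Plugging $x \in S$ into the $\varphi$-version of \eqref{eq:C2-W} leaves every term on the right carrying a factor $[x,-]^\varphi_i = 0$, so $[J^\varphi_{ij}x,y,z]_W = 0$ for all $y,z$; since $J^\varphi_{ij}$ preserves $S$ and $S$ is centreless, $J^\varphi_{ij}x = 0$. Likewise the $\varphi$-version of \eqref{eq:C6-W} with $x,y \in S$ gives $[x,y,K^\varphi_{ijk}]_W = 0$ for all $x,y \in S$, so $K^\varphi_{ijk}$ is annihilated by every inner derivation of $S$; as $S$ is an orthogonal sum of copies of the defining $\fso(4)$-module and hence has no nonzero invariants, the $S$-component of $K^\varphi_{ijk}$ vanishes, i.e.\ $\langle K^\varphi_{ijk},x\rangle = 0$ for $x \in S$.

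The main obstacle is the key step: isolating one $s_i$ that trivialises $[-,-]_i$ on all of $S$ at once. This hinges on the low-dimensional coincidence $\Der(S_4) = \fso(4) \cong \Lambda^2\RR^4$, used to identify the inner derivations $[x,-]_i|_{S^{(\alpha)}}$ with the one-parameter family $[x,s_i^{(\alpha)},-]_W$, together with the "supported on a single summand" lemma extracted from \eqref{eq:C9-W}. Once $[-,-]^\varphi_i$ has been killed on $S$, the statements about $J^\varphi_{ij}$ and $K^\varphi_{ijk}$ drop out just as the analogous steps did in the Lie case.
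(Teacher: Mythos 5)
Your proposal is correct and follows essentially the same route as the paper: decompose $S$ into copies of $S_4$, use the classification of four-dimensional positive-definite metric Lie algebras (abelian or $\fso(3)\oplus\RR$) to realise each $[-,-]_i$ on a simple factor as $[x,s_i^{(\alpha)},y]_W$, and then deduce the vanishing of $J^\varphi_{ij}$ and of the $S$-component of $K^\varphi_{ijk}$ from the centrelessness of $S$. The only differences are cosmetic: you finish via the $\varphi$-transformed identities \eqref{eq:C2-W} and \eqref{eq:C6-W} instead of the paper's explicit computation of $J_{ij}x=[s_i,s_j,x]_W$ and of $K_{ijk}$ modulo $A$, and your appeal to \eqref{eq:C9-W} is superfluous, since skewsymmetry together with the fact that each $\ad_i x$ preserves the orthogonal decomposition already forces $[x,y]_i\in S^{(\alpha)}\cap S^{(\beta)}=0$ for arguments in different summands.
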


Assuming for a moment that this is the case, the only nonzero
3-brackets involving elements in $\varphi(S)$ are
\begin{equation}
  [\varphi(x),\varphi(y),\varphi(z)] = \varphi([x,y,z]_W),
\end{equation}
and this means that $\varphi(S)$ is a nondegenerate ideal of $V$,
whence $V = \varphi(S) \oplus \varphi(S)^\perp$.  But this violates
the indecomposability of $V$, unless $S=0$.

\begin{proof}[Proof of the lemma]
  To show the existence of the $s_i$, let us decompose $S = S_4^{(1)} \oplus \dots \oplus S_4^{(m)}$ into $m$ copies of the unique
  simple positive-definite 3-Lie algebra $S_4$.  As shown in \cite[§3.2]{Lor3Lie}, since $J_{ij}$ and $[x,-]_i$ define
  skewsymmetric derivations of $W$, they preserve the decomposition of $W$ into $S\oplus A$ and that of $S$ into its simple
  factors.  One consequence of this fact is that $J_{ij}x \in S$ for all $x \in S$ and $[x,y]_i \in S$ for all $x,y\in S$, and
  similarly if we substitute $S$ for any of its simple factors in the previous statement.  Notice in addition that putting $i=j$
  in equation \eqref{eq:C2-W}, $[-,-]_i$ obeys the Jacobi identity.  Hence on any one of the simple factors of $S$ --- let's call
  it generically $S_4$ --- the bracket $[-,-]_i$ defines the structure of a four-dimensional Lie algebra.  This Lie algebra is
  metric by equation \eqref{eq:metricity-i} and positive definite.  There are (up to isomorphism) precisely two four-dimensional
  positive-definite metric Lie algebras: the abelian Lie algebra and $\fso(3) \oplus \RR$.  In either case, as shown in
  \cite[§3.2]{Lor3Lie}, there exists a unique $s_i \in S_4$ such that $[s_i,x,y]_W = [x,y]_i$ for $x,y\in S_4$.  (In the former
  case, $s_i=0$.)  Since this is true for all simple factors, we conclude that there exists $s_i \in S$ such that $[s_i,x,y]_W =
  [x,y]_i$ for $x,y\in S$ and for all $i$.

  Now equation \eqref{eq:C2-W} says that for all $x,y,z\in S$,
  \begin{align*}
    [J_{ij}x,y,z]_W &= [[x,y]_i,z]_j + [[y,z]_j,x]_i + [[z,x]_i,y]_j\\
    &= [s_j,[s_i, x,y]_W,z]_W + [s_i,[s_j,y,z]_W,x]_W + [s_j,[s_i, z,x]_W,y]_W\\
    &= [[s_i,s_j,x]_W,y,x]_W, &&\text{using \eqref{eq:W-3la}}
  \end{align*}
  which implies that $J_{ij}x - [s_i,s_j,x]_W$ centralises $S$, and thus is in $A$.  However, for $x \in S$, both $J_{ij}x \in S$
  and $[s_i,s_j,x]_W \in S$, so that $J_{ij}x = [s_i,s_j,x]_W$.  Similarly, equation \eqref{eq:C1-W} says that for all $x,y \in S$,
  \begin{align*}
    [x,y,K_{ijk}]_W &= J_{ij} [x,y]_k - [J_{ij} x,y ]_k - [x, J_{ij} y]_k\\
    &= [s_i,s_j, [s_k,x,y]_W]_W - [s_k,[s_i,s_j,x]_W,y ]_W - [s_k, x, [s_i,s_j y]_W]_W\\
    &= [[s_i,s_j,s_k]_W,x,y]_W, && \text{using \eqref{eq:W-3la}}
  \end{align*}
  which implies that $K_{ijk}-[s_i,s_j,s_k]_W$ centralises $S$, whence $K_{ijk} -
  [s_i,s_j,s_k]_W  = K^A_{ijk} \in A$.  Finally, using the explicit formulae for
  $J^\varphi_{ij}$ and $K^\varphi_{ijk}$ in equation
  \eqref{eq:isometry-changes}, we see that for all all $x \in S$,
  \begin{align*}
    J^\varphi_{ij} x &= J_{ij} x + [s_i,x]_j - [s_j,x]_i + [s_i,s_j,x]_W\\
    &= [s_i,s_j,x]_W + [s_j,s_i,x]_W - [s_i,s_j,x]_W + [s_i,s_j,x]_W = 0\\
    \intertext{and}
    K^\varphi_{ijk} &= K_{ijk} - J_{ij} s_k - J_{jk} s_i - J_{ki} s_j +
    [s_i,s_j]_k + [s_j,s_k]_i + [s_k,s_i]_j - [s_i,s_j,s_k]_W\\
    &= K^A_{ijk} + [s_i,s_j,s_k]_W - [s_i,s_j,s_k]_W - [s_j,s_k,s_i]_W -
    [s_k,s_i,s_j]_W\\
    & \quad +  [s_k, s_i,s_j]_W + [s_i,s_j,s_k]_W +
    [s_j,s_k,s_i]_W - [s_i,s_j,s_k]_W= K^A_{ijk},
  \end{align*}
  whence $\left<K^\varphi_{ijk},x\right> = 0$ for all $x \in S$.
\end{proof}

We may summarise the above discussion as follows.

\begin{lemma}
  Let $V$ be a finite-dimensional indecomposable metric 3-Lie algebra
  of index $r>0$ with a maximally isotropic centre.  Then as a vector
  space
  \begin{equation}
    V = \bigoplus_{i=1}^r \left( \RR u_i \oplus \RR v_i\right) \oplus
    W,
  \end{equation}
  where $W$ is positive-definite, $u_i,v_i \perp W$,
  $\left<u_i,u_j\right> = 0$, $\left<v_i,v_j\right> = 0$ and
  $\left<u_i,v_j\right>= \delta_{ij}$.  The $v_i$ span the maximally
  isotropic centre.  The nonzero 3-brackets are given by
  \begin{equation}
    \label{eq:V3b}
    \begin{aligned}[m]
      [u_i,u_j,u_k] &= K_{ijk} + \sum_{\ell=1}^r L_{ijk\ell} v_\ell\\ 
      [u_i,u_j,x] &= J_{ij} x - \sum_{k=1}^r \left<K_{ijk},x\right> v_k\\
      [u_i,x,y] &= [x,y]_i - \sum_{j=1}^r \left<x,J_{ij} y\right> v_j\\
      [x,y,z] &= - \sum_{i=1}^r \left<[x,y]_i,z\right> v_i,
    \end{aligned}
  \end{equation}
  for all $x,y,z\in W$ and for some $L_{ijk\ell} \in \RR$, $K_{ijk}
  \in W$, $J_{ij} \in \fso(W)$, all of which are totally skewsymmetric
  in their indices, and bilinear alternating brackets $[-,-]_i : W
  \times W \to W$ satisfying equation \eqref{eq:metricity-i}.
  Furthermore, the fundamental identity of the 3-brackets
  \eqref{eq:V3b} is equivalent to the following conditions on
  $K_{ijk}$, $J_{ij}$ and $[-,-]_i$:
  \begin{subequations}\label{eq:FI-V3b}
    \begin{align}
      [x, [y,z]_i]_j &=  [[x,y]_j,z]_i + [y,[x,z]_j]_i \label{eq:C2}\\
      [[x,y]_i,z]_j &=  [[x,y]_j,z]_i  \label{eq:C3}\\
      J_{ij} [x,y]_k &= [J_{ij} x,y ]_k + [x, J_{ij} y ]_k \label{eq:C1}\\
      0 &= J_{j\ell} [x,y]_i + J_{\ell i}[x,y]_j + J_{ij}[x,y]_\ell \label{eq:C6}\\
      [K_{ijk},x]_\ell - [K_{ij\ell},x]_k &= \left(J_{ij}J_{k\ell} - J_{k\ell}J_{ij}\right) x  \label{eq:C4}\\
      [x,K_{jk\ell}]_i &= \left(J_{jk}J_{i\ell} + J_{k\ell}J_{ij} + J_{j\ell}J_{ki}\right) x  \label{eq:C5}\\
      J_{ij}K_{k\ell m} &= J_{\ell m}K_{ijk}  + J_{mk}K_{ij\ell} + J_{k\ell}K_{ijm}  \label{eq:C7}\\
      0 &= \left<K_{ijn},K_{k\ell m}\right> + \left<K_{ij\ell},K_{mnk}\right> - \left<K_{ijm},K_{nk\ell}\right> - \left<K_{ijk},K_{\ell mn}\right> . \label{eq:C8}
    \end{align}
  \end{subequations}
\end{lemma}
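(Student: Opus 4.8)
The vector‑space decomposition $V=\bigoplus_{i=1}^r(\RR u_i\oplus\RR v_i)\oplus W$ with $W$ euclidean, and the general form of the 3‑brackets, are precisely the preliminary analysis that produced \eqref{eq:3-brackets}; and the assertion that indecomposability forces the semisimple factor $S$ of $W$ to vanish, so that $W$ is abelian as a 3‑Lie algebra and the brackets reduce to \eqref{eq:V3b}, is exactly what the isometry $\varphi$ of \eqref{eq:isometry} and the preceding lemma establish. So the only genuinely new claim is the equivalence between the fundamental identity of the 3‑brackets \eqref{eq:V3b} and the system \eqref{eq:FI-V3b}. The plan is to specialise the system \eqref{eq:FI-V-pre} --- which, by the earlier lemma, is equivalent to the fundamental identity of the general brackets \eqref{eq:3-brackets} --- to the case $[-,-,-]_W=0$, since \eqref{eq:V3b} is nothing but \eqref{eq:3-brackets} with a vanishing $W$‑3‑bracket.

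Carrying this out, with $[-,-,-]_W=0$ every term of \eqref{eq:W-3la}, \eqref{eq:ad-i-der-W}, \eqref{eq:C9-W} and \eqref{eq:J-der-W} contains a $W$‑3‑bracket, so these four equations hold identically and drop out. The equations \eqref{eq:C3-W}, \eqref{eq:C6-W} and \eqref{eq:C1-W} lose their left‑hand sides and become, respectively, \eqref{eq:C3}, \eqref{eq:C6} and \eqref{eq:C1}; the equations \eqref{eq:C4-W}, \eqref{eq:C5-W}, \eqref{eq:C7-W} and \eqref{eq:C8-W} do not involve the $W$‑3‑bracket at all and are already \eqref{eq:C4}, \eqref{eq:C5}, \eqref{eq:C7} and \eqref{eq:C8}; and the left‑hand side of \eqref{eq:C2-W} vanishes, so that equation becomes $[[x,y]_i,z]_j+[[y,z]_j,x]_i+[[z,x]_i,y]_j=0$, which --- using only that $[-,-]_i$ is alternating, to pull the inner bracket to the front of each term, and then relabelling $i\leftrightarrow j$ --- is exactly \eqref{eq:C2}. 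This already gives one direction of the equivalence.

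For the converse I would show that the remaining specialised equations, namely \eqref{eq:C2-W-bis}, \eqref{eq:J-der-2-1}, \eqref{eq:J-der-2-2}, \eqref{eq:K-ad-x} and \eqref{eq:C7-bis} with their $W$‑3‑bracket terms deleted, follow from \eqref{eq:C1}--\eqref{eq:C8}: substituting \eqref{eq:C1} into the right‑hand side of \eqref{eq:J-der-2-1} turns it into \eqref{eq:C6}; \eqref{eq:C2-W-bis} follows from \eqref{eq:C2} and \eqref{eq:C3} after moving nested brackets around with the alternating property; and \eqref{eq:K-ad-x} and \eqref{eq:C7-bis} follow from \eqref{eq:C4}, \eqref{eq:C5} and \eqref{eq:C7} together with the total skewsymmetry of $K_{ijk}$ and $J_{ij}$ in their indices. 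A cleaner alternative, which I would fall back on if this bookkeeping became unwieldy, is to bypass \eqref{eq:FI-V-pre} altogether and verify the fundamental identity \eqref{eq:FI} directly on the explicit brackets \eqref{eq:V3b}: since the $W$‑3‑bracket is absent and the $v_i$ are central, one need only test the entries in which at most three of the five slots are $u_i$, and each such case delivers one of \eqref{eq:C2}--\eqref{eq:C8}, with the $v$‑components of the all‑$u$ case producing \eqref{eq:C7} and \eqref{eq:C8} and the symmetry \eqref{eq:metricity-i} of $[-,-]_i$ needed to bring the one‑$u$ cases into the stated shape.

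The main obstacle is precisely this redundancy analysis. The specialised list looks comfortably overdetermined, and the work is to verify --- tracking signs, using the alternating property of the brackets together with \eqref{eq:metricity-i}, and exploiting total skewsymmetry in the index labels $i,j,k,\ell,m$ --- that nothing outside \eqref{eq:C1}--\eqref{eq:C8} is genuinely independent and, conversely, that none of \eqref{eq:C1}--\eqref{eq:C8} has been silently weakened in passing to \eqref{eq:V3b}. No conceptual ingredient beyond those already used to derive \eqref{eq:FI-V-pre} is required; the difficulty is entirely one of organisation.
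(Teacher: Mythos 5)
Your proposal is correct and follows essentially the same route as the paper: the decomposition and general bracket form come from the preliminary analysis, indecomposability kills the semisimple factor of $W$ via the isometry $\varphi$, and the equivalence with \eqref{eq:C2}--\eqref{eq:C8} is obtained by setting $[-,-,-]_W=0$ in \eqref{eq:FI-V-pre} and checking that the surviving equations outside that list are redundant — a check the paper itself only asserts (``it is not hard to show that the equations in \eqref{eq:FI-V3b} imply the rest'') and which you sketch in somewhat more detail, correctly noting that \eqref{eq:metricity-i} is needed (e.g.\ for the specialisation of \eqref{eq:J-der-2-2}).
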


There are less equations in \eqref{eq:FI-V3b} than are obtained from \eqref{eq:FI-V-pre} by simply making $W$ abelian.  It is not
hard to show that the equations in \eqref{eq:FI-V3b} imply the rest.  The study of equations \eqref{eq:FI-V3b} will take us until
the end of this section.  The analysis of these conditions will break naturally into several steps.  In the first step we will
solve equations \eqref{eq:C2} and \eqref{eq:C3} for the $[-,-]_i$.  We will then solve equations \eqref{eq:C1} and \eqref{eq:C6},
which will turn allow us to solve equations \eqref{eq:C4} and \eqref{eq:C5} for the $J_{ij}$.  Finally we will solve equation
\eqref{eq:C7}.  We will not solve equation \eqref{eq:C8}.  In fact, this equation defines an algebraic variety (an intersection of
conics) which parametrises these 3-algebras.

\subsubsection{Solving for the $[-,-]_i$}
\label{sec:solving-brackets}

Condition \eqref{eq:C2} for $i=j$ says that $[-,-]_i$ defines a Lie
algebra structure on $W$, denoted $\fg_i$.  By equation
\eqref{eq:metricity-i}, $\fg_i$ is a metric Lie algebra.  Since the
inner product on $W$ is positive-definite, $\fg_i$ is reductive,
whence $\fg_i = [\fg_i,\fg_i] \oplus \fz_i$, where $\fs_i :=
[\fg_i,\fg_i]$ is the semisimple derived ideal of $\fg_i$ and $\fz_i$
is the centre of $\fg_i$.  The following lemma will prove useful.

\begin{lemma}
  \label{lem:g1-simple}
  Let $\fg_i$, $i=1,\dots,r$, be a family of reductive Lie algebras
  sharing the same underlying vector space $W$ and let $[-,-]_i$
  denote the Lie bracket of $\fg_i$.  Suppose that they satisfy
  equations \eqref{eq:C2} and \eqref{eq:C3} and in addition that one
  of these Lie algebras, $\fg_1$ say, is simple.  Then for all $x,y\in
  W$,
  \begin{equation}
    [x,y]_i = \kappa_i [x,y]_1,
  \end{equation}
  where $\kappa_i \in \RR$.
\end{lemma}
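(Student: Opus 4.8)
The plan is to exploit the interplay between equations \eqref{eq:C2} and \eqref{eq:C3} to show that for each $i$ the linear map $\phi_i := [-,-]_i$, viewed as a map $\Lambda^2 W \to W$, is forced to be a scalar multiple of $\phi_1 = [-,-]_1$ on the simple factor $\fg_1$, and then to bootstrap this to all of $W$. First I would read equation \eqref{eq:C2} with the roles of $i$ and $j$ kept general but with $\fg_1$ in the role of $i$: it says that $\ad^{(1)}_x := [x,-]_1$ acts as a derivation of the bracket $[-,-]_j$ for every $j$ — more precisely, that $[x,[y,z]_j]_1$ decomposes in the Leibniz way. Symmetrically, reading \eqref{eq:C2} with $\fg_1$ in the role of $j$ shows $[x,-]_1$ is an $\fg_1$-module endomorphism behaviour for the other brackets. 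The key observation is that, since $\fg_1$ is simple, $W$ is an irreducible... no — $W$ need not be irreducible as an $\fg_1$-module, so instead I would use that $\ad \fg_1$ acts on $W$ and that the bracket $[-,-]_j$, being $\fg_1$-equivariant (this is what \eqref{eq:C2} with $i=1$ gives, together with \eqref{eq:C3}), is an intertwiner $\Lambda^2 W \to W$ of $\fg_1$-modules.

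The heart of the argument is then representation-theoretic. Equation \eqref{eq:C3} with $i=1$ says $[[x,y]_1,z]_j = [[x,y]_j,z]_1$, i.e. $\ad^{(j)}_z \circ \phi_1 = -\ad^{(1)}_z \circ \phi_j$ as maps $\Lambda^2 W \to W$ up to signs — this ties the image of $\phi_j$ to that of $\phi_1$ through the $\fg_1$-action. I would first show $[W,W]_j \subseteq \fg_1$ as follows: restrict attention to $x,y,z$ all in $\fg_1$ (which is $\ad\fg_1$-invariant, as recalled from \cite[§3.2]{Lor3Lie}), note $[-,-]_j$ restricted to $\fg_1$ is, by \eqref{eq:C2} at $i=j$, a Lie bracket making $\fg_1$ a metric Lie algebra for the fixed positive-definite form, hence reductive, and its derived ideal is an ideal of the simple $\fg_1$; combined with \eqref{eq:C3} one shows the $[-,-]_j$-centre of $\fg_1$ must be trivial or all of $\fg_1$. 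If it is all of $\fg_1$ then $[-,-]_j$ vanishes on $\fg_1$, i.e. $\kappa_j = 0$, and we are done for that $j$; otherwise $[-,-]_j$ makes $\fg_1$ into a simple metric Lie algebra on the same space with the same invariant form. Two invariant metrics on a simple Lie algebra are proportional, and two simple Lie algebra structures on the same vector space with proportional Killing forms and satisfying the compatibility \eqref{eq:C3} must agree up to scale: this is where I'd invoke that a derivation-compatible pair of simple brackets sharing an invariant form are rescalings of one another, extracting $\kappa_j$.

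Once $[x,y]_j = \kappa_j [x,y]_1$ is known for $x,y \in \fg_1$, I would extend it to all of $W$ by using \eqref{eq:C2} and \eqref{eq:C3} to show that $\fg_1^\perp$ (the span of the other simple factors and the abelian part of $\fg_1$) is killed by $[-,-]_j$ against $\fg_1$ — i.e. $[\fg_1, \fg_1^\perp]_j = 0$ — which reduces the claim to the corresponding statement relative to each $\fg_j$-bracket on $\fg_1$; skewsymmetry and metricity then force $[\fg_1^\perp, W]_j$ to lie in $\fg_1^\perp$, but the claim of the lemma is only about the coincidence of $[-,-]_i$ with $[-,-]_1$ up to scale and $[-,-]_1$ itself is zero outside $\fg_1$ (since $\fg_1$ is precisely the simple summand, $[-,-]_1$ annihilates its complement), so I must actually show $[-,-]_j$ also annihilates $\fg_1^\perp$; this follows because on $\fg_1^\perp$ the argument of the previous paragraph, run with the roles of the factors permuted, shows $[-,-]_j$ restricted there has no component landing in $\fg_1$. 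The main obstacle I anticipate is precisely this last bookkeeping step — cleanly ruling out "mixing" terms $[\fg_1, \fg_1^\perp]_j$ and $[\fg_1^\perp,\fg_1^\perp]_j$ that have a nonzero $\fg_1$-component — because \eqref{eq:C2} and \eqref{eq:C3} alone have to be leveraged just right; I expect to handle it by pairing with an element of $\fg_1$ using metricity \eqref{eq:metricity-i} and invoking simplicity of $\fg_1$ to conclude such a component is central in $\fg_1$, hence zero.
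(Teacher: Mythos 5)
There is a genuine gap at the centre of your argument. Your opening move is correct and is exactly the paper's: equation \eqref{eq:C2} makes $\ad_1 x$ a derivation of $\fg_j$, derivations preserve the centre $\fz_j$, so $\fz_j$ is an ideal of the simple $\fg_1$ and is therefore $0$ or all of $W$; the latter case gives $\kappa_j=0$. But from there you (i) silently upgrade ``$\fg_j$ semisimple'' to ``$\fg_j$ simple'', and (ii) dispose of the remaining case by ``invoking'' that two derivation-compatible simple brackets sharing an invariant form are rescalings of one another. That is not a standard citable fact --- it is essentially the lemma itself, and merely noting that invariant forms on a simple Lie algebra are proportional does not deliver it (an isometric twist of a simple bracket shares the form without being a multiple of it; the compatibility conditions must actually be used). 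The paper supplies the missing mechanism: since every derivation of $\fg_1$ is inner and $\ad_1$ is injective, \eqref{eq:C2} defines a linear map $\psi_j$ by $\ad_j x = \ad_1 \psi_j x$; one checks $\psi_j$ is injective, that ideals of $\fg_j$ are ideals of $\fg_1$ (which is what proves $\fg_j$ is simple), and that $\psi_j$ is a Lie algebra isomorphism; then \eqref{eq:C3} forces $\psi_j$ to commute with the adjoint representation of $\fg_1$, and Schur's lemma gives $\psi_j = \kappa_j\,\mathrm{id}$. Some argument of this kind is indispensable, and your proposal does not contain it.

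Separately, your final paragraph rests on a misreading of the hypothesis. In this lemma ``$\fg_1$ is simple'' means that $[-,-]_1$ is a simple Lie bracket on the \emph{whole} of $W$, so $W=\fg_1$: there is no complement $\fg_1^\perp$, no ``other simple factors'', no abelian part, and nothing to extend. The orthogonal decomposition $W = W_0 \oplus \bigoplus_\alpha W_\alpha$ and the bookkeeping about mixing terms belong to the subsequent discussion in section~\ref{sec:solving-brackets}, where this lemma is applied factor by factor; they play no role in the lemma's proof. The worry you flag as the ``main obstacle'' is therefore not an obstacle here at all, while the step you propose to handle by invocation is where the actual work lies.
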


\begin{proof}
  Equation \eqref{eq:C2} says that for all $x \in W$, $\ad_i x
  :=[x,-]_i$ is a derivation of $\fg_j$, for all $i,j$.  In
  particular, $\ad_1 x$ is a derivation of $\fg_i$.  Since derivations
  preserve the centre, $\ad_1 x : \fz_i \to \fz_i$, whence the
  subspace $\fz_i$ is an ideal of $\fg_1$.  Since by hypothesis,
  $\fg_1$ is simple, we must have that either $\fz_i = W$, in which
  case $\fg_i$ is abelian and the lemma holds with $\kappa_i=0$, or
  else $\fz_i=0$, in which case $\fg_i$ is semisimple.  It remains
  therefore to study this case.

  Equation \eqref{eq:C2} again says that $\ad_i x$ is a derivation of
  $\fg_1$.  Since all derivations of $\fg_1$ are inner, this means
  that there is some element $y$ such that $\ad_i x = \ad_1 y$.  This
  element is moreover unique because $\ad_1$ has trivial kernel.  In
  other words, this defines a linear map
  \begin{equation}
    \label{eq:psi-i}
    \psi_i: \fg_i \to \fg_1 \qquad\text{by}\qquad \ad_i x = \ad_1
    \psi_i x \qquad \forall x \in W.
  \end{equation}
  This linear map is a vector space isomorphism since $\ker \psi_i
  \subset \ker \ad_i = 0$, for $\fg_i$ semisimple.  Now suppose that
  $I \lhd \fg_i$ is an ideal, whence $\ad_i(x) I \subset I$ for all $x
  \in \fg_i$.  This means that $\ad_1(y) I \subset I$ for all $y \in
  \fg_1$, whence $I$ is also an ideal of $\fg_1$.  Since $\fg_1$ is
  simple, this means that $I=0$ or else $I=W$; in other words, $\fg_i$
  is simple.

  Now for all $x,y,z \in W$, we have
  \begin{align*}
    [\psi_i[x,y]_i,z]_1 &= [[x,y]_i,z]_i && \text{by equation \eqref{eq:psi-i}}\\
    &= [x,[y,z]_i]_i - [y,[x,z]_i]_i && \text{by the Jacobi identity of $\fg_i$}\\
    &= [\psi_ix,[\psi_iy,z]_1]_1 - [\psi_iy,[\psi_ix,z]_1]_1 && \text{by  equation \eqref{eq:psi-i}}\\
    &= [[\psi_i x, \psi_i y]_1, z]_1 && \text{by the Jacobi identity of $\fg_1$}
  \end{align*}
  and since $\fg_1$ has trivial centre, we conclude that
  \begin{equation*}
    \psi_i[x,y]_i = [\psi_i x,\psi_i y]_1,
  \end{equation*}
  whence $\psi_i: \fg_i \to \fg_1$ is a Lie algebra isomorphism.

  Next, condition \eqref{eq:C3} says that $\ad_1 [x,y]_i = \ad_i
  [x,y]_1$, whence using equation \eqref{eq:psi-i}, we find that
  $\ad_1 [x,y]_i = \ad_1 \psi_i [x,y]_1$, and since $\ad_1$ has
  trivial kernel,  $[x,y]_i = \psi_i [x,y]_1$.  We may rewrite this
  equation as $\ad_i x = \psi_i \ad_1 x$ for all $x$, which again by
  virtue of \eqref{eq:psi-i}, becomes $\ad_1 \psi_i x = \psi_i \ad_1
  x$, whence $\psi_i$ commutes with the adjoint representation of
  $\fg_1$.  Since $\fg_1$ is simple, Schur's Lemma says that
  $\psi_i$ must be a multiple, $\kappa_i$ say, of the identity.  In
  other words,  $\ad_i x = \kappa_i \ad_1 x$, which proves the lemma.
\end{proof}

Let us now consider the general case when none of the $\fg_i$ are
simple.  Let us focus on two reductive Lie algebras, $\fg_i = \fz_i
\oplus \fs_i$, for $i=1,2$ say, sharing the same underlying vector
space $W$.  We will further decompose $\fs_i$ into its simple ideals
\begin{equation}
  \fs_i = \bigoplus_{\alpha =1}^{N_i} \fs_i^\alpha.
\end{equation}
For every $x \in W$, $\ad_1 x$ is a derivation of $\fg_2$, whence it
preserves the centre $\fz_2$ and each simple ideal $\fs_2^\beta$.
This means that $\fz_2$ and $\fs_2^\beta$ are themselves ideals of
$\fg_1$, whence
\begin{equation}
  \fz_2 = E_0 \oplus \bigoplus_{\alpha \in I_0} \fs_1^\alpha
  \qquad\text{and}\qquad
  \fs_2^\beta = E_\beta \oplus \bigoplus_{\alpha \in I_\beta}
  \fs_1^\alpha \qquad \forall \beta\in\left\{1,2,\dots,N_2\right\},
\end{equation}
and where the index sets $I_0,I_1,\dots,I_{N_2}$ define a partition of
$\left\{1,\dots,N_1\right\}$, and
\begin{equation}
  \fz_1 = E_0 \oplus E_1 \oplus \cdots \oplus E_{N_2}
\end{equation}
is an orthogonal decomposition of $\fz_1$.  But now notice that the
restriction of $\fg_1$ to $E_\beta \oplus \bigoplus_{\alpha \in
  I_\beta} \fs_1^\alpha$ is reductive, whence we may apply
Lemma~\ref{lem:g1-simple} to each simple $\fs_2^\beta$ in turn.  This
allows us to conclude that for each $\beta$, either $\fs_2^\beta =
E_\beta$ or else $\fs_2^\beta = \fs_1^\alpha$, for some
$\alpha \in \left\{1,2,\dots, N_1\right\}$ which depends on $\beta$,
and in this latter case, $[x,y]_{\fs_2^\beta} = \kappa
[x,y]_{\fs_1^{\alpha}}$, for some nonzero constant $\kappa$.

This means that, given any one Lie algebra $\fg_i$, any other Lie
algebra $\fg_j$ in the same family is obtained by multiplying its
simple factors by some constants (which may be different in each
factor and may also be zero) and maybe promoting part of its centre to
be semisimple.

The metric Lie algebras $\fg_i$ induce the following orthogonal
decomposition of the underlying vector space $W$.  We let $W_0 =
\bigcap_{i=1}^r \fz_i$ be the intersection of all the centres of the
reductive Lie algebras $\fg_i$.  Then we have the following orthogonal
direct sum $W = W_0 \oplus \bigoplus_{\alpha =1}^N W_\alpha$, where
restricted to each $W_{\alpha>0}$ at least one of the Lie algebras,
$\fg_i$ say, is simple and hence all other Lie algebras $\fg_{j\neq
  i}$ are such that for all $x,y\in W_\alpha$,
\begin{equation}
  [x,y]_j = \kappa_{ij}^\alpha [x,y]_i \qquad \exists
  \kappa_{ij}^\alpha \in \RR.
\end{equation}

To simplify the notation, we define a semisimple Lie algebra structure
$\fg$ on the perpendicular complement of $W_0$, whose Lie bracket
$[-,-]$ is defined in such a way that for all $x,y\in
W_\alpha$, $[x,y] := [x,y]_i$, where $i\in\{1,2,\dots,r\}$ is the
smallest such integer for which the restriction of $\fg_i$ to
$W_\alpha$ is simple. (That such an integer $i$ exists follows from
the definition of $W_0$ and of the $W_\alpha$.)   It then follows that
the restriction to $W_\alpha$ of every other $\fg_{j\neq i}$ is a
(possibly zero) multiple of $\fg$.

We summarise this discussion in the following lemma, which summarises
the solution of equations \eqref{eq:C2} and \eqref{eq:C3}.

\begin{lemma}
  \label{lem:C2C3}
  Let $\fg_i$, $i=1,\dots,r$, be a family of metric Lie algebras
  sharing the same underlying euclidean vector space $W$ and let
  $[-,-]_i$ denote the Lie bracket of $\fg_i$.  Suppose that they
  satisfy equations \eqref{eq:C2} and \eqref{eq:C3}.  Then there is an
  orthogonal decomposition
  \begin{equation}
    \label{eq:W-decomp}
    W = W_0 \oplus \bigoplus_{\alpha =1}^N W_\alpha,
  \end{equation}
  where
  \begin{equation}
    \label{eq:g-brackets}
    [x,y]_i =
    \begin{cases}
      0 & \text{if $x,y\in W_0$;}\\
      \kappa_i^\alpha [x,y] & \text{if $x,y \in W_\alpha$,}
    \end{cases}
  \end{equation}
  for some $\kappa_i^\alpha \in \RR$ and where $[-,-]$ are the Lie
  brackets of a semisimple Lie algebra $\fg$ with underlying vector
  space $\bigoplus_{\alpha =1}^N W_\alpha$.
\end{lemma}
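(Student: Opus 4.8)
The plan is to bootstrap from Lemma~\ref{lem:g1-simple}, which settles the case in which one member of the family is simple, to the general reductive case. The single structural input I would extract from equation~\eqref{eq:C2} is that $\ad_i x := [x,-]_i$ is a derivation of $\fg_j$ for all $i,j$ and all $x\in W$. Two elementary facts about derivations of a reductive Lie algebra then do all the work: a derivation preserves the centre, and on the semisimple part it is inner, hence preserves every simple ideal. Consequently, writing $\fg_i=\fz_i\oplus\fs_i$ with $\fs_i=\bigoplus_{\alpha}\fs_i^\alpha$ the decomposition into simple ideals, the subspaces $\fz_j$ and each $\fs_j^\beta$ are stable under every $\ad_i x$, i.e.\ they are ideals of $\fg_i$ for every $i$.

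Next I would compare two members $\fg_i$ and $\fg_j$. Since each simple ideal $\fs_j^\beta$ is an ideal of $\fg_i$, and ideals of a reductive Lie algebra are orthogonal direct sums of a subspace of the centre with a sum of some of the simple ideals, $\fs_j^\beta$ is such a sum, built from a subspace of $\fz_i$ and some of the $\fs_i^\gamma$. Restricting $\fg_i$ to this subspace gives a reductive Lie algebra on the same underlying space as the simple Lie algebra $\fs_j^\beta$, and the pair still satisfies \eqref{eq:C2} and \eqref{eq:C3}; Lemma~\ref{lem:g1-simple} therefore forces either $\fs_j^\beta\subseteq\fz_i$ or $\fs_j^\beta=\fs_i^\gamma$ for a single $\gamma$, and in the latter case $[-,-]_{\fs_j^\beta}$ is a nonzero multiple of $[-,-]_{\fs_i^\gamma}$. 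In particular, any two of the simple ideals $\fs_i^\alpha$, $\fs_j^\beta$ occurring anywhere in the family are, as subspaces of $W$, either equal or orthogonal.

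I would then assemble the global decomposition. Let $W_0:=\bigcap_{i=1}^r\fz_i$; since $x\perp\fs_i$ for all $i$ is equivalent to $x\in W_0$, the orthogonal complement of $W_0$ is exactly the span of all the simple ideals $\fs_i^\alpha$, and by the equal-or-orthogonal dichotomy this complement is an orthogonal direct sum $\bigoplus_{\alpha=1}^N W_\alpha$ of distinct subspaces, each of which equals $\fs_i^\gamma$ for at least one index $i$. Because $\ad_j$ is a derivation of that $\fg_i$, every bracket $[-,-]_j$ preserves each $W_\alpha$ (and also $W_0\subseteq\fz_j$), so every $[-,-]_j$ is block diagonal with respect to $W=W_0\oplus\bigoplus_\alpha W_\alpha$; on $W_0$ it vanishes, and between distinct blocks it vanishes since the blocks are mutually orthogonal ideals. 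On each $W_\alpha$, applying Lemma~\ref{lem:g1-simple} to the family restricted there, with the member for which $\fg_i|_{W_\alpha}$ is simple as distinguished element, yields $[x,y]_j=\kappa_j^\alpha[x,y]_i$ for all $j$. Declaring $[x,y]:=[x,y]_i$ on $W_\alpha$ for the smallest such $i$ defines a semisimple Lie bracket $[-,-]$ on $\bigoplus_\alpha W_\alpha$, and the relations \eqref{eq:W-decomp}--\eqref{eq:g-brackets} of the lemma follow, with $\kappa_i^\alpha$ the proportionality constants just obtained (and $\kappa_i^\alpha=0$ precisely when $\fg_i|_{W_\alpha}$ is abelian).

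The only real obstacle is the combinatorial bookkeeping in the middle step: one must check that the orthogonal decompositions of $W$ obtained by comparing the various $\fg_i$ pairwise admit a common refinement, and that the \emph{promotion} of a central direction of one $\fg_i$ to a simple ideal of another $\fg_j$ is consistent with defining $W_0$ as the intersection of all the centres. This is handled cleanly by the equal-or-orthogonal dichotomy above, which makes the set of all simple ideals of all the $\fg_i$ behave like a single compatible family of mutually orthogonal subspaces. Equation~\eqref{eq:C3} is needed only inside Lemma~\ref{lem:g1-simple}, and positive-definiteness of the metric on $W$ is what guarantees reductivity of each $\fg_i$ and orthogonality of the ideal sums, so no further nondegeneracy argument intervenes.
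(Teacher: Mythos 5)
Your proposal is correct and follows essentially the same route as the paper: both arguments reduce to Lemma~\ref{lem:g1-simple} by observing that each $\ad_i x$ is a derivation of every $\fg_j$, hence that the centres and simple ideals of any one member are ideals of all the others, then apply that lemma to the restrictions to conclude the equal-or-orthogonal dichotomy, set $W_0=\bigcap_i\fz_i$, and define the reference bracket on each $W_\alpha$ via the smallest index for which the restriction is simple. The only difference is organisational (you state the dichotomy for all simple ideals at once, whereas the paper compares two members at a time), which does not change the substance.
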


\subsubsection{Solving for the $J_{ij}$}
\label{sec:solving-Js}

Next we study the equations \eqref{eq:C1} and \eqref{eq:C6}, which
involve only $J_{ij}$.  Equation \eqref{eq:C1} says that each $J_{ij}$
is a derivation over the $\fg_k$ for all $i,j,k$.  Since derivations
preserve the centre, every $J_{ij}$ preserves the centre of every
$\fg_k$ and hence it preserves their intersection $W_0$.  Since
$J_{ij}$ preserves the inner product, it also preserves the
perpendicular complement of $W_0$ in $W$, which is the underlying
vector space of the semisimple Lie algebra $\fg$ of the previous
lemma.  Equation \eqref{eq:C1} does not constrain the component of
$J_{ij}$ acting on $W_0$ since all the $[-,-]_k$ vanish there, but it
does constrain the components of $J_{ij}$ acting on $\bigoplus_{\alpha
  =1}^N W_\alpha$.  Fix some $\alpha$ and let $x,y\in W_\alpha$.  Then
by virtue of equation \eqref{eq:g-brackets}, equation \eqref{eq:C1}
says that
\begin{equation}
  \kappa_k^\alpha \left( J_{ij} [x,y] - [J_{ij}x,y] - [x,J_{ij}y]
  \right) = 0.
\end{equation}
Since, given any $\alpha$ there will be at least some $k$ for which
$\kappa_k^\alpha \neq 0$, we see that $J_{ij}$ is a derivation of
$\fg$.  Since $\fg$ is semisimple, this derivation is inner, where
there exists a unique $z_{ij} \in \fg$, such that $J_{ij} y =
[z_{ij},y]$ for all $y \in \fg$.  Since the simple ideals of $\fg$ are
submodules under the adjoint representation, $J_{ij}$ preserves each
of the simple ideals and hence it preserves the decomposition
\eqref{eq:W-decomp}.  Let $z_{ij}^\alpha$ denote the component of
$z_{ij}$ along $W_\alpha$.  Equation \eqref{eq:C6} can now be
rewritten for $x,y\in W_\alpha$ as
\begin{equation}
  \kappa_i^\alpha [z_{j\ell}^\alpha, [x,y]] + \kappa_j^\alpha
  [z^\alpha_{\ell i},[x,y]] + \kappa_\ell^\alpha [z^\alpha_{ij},[x,y]]
  = 0.
\end{equation}
Since $\fg$ has trivial centre, this is equivalent to
\begin{equation}
  \label{eq:kwedgez=0}
  \kappa_i^\alpha z_{j\ell}^\alpha + \kappa_j^\alpha z^\alpha_{\ell i}
  + \kappa_\ell^\alpha z^\alpha_{ij} = 0,
\end{equation}
which can be written more suggestively as $\kappa^\alpha \wedge
z^\alpha = 0$, where $\kappa^\alpha \in \RR^r$ and $z^\alpha \in
\Lambda^2\RR^r \otimes W_\alpha$.  This equation has as unique
solution $z^\alpha = \kappa^\alpha \wedge s^\alpha$, for some $s^\alpha
\in \RR^r\otimes W_\alpha$, or in indices
\begin{equation}
  \label{eq:zijalpha}
  z^\alpha_{ij} = \kappa_i^\alpha s_j^\alpha - \kappa_j^\alpha
  s_i^\alpha \qquad \exists s_i^\alpha \in W_\alpha.
\end{equation}
Let $s_i = \sum_\alpha s_i^\alpha \in \fg$ and consider now the
isometry $\varphi: V \to V$ defined by
\begin{equation}
  \begin{aligned}[m]
    \varphi(v_i) &= v_i\\
    \varphi(z) &= z\\
    \varphi(u_i) &= u_i - s_i - \half \sum_j \left<s_i,s_j\right> v_j\\
    \varphi(x) &= x + \sum_i \left<s_i,x\right> v_i,
  \end{aligned}
\end{equation}
for all $z\in W_0$ and all $x\in \bigoplus_{\alpha =1}^N W_\alpha$.
The effect of such a transformation on the 3-brackets \eqref{eq:V3b}
is an uninteresting modification of $K_{ijk}$ and $L_{ijk\ell}$
and the more interesting disappearance of $J_{ij}$ from the 3-brackets
involving elements in $W_\alpha$.  Indeed, for all $x \in W_\alpha$,
we have
\begin{align*}
  [\varphi(u_i),\varphi(u_j),\varphi(x)] &= [u_i-s_i,u_j-s_j,x]\\
  &= [u_i,u_j,x] + [u_j,s_i,x] - [u_i,s_j,x] + [s_i,s_j,x]\\
  &= J_{ij} x + [s_i,x]_j - [s_j,x]_i + \text{central terms}\\
  &= [z_{ij}^\alpha, x] + \kappa_j^\alpha [s_i^\alpha,x] -
  \kappa_i^\alpha [s_j^\alpha,x] + \text{central terms}\\
  &= [z_{ij}^\alpha + \kappa_j^\alpha s_i^\alpha - \kappa_i^\alpha
  s_j^\alpha , x] + \text{central terms}\\
  &= 0 + \text{central terms},
\end{align*}
where we have used equation \eqref{eq:zijalpha}.

This means that without loss of generality we may assume that $J_{ij}
x =0$ for all $x \in W_\alpha$ for any $\alpha$.  Now consider
equation \eqref{eq:C5} for $x \in \bigoplus_{\alpha =1}^N W_\alpha$.
The right-hand side vanishes, whence $[K_{ijk},x]_\ell=0$.  Also if $x
\in W_0$, then $[K_{ijk},x]_\ell=0$ because $x$ is central with
respect to all $\fg_\ell$.  Therefore we see that $K_{ijk}$ is central
with respect to all $\fg_\ell$, and hence $K_{ijk} \in W_0$.

In other words, we have proved the following

\begin{lemma}
  \label{lem:C1C6C5}
  In the notation of Lemma~\ref{lem:C2C3}, the nonzero 3-brackets for
  $V$ may be brought to the form
  \begin{equation}
    \label{eq:V3b2}
    \begin{aligned}[m]
      [u_i,u_j,u_k] &= K_{ijk} + \sum_{\ell=1}^r L_{ijk\ell} v_\ell\\
      [u_i,u_j,x_0] &= J_{ij} x_0 - \sum_{k=1}^r \left<K_{ijk},x_0\right> v_k\\
      [u_i,x_0,y_0] &= - \sum_{j=1}^r \left<x_0,J_{ij} y_0\right> v_j\\
      [u_i,x_\alpha,y_\alpha] &= \kappa_i^\alpha [x,y]\\
      [x_\alpha,y_\alpha,z_\alpha] &= -
      \left<[x_\alpha,y_\alpha],z_\alpha\right> \sum_{i=1}^r
      \kappa_i^\alpha v_i,
    \end{aligned}
  \end{equation}
  for all $x_\alpha,y_\alpha,z_\alpha\in W_\alpha$, $x_0,y_0 \in W_0$
  and for some $L_{ijk\ell} \in \RR$, $K_{ijk} \in W_0$ and
  $J_{ij} \in \fso(W_0)$, all of which are totally skewsymmetric in
  their indices.
\end{lemma}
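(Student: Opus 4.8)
The plan is to extract, from the preliminary form \eqref{eq:V3b} together with the decomposition $W = W_0 \oplus \bigoplus_{\alpha=1}^N W_\alpha$ of Lemma~\ref{lem:C2C3}, the content of the three remaining constraints \eqref{eq:C1}, \eqref{eq:C6} and \eqref{eq:C5} that do not already reduce to \eqref{eq:C8}, and then to remove the $J_{ij}$ from the semisimple part of $W$ by an isometry of the same shifting type as \eqref{eq:isometry}. Equations \eqref{eq:C7} and \eqref{eq:C8} are deliberately left untouched at this stage.

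\emph{Locating the $J_{ij}$.} Equation \eqref{eq:C1} says that each $J_{ij}$ is a derivation of every $\fg_k$. Derivations preserve centres, so $J_{ij}$ preserves $W_0 = \bigcap_k \fz_k$; being skew it then also preserves $W_0^\perp = \bigoplus_\alpha W_\alpha$, the underlying space of the semisimple Lie algebra $\fg$ of Lemma~\ref{lem:C2C3}. On $W_0$ equation \eqref{eq:C1} is vacuous, since all the $[-,-]_k$ vanish there. On each $W_\alpha$, where by \eqref{eq:g-brackets} the bracket of $\fg$ is a nonzero multiple of some $[-,-]_k$ (there is always a $k$ with $\kappa_k^\alpha \neq 0$), equation \eqref{eq:C1} forces $J_{ij}$ to be a derivation of $\fg$; since $\fg$ is semisimple it is inner, $J_{ij} y = [z_{ij},y]$ for a unique $z_{ij} \in \fg$. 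An inner derivation preserves each simple ideal of $\fg$, hence the decomposition \eqref{eq:W-decomp}; write $z^\alpha_{ij}$ for the $W_\alpha$-component of $z_{ij}$. (Equation \eqref{eq:C1} with $x\in W_\alpha$, $y\in W_\beta$, $\alpha\neq\beta$, also confirms directly that $J_{ij}$ maps each $W_\alpha$ to itself.)

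\emph{Removing the $J_{ij}$ off $W_0$.} Feeding $J_{ij} = \ad z_{ij}$ into \eqref{eq:C6} and restricting to $W_\alpha$, triviality of the centre of $\fg$ reduces it to $\kappa_i^\alpha z^\alpha_{j\ell} + \kappa_j^\alpha z^\alpha_{\ell i} + \kappa_\ell^\alpha z^\alpha_{ij} = 0$, i.e.\ $\kappa^\alpha \wedge z^\alpha = 0$ in $\Lambda^2\RR^r \otimes W_\alpha$ with $0 \neq \kappa^\alpha$; its unique solution is $z^\alpha = \kappa^\alpha \wedge s^\alpha$, that is $z^\alpha_{ij} = \kappa_i^\alpha s_j^\alpha - \kappa_j^\alpha s_i^\alpha$ for some $s_i^\alpha \in W_\alpha$. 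Set $s_i = \sum_\alpha s_i^\alpha \in \fg$ and apply the isometry $\varphi$ that fixes $v_i$ and $W_0$ pointwise, sends $u_i \mapsto u_i - s_i - \half \sum_j \left<s_i,s_j\right> v_j$, and sends $x \mapsto x + \sum_i \left<s_i,x\right> v_i$ for $x \in \bigoplus_\alpha W_\alpha$. A calculation parallel to the one carried out for \eqref{eq:isometry} shows that $\varphi$ preserves the general form \eqref{eq:V3b}, alters $K_{ijk}$ and $L_{ijk\ell}$ inessentially, and replaces the operator $J_{ij}$ on $W_\alpha$ by $\ad(z^\alpha_{ij} + \kappa_j^\alpha s_i^\alpha - \kappa_i^\alpha s_j^\alpha) = 0$. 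Hence we may assume $J_{ij}$ vanishes on $\bigoplus_\alpha W_\alpha$, i.e.\ $J_{ij} \in \fso(W_0)$.

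\emph{Forcing $K_{ijk}\in W_0$ and assembling.} With the $J_{ij}$ now supported on $W_0$, the right-hand side of \eqref{eq:C5} annihilates every $x \in \bigoplus_\alpha W_\alpha$, so $[K_{jk\ell},x]_i = 0$ for such $x$; and for $x \in W_0$ this holds already, $W_0$ being central in each $\fg_i$. Thus $K_{jk\ell}$ lies in the centre of every $\fg_i$, i.e.\ $K_{jk\ell} \in W_0$. Substituting $J_{ij} \in \fso(W_0)$, $K_{ijk} \in W_0$ and \eqref{eq:g-brackets} back into \eqref{eq:V3b} gives exactly \eqref{eq:V3b2}. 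The only genuinely lengthy part of this is checking the form-invariance of $\varphi$ and the precise cancellation of $J_{ij}$; the one conceptually delicate point is the passage from ``$J_{ij}$ is a derivation of each $\fg_k$'' to ``$J_{ij}$ is an inner derivation of the single semisimple algebra $\fg$'', which relies on every $W_\alpha$ carrying some $\kappa^\alpha_k \neq 0$ while \eqref{eq:C1} leaves $J_{ij}$ entirely unconstrained on $W_0$.
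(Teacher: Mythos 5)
Your proposal is correct and follows essentially the same route as the paper: equation \eqref{eq:C1} places $J_{ij}$ as an inner derivation $\ad z_{ij}$ of the semisimple part, \eqref{eq:C6} gives $\kappa^\alpha \wedge z^\alpha = 0$ so $z^\alpha_{ij} = \kappa^\alpha_i s^\alpha_j - \kappa^\alpha_j s^\alpha_i$, the shift isometry by $s_i = \sum_\alpha s^\alpha_i$ kills $J_{ij}$ on $\bigoplus_\alpha W_\alpha$, and \eqref{eq:C5} then forces $K_{ijk}$ into $W_0$. The only additions are cosmetic (the cross-term check of \eqref{eq:C1} for $x \in W_\alpha$, $y \in W_\beta$), so there is nothing further to flag.
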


Since their left-hand sides vanish, equations \eqref{eq:C4} and
\eqref{eq:C5} become conditions on $J_{ij} \in \fso(W_0)$:
\begin{align}
  J_{ij}J_{k\ell} - J_{k\ell}J_{ij} &= 0, \label{eq:JJ1}\\
  J_{jk}J_{i\ell} + J_{k\ell}J_{ij} + J_{j\ell}J_{ki} &= 0. \label{eq:JJ2}
\end{align}
The first condition says that the $J_{ij}$ commute, whence since the inner product on $W_0$ is positive-definite, they must belong
to the same Cartan subalgebra $\fh \subset \fso(W_0)$.  Let $H_\pi$, for $\pi=1,\dots,\lfloor\frac{\dim W_0}{2}\rfloor$, denote a
basis for $\fh$, with each $H_\pi$ corresponding to the generator of infinitesimal rotations in mutually orthogonal 2-planes in
$W_0$.  In particular, this means that $H_\pi H_\varrho = 0$ for $\pi \neq \varrho$ and that $H_\pi^2 = - \Pi_\pi$, with $\Pi_\pi$
the orthogonal projector onto the 2-plane labelled by $\pi$.  We write $J_{ij}^\pi \in \RR$ for the component of $J_{ij}$ along
$H_\pi$.  Fixing $\pi$ we may think of $J_{ij}^\pi$ as the components of $J^\pi \in \Lambda^2 \RR^r$.  Using the relations obeyed
by the $H_\pi$, equation \eqref{eq:JJ2} separates into $\lfloor\frac{\dim W_0}{2}\rfloor$ equations, one for each value of $\pi$,
which in terms of $J^\pi$ can be written simply as $J^\pi \wedge J^\pi = 0$.  This is a special case of a Plücker relation and
says that $J^\pi$ is decomposable; that is, $J^\pi = \eta^\pi \wedge \zeta^\pi$ for some $\eta^\pi, \zeta^\pi \in \RR^r$.  In
other words, the solution of equations \eqref{eq:JJ1} and \eqref{eq:JJ2} is
\begin{equation}
  J_{ij} = \sum_\pi \left(\eta_i^\pi \zeta^\pi_j - \eta_j^\pi \zeta^\pi_i\right)
  H_\pi
\end{equation}
living in a Cartan subalgebra $\fh \subset \fso(W_0)$.

\subsubsection{Solving for the $K_{ijk}$}
\label{sec:solving-K}

It remains to solve equations \eqref{eq:C7} and \eqref{eq:C8} for
$K_{ijk}$.  We shall concentrate on the linear equation
\eqref{eq:C7}.  This is a linear equation on $K \in \Lambda^3\RR^r
\otimes W_0$ and says that it is in the kernel of a linear map
\begin{equation}
  \label{eq:cocycle}
  \begin{CD}
    \Lambda^3\RR^r \otimes W_0 @>>> \Lambda^2\RR^r \otimes
    \Lambda^3\RR^r \otimes W_0
  \end{CD}
\end{equation}
defined by
\begin{equation}
\label{eq:cocycle-eqn}
  K_{ijk} \mapsto J_{ij} K_{k\ell m} - J_{\ell m}K_{ijk}  - J_{mk}K_{ij\ell} - J_{k\ell}K_{ijm}.
\end{equation}
The expression in the right-hand side is manifestly skewsymmetric in
$ij$ and $k\ell m$ separately, whence it belongs to $\Lambda^2\RR^r
\otimes \Lambda^3\RR^r \otimes W_0$ as stated above.  For generic $r$
(here $r\geq 5$) we may decompose
\begin{equation}
  \Lambda^2\RR^r \otimes \Lambda^3\RR^r = Y^{\yng(2,2,1)}\RR^r \oplus
  Y^{\yng(2,1,1,1)}\RR^r \oplus \Lambda^5 \RR^r,
\end{equation}
where $Y^{\text{Young tableau}}$ denotes the corresponding Young
symmetriser representation.  Then one can see that the right-hand side
of \eqref{eq:cocycle-eqn} has no component in the first of the above
summands and hence lives in the remaining two summands, which are
isomorphic to $\RR^r \otimes \Lambda^4 \RR^r$.

We now observe that via an isometry of $V$ of the form
\begin{equation}
  \begin{aligned}[m]
    \varphi(v_i) &= v_i\\
    \varphi(x_\alpha) &= x_\alpha\\
    \varphi(u_i) &= u_i + t_i - \half \sum_j \left<t_i,t_j\right> v_j\\
    \varphi(x_0) &= x_0 - \sum_i \left<x_0, t_i\right> v_i,
  \end{aligned}
\end{equation}
for $t_i \in W_0$, the form of the 3-brackets \eqref{eq:V3b2} remains
invariant, but with $K_{ijk}$ and $L_{ijk\ell}$ transforming by
\begin{align}
  K_{ijk} &\mapsto K_{ijk} + J_{ij} t_k + J_{jk} t_i + J_{ki} t_j,\\
\intertext{and}
  \begin{split}
    L_{ijk\ell} &\mapsto L_{ijk\ell} +
    \left<K_{ijk},t_\ell\right> - \left<K_{\ell ij},t_k\right> +
    \left<K_{k\ell i},t_j\right> - \left<K_{jk\ell},t_i\right>\\
    &\qquad + \left<J_{ij} t_k,t_\ell\right> + \left<J_{ki}
      t_j,t_\ell\right> + \left<J_{jk} t_i,t_\ell\right> +
    \left<J_{i\ell} t_j,t_k\right> + \left<J_{j\ell} t_k,t_i\right> +
    \left<J_{k\ell} t_i,t_j\right>,
  \end{split}
\end{align}
respectively.  In particular, this means that there is an ambiguity in
$K_{ijk}$, which can be thought of as shifting it by the image of the
linear map
\begin{equation}
  \label{eq:coboundary}
  \begin{CD}
    \RR^r \otimes W_0 @>>> \Lambda^3\RR^r \otimes W_0
  \end{CD}
\end{equation}
defined by
\begin{equation}
  t_i \mapsto J_{ij} t_k + J_{jk} t_i + J_{ki} t_j.
\end{equation}
The two maps \eqref{eq:cocycle} and \eqref{eq:coboundary} fit together
in a complex
\begin{equation}
  \label{eq:complex}
  \begin{CD}
    \RR^r \otimes W_0 @>>> \Lambda^3\RR^r \otimes W_0 @>>> \RR^r \otimes
    \Lambda^4\RR^r \otimes W_0,
  \end{CD}
\end{equation}
where the composition vanishes \emph{precisely} by virtue of equations
\eqref{eq:JJ1} and \eqref{eq:JJ2}.  We will show that this complex is
acyclic away from the kernel of $J$, which will mean that without loss
of generality we can take $K_{ijk}$ in the kernel of $J$ subject to
the final quadratic equation \eqref{eq:C8}.

Let us decompose $W_0$ into an orthogonal direct sum
\begin{equation}
  W_0 =
  \begin{cases}
     \bigoplus\limits_{\pi=1}^{(\dim W_0)/2} E_\pi, & \text{if $\dim W_0$ is even, and}\\[18pt]
     \RR w \oplus \bigoplus\limits_{\pi=1}^{(\dim W_0-1)/2} E_\pi, & \text{if $\dim W_0$ is odd,}
  \end{cases}
\end{equation}
where $E_\pi$ are mutually orthogonal 2-planes and, in the second case,
$w$ is a vector perpendicular to all of them.  On $E_\pi$ the Cartan
generator $H_\pi$ acts as a complex structure, and hence we may identify
each $E_\pi$ with a complex one-dimensional vector space and $H_\pi$ with
multiplication by $i$.  This decomposition of $W_\pi$ allows us to
decompose $K_{ijk} = K^w_{ijk} + \sum_\pi K^\pi_{ijk}$, where the first
term is there only in the odd-dimensional situation and the
$K^\pi_{ijk}$ are complex numbers.  The complex \eqref{eq:complex}
breaks up into $\lfloor\frac{\dim W_0}{2}\rfloor$ complexes, one for
each value of $\pi$.  If $J^\pi = 0$ then $K_{ijk}^\pi$ is not constrained
there, but if $J^\pi = \eta^\pi \wedge \zeta^\pi \neq 0$ the complex turns
out to have no homology, as we now show.

Without loss of generality we may choose the vectors $\eta^\pi$ and
$\zeta^\pi$ to be the elementary vectors $e_1$ and $e_2$ in $\RR^r$, so
that $J^\pi$ has a $J^\pi_{12}=1$ and all other $J^\pi_{ij}=0$.  Take $i=1$
and $j=2$ in the cocycle condition \eqref{eq:cocycle}, to obtain
\begin{equation}
  K^\pi_{k\ell m} = J^\pi_{\ell m} K^\pi_{12k} + J^\pi_{mk} K^\pi_{12\ell} +
  J^\pi_{k\ell} K^\pi_{12m}.
\end{equation}
It follows that if any two of $k,\ell,m >2$, then $K^\pi_{k\ell m} = 0$.
In particular $K^\pi_{1ij} = K^\pi_{2ij} = 0$ for all $i,j>2$, whence only
$K^\pi_{12k}$ for $k>2$ can be nonzero.  However for $k>2$, $K^\pi_{12k} =
J^\pi_{12} e_k$, with $e_k$ the $k$th elementary vector in $\RR^r$, and
hence $K^\pi_{12k}$ is in the image of the map \eqref{eq:coboundary};
that is, a coboundary.  This shows that we may assume without loss of
generality that $K^\pi_{ijk} = 0$.  In summary, the only components of
$K_{ijk}$ which survive are those in the kernel of all the $J_{ij}$.
It is therefore convenient to split $W_0$ into an orthogonal direct
sum
\begin{equation}
  W_0 = E_0 \oplus \bigoplus_\pi E_\pi,
\end{equation}
where on each 2-plane $E_\pi$, $J^\pi = \eta^\pi \wedge \zeta^\pi \neq 0$,
whereas $J_{ij} x = 0$ for all $x \in E_0$.  Then we can take $K_{ijk}
\in E_0$.

Finally it remains to study the quadratic equation \eqref{eq:C8}.
First of all we mention that this equation is automatically satisfied
for $r\leq 4$.  To see this notice that the equation is skewsymmetric
in $k,\ell, m,n$, whence if $r<4$ it is automatically zero.  When $r=4$,
we have to take $k,\ell,m,n$ all different and hence the equation
becomes
\begin{equation*}
  \left<K_{ij1},K_{234}\right> - \left<K_{ij2},K_{341}\right> +
  \left<K_{ij3},K_{412}\right> - \left<K_{ij4},K_{123}\right> = 0,
\end{equation*}
which is skewsymmetric in $i,j$.  There are six possible choices for
$i,j$ but by symmetry any choice is equal to any other up to
relabeling, so without loss of generality let us take $i=1$ and $j=2$,
whence the first two terms are identically zero and the two remaining
terms satisfy
\begin{equation*}
  \left<K_{123},K_{412}\right> - \left<K_{124},K_{123}\right> = 0,
\end{equation*}
which is identically true.  This means that the cases of index $3$
and $4$ are classifiable using our results.  By contrast, the case of
index $5$ and above seems not to be tame.  An example should
suffice.  So let us take the case of $r=5$ and $\dim E_0 = 1$, so that
the $K_{ijk}$ can be taken to be real numbers.  The solutions to
\eqref{eq:C8} now describe the intersection of five quadrics in
$\RR^{10}$:
\begin{gather*}
K_{125} K_{134} - K_{124} K_{135} + K_{123} K_{145} = 0\\
K_{125} K_{234} - K_{124} K_{235} + K_{123} K_{245} = 0\\
K_{135} K_{234} - K_{134} K_{235} + K_{123} K_{345} = 0\\
K_{145} K_{234} - K_{134} K_{245} + K_{124} K_{345} = 0\\
K_{145} K_{235} - K_{135} K_{245} + K_{125} K_{345} = 0,
\end{gather*}
whence the solutions define an algebraic variety.  One possible branch
is given by setting $K_{1ij}=0$ for all $i,j$, which leaves
undetermined $K_{234}$, $K_{235}$, $K_{245}$ and $K_{345}$.  There are
other branches which are linearly related to this one: for
instance, setting $K_{2ij}=0$, et cetera, but there are also other
branches which are not linearly related to it.

\subsubsection{Summary and conclusions}
\label{sec:summary-conclusion}

Let us summarise the above results in terms of the following structure
theorem.

\begin{theorem}
  \label{thm:main}
  Let $V$ be a finite-dimensional indecomposable metric 3-Lie algebra
  of index $r>0$ with a maximally isotropic centre.  Then $V$ admits a
  vector space decomposition into $r+M+N+1$ orthogonal subspaces
  \begin{equation}
    \label{eq:decomp-V}
    V = \bigoplus_{i=1}^r \left(\RR u_i \oplus \RR v_i\right) \oplus
    \bigoplus_{\alpha =1}^N W_\alpha \oplus \bigoplus_{\pi=1}^M E_\pi
    \oplus E_0,
  \end{equation}
  where $W_\alpha$, $E_\pi$ and $E_0$ are positive-definite subspaces
  with the $E_\pi$ being two-dimensional, and where
  $\left<u_i,u_j\right> = \left<v_i,v_j\right> = 0$ and
  $\left<u_i,v_j\right> = \delta_{ij}$.  The 3-Lie algebra is defined
  in terms of the following data:
  \begin{itemize}
  \item $0 \neq \eta^\pi \wedge \zeta^\pi \in \Lambda^2 \RR^r$ for each $\pi=1,\dots,M$,
  \item $0 \neq \kappa^\alpha \in \RR^r$ for each $\alpha =1, \dots, N$,
  \item a metric simple Lie algebra structure $\fg_\alpha$ on each $W_\alpha$,
  \item $L \in \Lambda^4\RR^r$, and
  \item $K \in \Lambda^3\RR^r \otimes E_0$ subject to the equation
    \begin{equation*}
     \left<K_{ijn},K_{k\ell m}\right> + \left<K_{ij\ell},K_{mnk}\right> - \left<K_{ijm},K_{nk\ell}\right> - \left<K_{ijk},K_{\ell mn}\right> = 0,
    \end{equation*}
  \end{itemize}
  by the following 3-brackets, \footnote{We understand tacitly that if a 3-bracket is not listed here it vanishes.  Also every
    summation is written explicitly, so the summation convention is not in force.  In particular, there is no sum over $\pi$ in
    the third and fourth brackets.}
  \begin{equation}
    \label{eq:V3b-main}
    \begin{aligned}[m]
      [u_i,u_j,u_k] &= K_{ijk} + \sum_{\ell=1}^r L_{ijk\ell} v_\ell\\
      [u_i,u_j,x_0] &= - \sum_{k=1}^r \left<K_{ijk},x_0\right> v_k\\
      [u_i,u_j,x_\pi] &= J^\pi_{ij} H_\pi x_\pi\\
      [u_i,x_\pi,y_\pi] &= - \sum_{j=1}^r \left<x_\pi,J^\pi_{ij} H_\pi y_\pi\right> v_j\\
      [u_i,x_\alpha,y_\alpha] &= \kappa_i^\alpha [x_\alpha,y_\alpha]\\
      [x_\alpha,y_\alpha,z_\alpha] &= - \left<[x_\alpha,y_\alpha],z_\alpha\right> \sum_{i=1}^r \kappa_i^\alpha v_i,
    \end{aligned}
  \end{equation}
  for all $x_0 \in E_0$, $x_\pi,y_\pi \in E_\pi$ and $x_\alpha, y_\alpha, z_\alpha \in W_\alpha$, and where $J^\pi_{ij} =
  \eta^\pi_i \zeta^\pi_j - \eta^\pi_j \zeta^\pi_i$ and $H_\pi$ a complex structure on each 2-plane $E_\pi$.  The resulting 3-Lie
  algebra is indecomposable provided that there is no $x_0 \in E_0$ which is perpendicular to all the $K_{ijk}$, whence in
  particular $\dim E_0 \leq \binom{r}{3}$.
\end{theorem}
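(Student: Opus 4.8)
The plan is to obtain Theorem~\ref{thm:main} by stringing together the reductions carried out in sections~\ref{sec:preliminary-form-3}--\ref{sec:solving-K}. The point to keep in mind is that each reduction is either a rewriting of the fundamental identity \eqref{eq:FI} as an equivalent system of algebraic equations on the structure constants, or an isometry of $V$ which --- being an isomorphism of metric $3$-Lie algebras --- simplifies that system without loss of generality. So the proof is in essence bookkeeping: the substantive work (the reductivity of positive-definite $3$-Lie algebras used in section~\ref{sec:w-abelian}, the Schur-type argument of Lemma~\ref{lem:g1-simple} underlying Lemma~\ref{lem:C2C3}, and the acyclicity of the complex \eqref{eq:complex}) has already been discharged. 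The one place where care is genuinely needed is to check that the successive isometries --- first the one making $[-,-]_i$ proportional to a fixed semisimple bracket, then the one removing $J_{ij}$ from the $W_\alpha$-brackets, then the one pushing $K_{ijk}$ into $\ker J$ --- are mutually compatible, in the sense that a later normalisation acts trivially on the data already normalised and so does not undo it.

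For the forward direction, let $V$ be as in the hypothesis. The preliminary-form discussion of section~\ref{sec:preliminary-form-3} gives the splitting $V=\bigoplus_i(\RR u_i\oplus\RR v_i)\oplus W$ with $W$ euclidean and the brackets in the form \eqref{eq:3-brackets}; the argument of section~\ref{sec:w-abelian} then uses indecomposability to kill the semisimple part of $(W,[-,-,-]_W)$, bringing the brackets to \eqref{eq:V3b} subject to \eqref{eq:FI-V3b}. Lemma~\ref{lem:C2C3} solves \eqref{eq:C2} and \eqref{eq:C3}, producing the orthogonal decomposition $W=W_0\oplus\bigoplus_{\alpha=1}^N W_\alpha$ on which $[x,y]_i=\kappa_i^\alpha[x,y]$ for a semisimple $\fg=\bigoplus_\alpha\fg_\alpha$ with $\fg_\alpha$ simple and $0\neq\kappa^\alpha\in\RR^r$, and which is trivial on $W_0$. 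Lemma~\ref{lem:C1C6C5} then solves \eqref{eq:C1}, \eqref{eq:C6} and \eqref{eq:C5}: each $J_{ij}$ is an inner derivation of $\fg$, equation \eqref{eq:C6} forces $\kappa^\alpha\wedge z^\alpha=0$ hence $z^\alpha=\kappa^\alpha\wedge s^\alpha$, and the isometry built from the $s_i$ brings the brackets to \eqref{eq:V3b2} with $J_{ij}\in\fso(W_0)$ and $K_{ijk}\in W_0$. The leftover content of \eqref{eq:C4}--\eqref{eq:C5} is \eqref{eq:JJ1}--\eqref{eq:JJ2}: the $J_{ij}$ commute, so lie in a common Cartan subalgebra of $\fso(W_0)$, and $J^\pi\wedge J^\pi=0$ makes each $J^\pi$ decomposable, $J^\pi=\eta^\pi\wedge\zeta^\pi$. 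Writing $W_0=E_0\oplus\bigoplus_\pi E_\pi$ with $E_\pi$ the $2$-planes rotated by $H_\pi$ (so $0\neq\eta^\pi\wedge\zeta^\pi$) and $E_0=\ker J$, the acyclicity of \eqref{eq:complex} away from $\ker J$ established in section~\ref{sec:solving-K} lets a further isometry move $K_{ijk}$ into $E_0$. What is left is exactly the quadratic equation \eqref{eq:C8}, which is kept as a hypothesis. This is the content of \eqref{eq:decomp-V}--\eqref{eq:V3b-main}.

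For the converse one checks that any data as listed, satisfying \eqref{eq:C8}, does define a metric $3$-Lie algebra. By the lemma of section~\ref{sec:w-abelian} the fundamental identity for \eqref{eq:V3b-main} is equivalent to \eqref{eq:FI-V3b}, and for the normalised brackets every one of \eqref{eq:C2}--\eqref{eq:C7} holds automatically: \eqref{eq:C2} and \eqref{eq:C3} because $[-,-]_i=\kappa_i^\alpha[-,-]$ on $W_\alpha$ and $0$ on $W_0$; \eqref{eq:C1} and \eqref{eq:C6} because $J_{ij}$ annihilates each $W_\alpha$; and \eqref{eq:C4}, \eqref{eq:C5}, \eqref{eq:C7} because $K_{ijk}\in E_0=\ker J$ is central for every $\fg_k$ while the commuting $J^\pi=\eta^\pi\wedge\zeta^\pi$ satisfy the required quadratic identities. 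Only \eqref{eq:C8} is a real constraint, and it is assumed. That the $v_i$ span a maximally isotropic centre is read off \eqref{eq:V3b-main}, no $v_i$ appearing on the left of any bracket.

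It remains to establish the indecomposability criterion. If some $0\neq x_0\in E_0$ is perpendicular to every $K_{ijk}$, then the $\RR x_0$-component of each $K_{ijk}$ vanishes, so $x_0$ occurs on neither side of any bracket in \eqref{eq:V3b-main}; hence $\RR x_0$ is a nondegenerate abelian ideal and $V=\RR x_0\oplus(\RR x_0)^\perp$ as metric $3$-Lie algebras, contradicting indecomposability. Thus the $K_{ijk}$ must span $E_0$, and as they are the components of $K\in\Lambda^3\RR^r\otimes E_0$ this forces $\dim E_0\leq\binom{r}{3}$. The main obstacle to pushing this further --- i.e.\ to turning the structure theorem into an honest classification --- is genuinely unavoidable: equation \eqref{eq:C8} carves out a nontrivial projective variety (an intersection of conics) for $r\geq5$, so no closed-form list of the $K_{ijk}$ is to be expected beyond index $4$.
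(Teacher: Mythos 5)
Your proposal is correct and follows essentially the same route as the paper: Theorem~\ref{thm:main} is assembled exactly as you describe, by chaining the preliminary form of section~\ref{sec:preliminary-form-3}, the abelianisation of $W$ in section~\ref{sec:w-abelian}, Lemmas~\ref{lem:C2C3} and~\ref{lem:C1C6C5}, the Pl\"ucker analysis giving $J^\pi=\eta^\pi\wedge\zeta^\pi$, and the acyclicity of the complex \eqref{eq:complex} pushing $K$ into $E_0$, with \eqref{eq:C8} retained as the sole residual constraint. Your explicit check of the converse and of the indecomposability criterion is slightly more detailed than what the paper writes out, but the substance is identical.
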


\subsection{Examples for low index}
\label{sec:examples}

Let us now show how to recover the known classifications in index
$\leq 2$ from Theorem~\ref{thm:main}.

Let us consider the case of minimal positive index $r=1$.  In that
case, the indices $i,j,k,l$ in Theorem~\ref{thm:main} can only take
the value $1$ and therefore $J_{ij}$, $K_{ijk}$ and $L_{ijkl}$ are not
present.  Indecomposability of $V$ forces $E_0=0$ and $E_\pi=0$,
whence letting $u=u_1$ and $v=v_1$, we have $V = \RR u \oplus \RR v
\oplus \bigoplus_{\alpha =1}^N W_\alpha$ as a vector space, with
$\left<u,u\right> = \left<v,v\right> = 0$, $\left<u,v\right> = 1$ and
$\bigoplus_{\alpha =1}^N W_\alpha$ euclidean.  The 3-brackets are:
\begin{equation}
  \begin{aligned}[m]
    [u,x_\alpha,y_\alpha] &= [x_\alpha,y_\alpha]\\
    [x_\alpha,y_\alpha,z_\alpha] &= - \left<[x_\alpha,y_\alpha],z_\alpha\right>  v,
  \end{aligned}
\end{equation}
for all $x_\alpha, y_\alpha, z_\alpha \in W_\alpha$ and where we have
redefined $ \kappa^\alpha [x_\alpha,y_\alpha] \to
[x_\alpha,y_\alpha]$, which is a simple Lie algebra on each
$W_\alpha$.  This agrees with the classification of lorentzian $3$-Lie
algebras in \cite{Lor3Lie} which was reviewed in the introduction.

Let us now consider $r=2$.  According to Theorem~\ref{thm:main}, those
with a maximally isotropic centre may now have a nonvanishing
$J_{12}$ while $K_{ijk}$ and $L_{ijkl}$ are still absent.
Indecomposability of $V$ forces $E_0=0$.  Therefore $W_0 = \bigoplus_{\pi=1}^M E_\pi$
and, as a vector space, $V = \RR u_1 \oplus \RR v_1 \oplus \RR u_2
\oplus \RR v_2 \oplus W_0 \oplus \bigoplus_{\alpha =1}^N W_\alpha$ with
$\left<u_i,u_j\right> = \left<v_i,v_j\right> = 0$,
$\left<u_i,v_j\right> = \delta_{ij}$, $\forall i,j = 1,2$ and $W_0
\oplus \bigoplus_{\alpha =1}^N W_\alpha$ is euclidean.  The $3$-brackets are
now:
\begin{equation} \label{eq:2pIndec}
  \begin{aligned}[m]
    [u_1,u_2,x_\pi] &= J x_\pi\\
    [u_1,x_\pi,y_\pi] &= - \left<x_\pi,J y_\pi\right> v_2\\
    [u_2,x_\pi,y_\pi] &=  \left<x_\pi,J y_\pi\right> v_1\\
    [u_1,x_\alpha,y_\alpha] &= \kappa_1^\alpha [x_\alpha,y_\alpha]\\
    [u_2,x_\alpha,y_\alpha] &= \kappa_2^\alpha [x_\alpha,y_\alpha]\\
    [x_\alpha,y_\alpha,z_\alpha] &= - \left<[x_\alpha,y_\alpha],z_\alpha\right>  \kappa_1^\alpha v_1 %
    - \left<[x_\alpha,y_\alpha],z_\alpha\right> \kappa_2^\alpha v_2,
  \end{aligned}
\end{equation}
for all $x_\pi,y_\pi \in E_\pi$ and $x_\alpha, y_\alpha, z_\alpha \in
W_\alpha$.  This agrees with the classification in \cite{2p3Lie} of
finite-dimensional indecomposable $3$-Lie algebras of index 2 whose
centre contains a maximally isotropic plane.  In that paper such
algebras were denoted $V_{\text{IIIb}}(E,J,\fl,\fh,\fg,\psi)$ with
underlying vector space $\RR(u,v) \oplus \RR(\be_+,\be_-) \oplus E
\oplus\fl \oplus\fh\oplus \fg$ with $\left<u,u\right> =
\left<v,v\right> = \left<\be_\pm,\be_\pm\right> = 0$,
$\left<u,v\right>=1=\left<\be_+,\be_-\right>$ and all $\oplus$
orthogonal.  The nonzero Lie 3-brackets are given by
\begin{equation}
  \label{eq:type-IIIb}
  \begin{aligned}[m]
    [u,\be_-,x] &= J x\\
    [u,x,y] &= \left<J x,y\right>\be_+\\
    [\be_-,x,y] &= - \left<Jx,y\right> v\\
    [\be_-,h_1,h_2] &= [h_1,h_2]_{\fh}\\
    [h_1,h_2,h_3] &= -\left<[h_1,h_2]_{\fh},h_3\right> \be_+
  \end{aligned}\qquad
  \begin{aligned}[m]
    [u,g_1,g_2] &= [\psi g_1,g_2]_{\fg}\\
    [\be_-,g_1,g_2] &= [g_1,g_2]_{\fg}\\
    [g_1,g_2,g_3] &= - \left<[g_1,g_2]_{\fg},g_3\right> \be_+ - \left<[\psi g_1,g_2]_{\fg}, g_3\right> v\\
    [u,\ell_1,\ell_2] &= [\ell_1,\ell_2]_{\fl}\\
    [\ell_1,\ell_2,\ell_3] &= -\left<[\ell_1,\ell_2]_{\fl},\ell_3\right> v,
  \end{aligned}
\end{equation}
where $x,y\in E$, $h,h_i\in\fh$, $g_i\in\fg$ and $\ell_i\in\fl$.

To see that this family of $3$-algebras is of the type \eqref{eq:2pIndec} it is enough to identify
\begin{equation}
  u_1 \leftrightarrow u  \qquad v_1 \leftrightarrow v \qquad u_2 \leftrightarrow e_- \qquad v_2 \leftrightarrow e_+
\end{equation}
as well as
\begin{equation}
  W_0 \leftrightarrow E \qquad\text{and}\qquad \bigoplus_{\alpha =1}^N  W_\alpha  \leftrightarrow  \fl \oplus\fh\oplus \fg,
\end{equation}
where the last identification is not only as vector spaces but also as Lie algebras, and set
\begin{equation} \label{eq:2pkappas}
  \begin{aligned}[m]
  \kappa_1|_{\fh} &= 0\\
  \kappa_1|_{\fl} &= 1 \\
  \kappa_1|_{\fg_{\alpha}} &= \psi_{\alpha}
  \end{aligned}\qquad\qquad
  \begin{aligned}[m]
    \kappa_2|_{\fh} &= 1\\
  \kappa_2|_{\fl} &= 0\\
  \kappa_2|_{\fg_{\alpha}} &= 1,
  \end{aligned}
\end{equation}
to obtain the map between the two families.  As shown in \cite{2p3Lie}
there are 9 different types of such 3-Lie algebras, 
depending on which of the four ingredients $(E,J)$, $\fl$, $\fh$ or
$(\fg,\psi)$ are present. 

The next case is that of index $r=3$, where there are up to 3
nonvanishing $J_{ij}$ and one $K_{123} := K$, while $L_{ijkl}$ is
still not present.  Indecomposability of $V$ forces $\dim E_0 \leq 1$.
As a vector space, $V$ splits up as
\begin{equation}
  V = \bigoplus_{i=1}^3 \left(\RR u_i \oplus \RR v_i\right) \oplus
  \bigoplus_{\alpha =1}^N W_\alpha \oplus \bigoplus_{\pi=1}^M E_\pi
  \oplus E_0,
\end{equation}
where all $\oplus$ are orthogonal except the second one, $W_\alpha$,
$E_0$ and $E_\pi$ are positive-definite subspaces with $\dim E_0 \leq
1$, $E_\pi$ being two-dimensional, and where $\left<u_i,u_j\right> =
\left<v_i,v_j\right> = 0$ and $\left<u_i,v_j\right> = \delta_{ij}$.
The 3-brackets are given by
\begin{equation}\label{eq:3pIndec}
  \begin{aligned}[m]
    [u_1,u_2,u_3] &= K \\
    [u_i,u_j,x_0] &= - \sum_{k=1}^r \left<K_{ijk},x_0\right> v_k\\
    [u_i,u_j,x_\pi] &= J^\pi_{ij} H_\pi x_\pi\\
    [u_i,x_\pi,y_\pi] &= - \sum_{j=1}^r \left<x_\pi,J^\pi_{ij} H_\pi y_\pi\right> v_j\\
    [u_i,x_\alpha,y_\alpha] &= \kappa_i^\alpha [x_\alpha,y_\alpha]\\
    [x_\alpha,y_\alpha,z_\alpha] &= - \left<[x_\alpha,y_\alpha],z_\alpha\right> \sum_{i=1}^r \kappa_i^\alpha v_i,
  \end{aligned}
\end{equation}
for all $x_0 \in E_0$, $x_\pi,y_\pi \in E_\pi$ and $x_\alpha, y_\alpha, z_\alpha \in W_\alpha$, and where $J^\pi_{ij} = \eta^\pi_i
\zeta^\pi_j - \eta^\pi_j \zeta^\pi_i$ and $H_\pi$ a complex structure on each 2-plane $E_\pi$.


Finally, let us remark that the family of admissible 3-Lie algebras found in \cite{Ho:2009nk} are included in
Theorem~\ref{thm:main}.  In that paper, a family of solutions to equations \eqref{eq:FI-V-pre} was found by setting each of the
Lie algebra structures $[-,-]_i$ to be nonzero in orthogonal subspaces of $W$.  This corresponds, in the language of this paper,
to the particular case of allowing precisely one $\kappa_i^\alpha$ to be nonvanishing in each $W_{\alpha}$.

Notice that, as shown in \eqref{eq:2pkappas}, already in \cite{2p3Lie} there are examples of admissible 3-Lie algebras of index
$2$ which are not of this form as both $\kappa_1$ and $\kappa_2$ might be nonvanishing in the $\fg_{\alpha}$ factors.

To solve the rest of the equations, two ansätze are proposed in \cite{Ho:2009nk}:
\begin{itemize}
\item the trivial solution with nonvanishing $J$,
  i.e. $\kappa_i^\alpha = 0$, $K_{ijk}=0$ for all $i,j,k = 1,...,r$ and
  for all $\alpha$; and

\item precisely one $\kappa_i^\alpha = 1$ for each $\alpha$ (and
  include those $W_{\alpha}$'s where all $\kappa$'s are zero in $W_0$)
  and one $J_{ij} := J \neq 0$ assumed to be an outer
  derivation of the reference Lie algebra defined on $W$.
\end{itemize}

As pointed out in that paper, $L_{ijkl}$ is not constrained by the fundamental identity, so it can in principle take any value,
whereas the ansatz provided for $K_{ijk}$ is given in terms of solutions of an equation equivalent to \eqref{eq:C8}.  In the
lagrangians considered, both $L_{ijkl}$ and $K_{ijk}$ are set to zero.

One thing to notice is that in all these theories there is certain redundancy concerning the index of the 3-Lie algebra.  If the
indices in the nonvanishing structures $\kappa_i^\alpha$, $J_{ij}$, $K_{ijk}$ and $L_{ijkl}$ involve only numbers from 1 to
$r_0$, then any 3-Lie algebra with such nonvanishing structures and index $r \geq r_0$ gives rise to the equivalent theories.

In this light, in the first ansatz considered, one can always define the non vanishing $J$ to be $J_{12}$ and then the
corresponding theory will be equivalent to one associated to the index-2 3-Lie algebras considered in \cite{2p3Lie}.

In the second case, the fact that $J$ is an outer derivation implies that it must live on the abelian part of $W$ as a Lie
algebra, since the semisimple part does not possess outer derivations.  This coincides with what was shown above, i.e., that
$J|_{W_{\alpha}} = 0$ for each $\alpha$.  Notice that each Lie algebra $[-,-]_i$ identically vanishes in $W_0$, therefore the
structure constants of the 3-Lie algebra do not mix $J$ and $[-,-]_i$.  The theories in \cite{Ho:2009nk} corresponding to this
ansatz also have $K_{ijk}=0$, whence again they are equivalent to the theory corresponding to the index-2 3-Lie algebra which was
denoted $V(E,J,\fh)$ in \cite{2p3Lie}.

\section{Bagger--Lambert lagrangians}
\label{sec:lagrangians}

In this section we will consider the physical properties of the Bagger--Lambert theory based on the most general kind of
admissible metric 3-Lie algebra, as described in Theorem~\ref{thm:main}.

In particular we will investigate the structure of the expansion of the corresponding Bagger--Lambert lagrangians around a vacuum
wherein the scalars in half of the null directions of the 3-Lie algebra take the constant values implied by the equations of
motion for the scalars in the remaining null directions, spanning the maximally isotropic centre.  This technique was also used in
\cite{Ho:2009nk} and is somewhat reminiscent of the novel Higgs mechanism that was first introduced by Mukhi and Papageorgakis
\cite{MukhiBL} in the context of the Bagger--Lambert theory based on the unique simple euclidean 3-Lie algebra $S_4$.  Recall that
precisely this strategy has already been employed in lorentzian signature in \cite{HIM-M2toD2rev}, for the class of
Bagger--Lambert theories found in \cite{GMRBL,BRGTV,HIM-M2toD2rev} based on the unique admissible lorentzian metric 3-Lie algebra
$W ( \fg )$, where it was first appreciated that this theory is perturbatively equivalent to $N=8$ super Yang--Mills theory on
$\RR^{1,2}$ with the euclidean semisimple gauge algebra $\fg$.  That is, there are no higher order corrections to the super
Yang--Mills lagrangian here, in contrast with the infinite set of corrections (suppressed by inverse powers of the gauge coupling)
found for the super Yang--Mills theory with $\fsu(2)$ gauge algebra arising from higgsing the Bagger--Lambert theory based on
$S_4$ in \cite{MukhiBL}.  This perturbative equivalence between the Bagger--Lambert theory based on $W( \fg )$ and maximally
supersymmetric Yang--Mills theory with euclidean gauge algebra $\fg$ has since been shown more rigorously in
\cite{BLSNoGhost,GomisSCFT,D2toD2}.

We will show that there exists a similar relation with $N=8$ super
Yang--Mills theory after expanding around the aforementioned
maximally supersymmetric vacuum the Bagger--Lambert theories based
on the more general physically admissible metric 3-Lie algebras we
have considered.  However, the gauge symmetry in the super
Yang--Mills theory is generally based on a particular indefinite
signature metric Lie algebra here that will be identified in terms
of the data appearing in Theorem~\ref{thm:main}.  The physical
properties of the these Bagger--Lambert theories will be shown to
describe particular combinations of decoupled super Yang-Mills
multiplets with euclidean gauge algebras and free maximally
supersymmetric massive vector multiplets.  We will identify
precisely how the physical moduli relate to the algebraic data in
Theorem~\ref{thm:main}.  We will also note how the theories
resulting from those finite-dimensional indefinite signature 3-Lie
algebras considered in \cite{Ho:2009nk} are recovered.

\subsection{Review of two gauge theories in indefinite signature}
\label{sec:reviewindef}

Before utilising the structural results of the previous section, let
us briefly review some general properties of the maximal $N=8$
supersymmetric Bagger--Lambert and Yang--Mills theories in
three-dimensional Minkowski space that will be of interest to us, when
the fields are valued in a vector space $V$ equipped with a metric of
indefinite signature.  We shall denote this inner product by
$\left<-,-\right>$ and take it to have general indefinite signature
$(r,r+n)$.  We can then define a null basis $e_A = (u_i, v_i, e_a)$ for
$V$, with $i=1,...,r$, $a=1,...,n$, such that $\left<u_i,v_j\right> =
\delta_{ij}$, $\left<u_i,u_j\right> = 0 = \left<v_i,v_j\right>$ and
$\left<e_a,e_b\right> = \delta_{ab}$.

For the sake of clarity in the forthcoming analysis, we will ignore
the fermions in these theories.  Needless to say that they both have a
canonical maximally supersymmetric completion and none of the
manipulations we will perform break any of the supersymmetries of the
theories.

\subsubsection{Bagger--Lambert theory}
\label{sec:reviewBL}

Let us begin by reviewing some details of the bosonic field content of
the Bagger--Lambert theory based on the 3-bracket $[-,-,-]$ defining a
metric 3-Lie algebra structure on $V$.  The components of the canonical
4-form for the metric 3-Lie algebra are $F_{ABCD} := \left< [ e_A ,
  e_B , e_C ] , e_D \right>$ (indices will be lowered and raised using
the metric $\left<e_A,e_B\right>$ and its inverse).  The bosonic
fields in the Bagger--Lambert theory have components $X_I^A$ and
$(\tilde{A}_\mu)^A{}_B = F^A{}_{BCD} A_\mu^{CD}$, corresponding
respectively to the scalars ($I=1,...,8$ in the vector of the
$\fso(8)$ R-symmetry) and the gauge field ($\mu =0,1,2$ on $\RR^{1,2}$
Minkowski space).  Although the supersymmetry transformations and
equations of motion can be expressed in terms of
$(\tilde{A}_\mu)^A_B$, the lagrangian requires it to be expressed as
above in terms of $A_\mu^{AB}$.

The bosonic part of the Bagger--Lambert lagrangian is given by
\begin{equation}
\label{eq:BLLag}
  \eL = -\half \left< D_\mu X_I , D^\mu X_I \right> + \eV (X) + \eL_{\text{CS}}~,
\end{equation}
where the scalar potential is
\begin{equation}
\label{eq:BLV}
\eV (X) = -\tfrac1{12} \left< [X_I,X_J,X_K],[X_I,X_J,X_K]\right>~,
\end{equation}
the Chern--Simons term is
\begin{equation}
\label{eq:BLCS}
\eL_{\text{CS}} = \half \left( A^{AB} \wedge d \tilde{A}_{AB} +
  \tfrac23 A^{AB} \wedge \tilde{A}_{AC} \wedge \tilde{A}^C{}_B
\right)~,
\end{equation}
and $D_\mu \phi^A = \partial_\mu \phi^A + (\tilde{A}_\mu)^A{}_B
\phi^B$ defines the action on any field $\phi$ valued in $V$ of the
derivative $D$ that is gauge-covariant with respect to
$\tilde{A}^A{}_B$.  The infinitesimal gauge transformations take the
form $\delta \phi^A = - {\tilde \Lambda}^A{}_B \phi^B$ and $\delta
(\tilde{A}_\mu)^A{}_B = \partial_\mu {\tilde \Lambda}^A{}_B +
(\tilde{A}_\mu)^A{}_C {\tilde \Lambda}^C{}_B - {\tilde \Lambda}^A{}_C
(\tilde{A}_\mu)^C{}_B$, where ${\tilde \Lambda}^A{}_B = F^A{}_{BCD}
\Lambda^{CD}$ in terms of an arbitrary skewsymmetric parameter
$\Lambda^{AB} = - \Lambda^{BA}$.

If we now assume that the indefinite signature metric 3-Lie algebra
above admits a maximally isotropic centre which we can take to be
spanned by the basis elements $v_i$ then the 4-form components $F_{v_i
  ABC}$ must all vanish identically.  There are two important physical
consequences of this assumption.  The first is that the covariant
derivative $D_\mu X_I^{u_i} = \partial_\mu X_I^{u_i}$.  The second is
that the tensors $F_{ABCD}$ and $F_{ABC}{}^G F_{DEFG} = F_{ABC}{}^g
F_{DEFg}$ which govern all the interactions in the Bagger--Lambert
lagrangian contain no legs in the $v_i$ directions.  Therefore the
components $A_\mu^{v_i A}$ of the gauge field do not appear at all in
the lagrangian while $X_I^{v_i}$ appear only in the free kinetic term
$- D_\mu X_I^{u_i} \partial^\mu X_I^{v_i} = - \partial_\mu
X_I^{u_i} \partial^\mu X_I^{v_i}$.  Thus $X_I^{v_i}$ can be integrated
out imposing that each $X_I^{u_i}$ be a harmonic function on
$\RR^{1,2}$ which must be a constant if the solution is to be
nonsingular. (We will assume this to be the case henceforth but
singular monopole-type solutions may also be worthy of investigation,
as in \cite{Verlinde:2008di}.)  It is perhaps just worth noting that,
in addition to setting $X_I^{u_i}$ constant, one must also set the
fermions in all the $u_i$ directions to zero which is necessary and
sufficient for the preservation of maximal supersymmetry here.

The upshot is that we now have $-\half \left< D_\mu X_I , D^\mu X_I \right> = -\half D_\mu X_I^a D^\mu X_I^a$ (with contraction
over only the euclidean directions of $V$) and each $X_I^{u_i}$ is taken to be constant in \eqref{eq:BLLag}.  Since both
$X_I^{v_i}$ and $A_\mu^{v_i A}$ are now absent, it will be more economical to define $X_I^i := X_I^{u_i}$ and $A_\mu^{ia} :=
A_\mu^{u_i a}$ henceforth.


\subsubsection{Super Yang--Mills theory}
\label{sec:reviewSYM}

Let us now perform an analogous review for $N=8$ super Yang--Mills theory, with gauge symmetry based on the Lie bracket $[-,-]$ defining a metric Lie algebra structure $\fg$ on $V$.  The components of the canonical 3-form on $\fg$ are $f_{ABC} := \left< [ e_A , e_B ] , e_C \right>$.  The bosonic fields in the theory consist of a gauge field $A_\mu^A$ and seven scalar fields $X_I^A$ (where now $I=1,...,7$ in the vector of the $\fso(7)$ R-symmetry) with all fields taking values in $V$.  The field strength for the gauge field takes the canonical form $F_{\mu\nu} = [ D_\mu , D_\nu ] = \partial_\mu A_\nu - \partial_\nu A_\mu + [ A_\mu , A_\nu ]$ in terms of the gauge-covariant derivative $D_\mu = \partial_\mu + [ A_\mu , -]$.  This theory is not scale-invariant and has a dimensionful coupling constant $\kappa$.

The bosonic part of the super Yang-Mills lagrangian is given by
\begin{equation}
\label{eq:SYM}
{\eL}^{SYM} ( A^A , X_I^A , \kappa | \fg ) = - \half \left< D_\mu X_I , D^\mu X_I \right> -\tfrac{\kappa^2}{4} \left< [ X_I , X_J ], [ X_I , X_J ] \right> - \tfrac{1}{4\kappa^2} \left< F_{\mu\nu} , F^{\mu\nu} \right>~.
\end{equation}
Noting explicitly the dependence on the data on the left hand side will be useful when we come to consider super Yang-Mills theories with a much more elaborate gauge structure.

Assuming now that $\fg$ admits a maximally isotropic centre, again spanned by the basis elements $v_i$, then the 3-form components
$f_{v_i AB}$ must all vanish identically.  This property implies $D X_I^{u_i} = d X_I^{u_i}$, $F^{u_i} = d A^{u_i}$ and that the
tensors $f_{ABC}$ and $f_{AB}{}^E f_{CDE} = f_{AB}{}^e f_{CDe}$ which govern all the interactions contain no legs in the $v_i$
directions.  Therefore $X_I^{v_i}$ and $A^{v_i}$ only appear linearly in their respective free kinetic terms, allowing them to be
integrated out imposing that $X_I^{u_i}$ is constant and $A^{u_i}$ is exact.  Setting the fermions in all the $u_i$ directions to
zero again ensures the preservation of maximal supersymmetry.

The resulting structure is that all the inner products using $\left< e_A , e_B \right>$ in \eqref{eq:SYM} are to be replaced with
$\left< e_a , e_b \right>$ while all $X_I^{u_i}$ are to be taken constant and $A^{u_i} = d \phi^{u_i}$, for some functions
$\phi^{u_i}$.  With both $X_I^{v_i}$ and $A^{v_i}$ now absent, it will be convenient to define $X_I^i := X_I^{u_i}$ and $\phi^{i}
:= \phi^{u_i}$ henceforth.

Let us close this review by looking in a bit more detail at the physical properties of a particular example of a super Yang--Mills
theory in indefinite signature with maximally isotropic centre, whose relevance will become clear in the forthcoming sections.
Four-dimensional Yang--Mills theories based on such gauge groups were studied in \cite{Tseytlin:1995yw}.  The gauge structure of
interest is based on the lorentzian metric Lie algebra defined by the double extension $\fd (E,\RR )$ of an even-dimensional
vector space $E$ with euclidean inner product.  Writing $V = \RR u \oplus \RR v \oplus E$ as a lorentzian vector space, the
nonvanishing Lie brackets of $\fd (E,\RR )$ are given by
\begin{equation}
  \label{eq:deE}
  [ u , x ] = J x~, \quad\quad [ x , y ] = - \left< x, Jy \right> v~,
\end{equation}
for all $x,y \in E$ where the skewsymmetric endomorphism $J \in \fso(E)$ is part of the data defining the double extension.  The
canonical 3-form for $\fd (E,\RR )$ therefore has only the components $f_{uab} = J_{ab}$ with respect to the euclidean basis $e_a$
on $E$.  It will be convenient to take $J$ to be nondegenerate and so the eigenvalues of $J^2$ will be negative-definite.

We shall define the positive number $\mu^2 := X_I^u X_I^u$ as the $SO(7)$-norm-squared of the constant 7-vector $X_I^u$ and the
projection operator $P_{IJ}^u := \delta_{IJ} - \mu^{-2} \, X_I^u X_J^u$ onto the hyperplane $\RR^6 \subset \RR^7$ orthogonal to
$X_I^u$.  It will also be convenient to define $x^a := X_I^u X_I^a$ as the projection of the seventh super Yang--Mills scalar
field along $X_I^u$ and ${\mathcal{D}} \Phi := d \Phi - d \phi^u \wedge J\Phi$ where $\Phi$ can be any $p$-form on $\RR^{1,2}$
taking values in $E$.  In terms of this data, the super Yang--Mills lagrangian ${\eL}^{SYM} ( ( d \phi^u , A^a ) , ( X_I^u ,
X_I^a ) , \kappa | \fd (E,\RR ) )$ can be more succinctly expressed as
\begin{multline}
  \label{eq:SYMde}
  - \half P_{IJ}^u {\mathcal{D}}_\mu X_I^a {\mathcal{D}}^\mu X_J^a +\tfrac{\kappa^2 \mu^2}{2} ( J^2 )_{ab} P_{IJ}^u X_I^a X_J^b
  - \tfrac{1}{4 \kappa^2} ( 2\,{\mathcal{D}}_{[ \mu} A_{\nu ]}^a ) ( 2\,{\mathcal{D}}^{[ \mu} A^{\nu ]\, a} )\\
  - \tfrac{1}{2 \mu^2} \left( {\mathcal{D}}_\mu x^a + \mu^2 J^{ab} A_\mu^b \right) \left( {\mathcal{D}}^\mu x^a + \mu^2 J^{ac}
    A^{\mu \, c} \right)~.
\end{multline}
From the first line we see that the six scalar fields $P_{IJ}^u X_J^a$ are massive with mass-squared given by the eigenvalues of the matrix $- \kappa^2 \mu^2 ( J^2 )_{ab}$.  All the fields couple to $d \phi^u$ through the covariant derivative ${\mathcal{D}}$, but the second line shows that only the seventh scalar $x^a$ couples to the gauge field $A^a$.  However, the gauge symmetry of \eqref{eq:SYMde} under the transformations $\delta A^a = {\mathcal{D}} \lambda^a$ and $\delta x^a = -\mu^2 J^{ab} \lambda^b$, for any parameter $\lambda^a \in E$, shows that $x^a$ is in fact pure gauge and can be removed in \eqref{eq:SYMde} by fixing $\lambda^a = \mu^{-2} ( J^{-1} )^{ab} x^b$.  The remaining gauge symmetry of \eqref{eq:SYMde} is generated by the transformations $\delta \phi^u = \alpha$ and $\delta \Phi = \alpha \, J \Phi$ for all fields $\Phi \in E$, where $\alpha$ is an arbitrary scalar parameter.  This is obvious since ${\mathcal{D}} = {\mbox{exp}}( \phi^u J ) d {\mbox{exp}}( - \phi^u J )$ and therefore, one can take ${\mathcal{D}} = d$ in \eqref{eq:SYMde} by fixing $\alpha = - \phi^u$.

Thus, in the gauge defined above, the lagrangian ${\eL}^{SYM} ( ( d \phi^u , A^a ) , ( X_I^u , X_I^a ) , \kappa | \fd (E,\RR ) )$
becomes simply
\begin{equation}
  \label{eq:SYMdegf}
  - \half P_{IJ}^u \partial_\mu X_I^a \partial^\mu X_J^a +\tfrac{\kappa^2 \mu^2}{2} ( J^2 )_{ab} P_{IJ}^u X_I^a X_J^b  -
  \tfrac{1}{4 \kappa^2} ( 2\, \partial_{[ \mu} A_{\nu ]}^a ) ( 2\, \partial^{[ \mu} A^{\nu ]\, a} ) + \tfrac{\mu^2}{2} ( J^2
  )_{ab} A_\mu^a A^{\mu\, b}~,
\end{equation}
describing ${\mbox{dim}}\, E$ decoupled free abelian $N=8$
supersymmetric massive vector multiplets, each of which contains
bosonic fields given by the respective gauge field $\tfrac{1}{\kappa}
\, A_\mu^a$ plus six scalars $P_{IJ}^u X_I^a$, all with the same
mass-squared equal to the respective eigenvalue of $- \kappa^2 \mu^2 (
J^2 )_{ab}$.

It is worth pointing out that one can also obtain precisely the theory above from a particular truncation of an $N=8$ super
Yang--Mills theory with euclidean semisimple Lie algebra $\fg$.  If one introduces a projection operator $P_{IJ}$ onto a
hyperplane $\RR^6 \subset \RR^7$ then one can rewrite the seven scalar fields in this euclidean theory in terms of the six
projected fields $P_{IJ} X_J^a$ living on the hyperplane plus the single scalar $y^a$ in the complementary direction.  Unlike in
the lorentzian theory above however, this seventh scalar is not pure gauge.  Indeed, if we expand the super Yang--Mills lagrangian
\eqref{eq:SYM} for this euclidean theory around a vacuum where $y^a$ is constant then this constant appears as a physical modulus
of the effective field theory, namely it gives rise to mass terms for the gauge field $A^a$ and the six projected scalars $P_{IJ}
X_J^a$.  If one then truncates the effective field theory to the Coulomb branch, such that the dynamical fields $A$ and $P_{IJ}
X_J$ take values in a Cartan subalgebra $\ft < \fg$ (while the constant vacuum expectation value $y \in \fg$), then the lagrangian
takes precisely the form \eqref{eq:SYMdegf} after making the following identifications.  First one must take $E = \ft$ whereby the
gauge field $A^a$ and coupling $\kappa$ are the the same for both theories.  Second one must identify the six-dimensional
hyperplanes occupied by the scalars $X_I^a$ in both theories such that $P_{IJ}^u$ in \eqref{eq:SYMdegf} is identified with
$P_{IJ}$ here.  Finally, the mass matrix for the euclidean theory is $- \kappa^2 [( \ad_y )^2 ]_{ab}$ which must be
identified with $- \kappa^2 \mu^2 ( J^2 )_{ab}$ in \eqref{eq:SYMdegf}.  This last identification requires some words of
explanation.  We have defined $\ad_y \Phi := [ y , \Phi ]$ for all $\Phi \in \fg$, where $[-,-]$ denotes the Lie bracket
on $\fg$.  Since we have truncated the dynamical fields to the Cartan subalgebra $\ft$, only the corresponding legs of
$(\ad_y)^2$ contribute to the mass matrix.  However, clearly $y$ must not also be contained in $\ft$ or else the resulting
mass matrix would vanish identically.  Indeed, without loss of generality, one can take $y$ to live in the orthogonal complement
$\ft^\perp \subset \fg$ since it is only these components which contribute to the mass matrix.  Thus, although $( \ad_y
)^2$ can be nonvanishing on $\ft$, $\ad_y$ cannot.  Thus we cannot go further and equate $\ad_y$ with $\mu J$, even
though their squares agree on $\ft$.  To summarise all this more succinctly, after the aforementioned gauge-fixing of the lorentzian
theory and truncation of the euclidean theory, we have shown that
\begin{equation}
\label{eq:SYMdehiggs}
{\eL}^{SYM} \left( \left( d \phi^u , A |_E \right) , \left( X_I^u , P_{IJ}^u X_J |_E , x|_E \right) , \kappa | \fd (E,\RR ) \right) = {\eL}^{SYM} \left( A|_{E}  , \left( P_{IJ} X_J |_{E} , y|_{E^\perp} \right) , \kappa | \fg \right) ~,
\end{equation}
where $E = \ft$, $y \in \ft^\perp \subset \fg$ is constant and $( \ad_y )^2 = \mu^2 J^2$ on $\ft$.  Of course, it is not
obvious that one can always solve this last equation for $y$ in terms of a given $\mu$ and $J$ nor indeed whether this restricts
ones choice of $\fg$.  However, it is the particular case of $\dim E = 2$ that will be of interest to us in the context of
the Bagger--Lambert theory in \ref{sec:L-E} where we shall describe a nontrivial solution for any rank-2 semisimple Lie algebra
$\fg$.  Obvious generalisations of this solution give strong evidence that the equation can in fact always be solved.


\subsection{Bagger--Lambert theory for admissible metric 3-Lie algebras}
\label{sec:admissiblelagrangians}

We will now substitute the data appearing in Theorem~\ref{thm:main} into the bosonic part of the Bagger--Lambert lagrangian \eqref{eq:BLLag}, that is after having integrated out $X_I^{v_i}$ to set all $X_I^i := X_I^{u_i}$ constant.

Since we will be dealing with components of the various tensors appearing in Theorem~\ref{thm:main}, we need to introduce some index notation for components of the euclidean subspace $\bigoplus_{\alpha =1}^N W_\alpha \oplus \bigoplus_{\pi=1}^M E_\pi \oplus E_0$.  To this end we partition the basis $e_a = ( e_{a_\alpha} , e_{a_\pi} , e_{a_0} )$ on the euclidean part of the algebra, where subscripts denote a basis for the respective euclidean subspaces.  For example, $a_\alpha = 1,..., {\mathrm{dim}} \, W_\alpha$ whose range can thus be different for each $\alpha$.  Similarly $a_0 = 1,..., {\mathrm{dim}} \, E_0$, while $a_\pi = 1,2$ for each two-dimensional space $E_\pi$.  Since the decomposition $\bigoplus_{\alpha =1}^N W_\alpha \oplus \bigoplus_{\pi=1}^M E_\pi \oplus E_0$ is orthogonal with respect to the euclidean metric $\left< e_a , e_b \right> = \delta_{ab}$, we can take only the components $\left< e_{a_\alpha} , e_{b_\alpha} \right> = \delta_{a_\alpha b_\alpha}$, $\left< e_{a_\pi} , e_{b_\pi} \right> = \delta_{a_\pi b_\pi}$ and $\left< e_{a_0} , e_{b_0} \right> = \delta_{a_0 b_0}$ to be nonvanishing.  Since these are all just unit metrics on the various euclidean factors then we will not need to be careful about raising and lowering repeated indices, which are to be contracted over the index range of a fixed value of $\alpha$, $\pi$ or $0$.  Summations of the labels $\alpha$ and $\pi$ will be made explicit.

In terms of this notation, we may write the data from Theorem~\ref{thm:main} in terms of the following nonvanishing components of the canonical 4-form $F_{ABCD}$ of the algebra
\begin{equation}
\label{eq:4formcomponents}
\begin{aligned}
F_{u_i a_\alpha b_\alpha c_\alpha} &= \kappa_i^\alpha f_{a_\alpha b_\alpha c_\alpha} \\
F_{u_i u_j a_\pi b_\pi} &= \left( \eta_i^\pi \zeta_j^\pi - \eta_j^\pi \zeta_i^\pi \right) \epsilon_{a_\pi b_\pi} \\
F_{u_i u_j u_k a_0} &= K_{ijk a_0} \\
F_{u_i u_j u_k u_l} &= L_{ijkl}~,
\end{aligned}
\end{equation}
where $f_{a_\alpha b_\alpha c_\alpha}$ denotes the canonical 3-form
for the simple metric Lie algebra structure $\fg_\alpha$ on $W_\alpha$
and we have used the fact that the 2x2 matrix $H_\pi$ has only
components $\epsilon_{a_\pi b_\pi} = -\epsilon_{b_\pi a_\pi}$,
with $\epsilon_{12} =-1$, on each 2-plane $E_\pi$.

A final point of notational convenience will be to define $Y^{AB} :=
X_I^A X_I^B$ and the projection $X_I^\xi := \xi_i X_I^i$ for any $\xi
\in \RR^r$.  Combining these definitions allows us to write certain
projections which often appear in the lagrangian like $Y^{\xi
  \varsigma} := X_I^\xi X_I^\varsigma$ and $Y^{\xi a} := X_I^\xi
X_I^a$ for any $\xi , \varsigma \in \RR^r$.  It will sometimes be
useful to write $Y^{\xi \xi} \equiv \| X^\xi \|^2 \geq
0$ where $\| X^\xi \|$ denotes the $SO(8)$-norm of the
vector $X_I^\xi$.  A similar shorthand will be adopted for projections
of components of the gauge field, so that $A_\mu^{\xi \varsigma} :=
\xi_i \varsigma_j A_\mu^{ij}$ and $A_\mu^{\xi a} := \xi_i A_\mu^{i
  a}$.

It will be useful to note that the euclidean components of the
covariant derivative $D_\mu X_I^A = \partial_\mu X_I^A +
(\tilde{A}_\mu)^A{}_B X_I^B$ from section~\ref{sec:reviewBL} can be
written
\begin{equation}
\label{eq:covder}
\begin{aligned}
  D_\mu X_I^{a_\alpha} &= \partial_\mu X_I^{a_\alpha} - \kappa_i^\alpha f^{a_\alpha b_\alpha c_\alpha} \left( 2\, A_\mu^{i b_\alpha} X_I^{c_\alpha} + A_\mu^{b_\alpha c_\alpha} X_I^i \right)  \\
  &=: \eD_\mu X_I^{a_\alpha} - 2\, B_\mu^{a_\alpha} X_I^{\kappa^\alpha} \\
  D_\mu X_I^{a_\pi} &= \partial_\mu X_I^{a_\pi} + 2\, \eta_i^\pi \zeta_j^\pi \epsilon^{a_\pi b_\pi} \left( A_\mu^{ij} X_I^{b_\pi} - A_\mu^{i b_\pi} X_I^j + A_\mu^{j b_\pi} X_I^i \right)  \\
  &= \partial_\mu X_I^{a_\pi} + 2\, \epsilon^{a_\pi b_\pi} \left( A_\mu^{\eta^\pi \zeta^\pi} X_I^{b_\pi} - A_\mu^{\eta^\pi b_\pi} X_I^{\zeta^\pi} + A_\mu^{\zeta^\pi b_\pi} X_I^{\eta^\pi} \right)  \\
  D_\mu X_I^{a_0} &= \partial_\mu X_I^{a_0} - K_{ijk}{}^{a_0} A_\mu^{ij} X_I^k~.
\end{aligned}
\end{equation}
The second line defines two new quantities on each $W_\alpha$, namely
$B_\mu^{a_\alpha} := \half f^{a_\alpha b_\alpha c_\alpha}
A_\mu^{b_\alpha c_\alpha}$ and the covariant derivative $\eD_\mu
X_I^{a_\alpha} := \partial_\mu X_I^{a_\alpha} -2\, f^{a_\alpha
  b_\alpha c_\alpha} \kappa_i^\alpha A_\mu^{i b_\alpha}
X_I^{c_\alpha}$.  The latter object is just the canonical covariant
derivative with respect to the projected gauge field
$\eA_\mu^{a_\alpha} := -2\, A_\mu^{\kappa^\alpha a_\alpha}$ on each
$W_\alpha$.  The associated field strength $\eF_{\mu\nu} = [ \eD_\mu ,
\eD_\nu ]$ has components
\begin{equation}
  \eF^{a_\alpha} = -2\, \kappa_i^\alpha \left( d A^{i a_\alpha} - \kappa_j^\alpha f^{a_\alpha b_\alpha c_\alpha} A^{i b_\alpha} \wedge A^{j c_\alpha} \right)~.
\end{equation}

Although somewhat involved, the nomenclature above will help us
understand more clearly the structure of the Bagger--Lambert
lagrangian.  Let us consider now the contributions to \eqref{eq:BLLag}
coming from the scalar kinetic terms, the sextic potential and the
Chern--Simons term in turn.

The kinetic terms for the scalar fields give
\begin{equation}
  \label{eq:BLkin-pre}
  -\half \left< D_\mu X_I , D^\mu X_I \right> = -\half \sum_{\alpha =1}^N D_\mu X_I^{a_\alpha} D^\mu X_I^{a_\alpha} -\half
  \sum_{\pi =1}^M D_\mu X_I^{a_\pi} D^\mu X_I^{a_\pi} -\half D_\mu X_I^{a_0} D^\mu X_I^{a_0}
\end{equation}
which expands to
\begin{multline}
  \label{eq:BLkin}
\sum_{\alpha =1}^N \left\{ -\half \eD_\mu X_I^{a_\alpha} \eD^\mu X_I^{a_\alpha} + 2\, X_I^{\kappa^\alpha} B_\mu^{a_\alpha}
  \eD^\mu X_I^{a_\alpha} -2\, Y^{\kappa^\alpha \kappa^\alpha} B_\mu^{a_\alpha} B^{\mu \, a_\alpha} \right\} \\
+  \sum_{\pi =1}^M \left\{ -\half \partial_\mu X_I^{a_\pi} \partial^\mu X_I^{a_\pi} -2\, \partial^\mu X_I^{a_\pi} \epsilon^{a_\pi
 b_\pi} \left( A_\mu^{\eta^\pi \zeta^\pi} X_I^{b_\pi} - A_\mu^{\eta^\pi b_\pi} X_I^{\zeta^\pi} + A_\mu^{\zeta^\pi b_\pi}
    X_I^{\eta^\pi} \right) \right. \\
\left. -2\, \left( A_\mu^{\eta^\pi \zeta^\pi} X_I^{a_\pi} - A_\mu^{\eta^\pi a_\pi} X_I^{\zeta^\pi} + A_\mu^{\zeta^\pi a_\pi}
    X_I^{\eta^\pi} \right) \left( A^{\mu \, \eta^\pi \zeta^\pi} X_I^{a_\pi} - A^{\mu \, \eta^\pi a_\pi} X_I^{\zeta^\pi} + A^{\mu \, \zeta^\pi
    a_\pi} X_I^{\eta^\pi} \right) \right\} \\
-\half \partial_\mu X_I^{a_0} \partial^\mu X_I^{a_0} + K_{ijk}{}^{a_0} A_\mu^{ij} \partial^\mu Y^{k a_0} - \half K_{ijk a_0}
K_{lmn a_0} Y^{kl} A_\mu^{ij} A^{\mu \, mn}~. 
\end{multline}

The scalar potential can be written $\eV (X) = \eV^W (X) + \eV^E (X) + \eV^{E_0} (X)$ where
\begin{equation}
\label{eq:BLpot}
\begin{aligned}[m]
  \eV^W (X) &= -\tfrac{1}{4} \sum_{\alpha =1}^N f^{a_\alpha b_\alpha e_\alpha} f^{c_\alpha d_\alpha e_\alpha} \left( Y^{\kappa^\alpha \kappa^\alpha} Y^{a_\alpha c_\alpha} - Y^{\kappa^\alpha a_\alpha} Y^{\kappa^\alpha c_\alpha} \right) Y^{b_\alpha d_\alpha} \\
  \eV^E (X) &= -\tfrac{1}{2} \sum_{\pi =1}^M \left\{ Y^{a_\pi a_\pi} \left( Y^{\eta^\pi \eta^\pi} Y^{\zeta^\pi \zeta^\pi} - (
      Y^{\eta^\pi \zeta^\pi} )^2 \right) + 2 \, Y^{\eta^\pi a_\pi} Y^{\zeta^\pi a_\pi} Y^{\eta^\pi \zeta^\pi}\right. \\
    &\quad \left. - Y^{\eta^\pi a_\pi} Y^{\eta^\pi a_\pi} Y^{\zeta^\pi \zeta^\pi} - Y^{\zeta^\pi a_\pi} Y^{\zeta^\pi a_\pi} Y^{\eta^\pi \eta^\pi} \right\} \\
  \eV^{E_0} (X) &= -\tfrac{1}{12} K_{ijk a_0} K_{lmn a_0} Y^{il} Y^{jm} Y^{kn}~.
\end{aligned}
\end{equation}
Notice that $\eV^{E_0} (X)$ is constant and will be ignored henceforth.

And finally, the Chern--Simons term can be written $\eL_{\text{CS}} = \eL_{\text{CS}}^W + \eL_{\text{CS}}^E + \eL_{\text{CS}}^{E_0}$ where
\begin{equation}
\label{eq:BLCS2}
\begin{aligned}
\eL_{\text{CS}}^W &= -2\, \sum_{\alpha =1}^N B^{a_\alpha} \wedge \eF^{a_\alpha} \\
\eL_{\text{CS}}^E &= -4\, \sum_{\pi =1}^M \left\{ \epsilon^{a_\pi b_\pi} \, A^{\eta^\pi a_\pi} \wedge A^{\zeta^\pi b_\pi} +2 \, A^{\eta^\pi \zeta^\pi} \wedge A^{\eta^\pi a_\pi} \wedge A^{\zeta^\pi a_\pi} - \half \epsilon^{a_\pi b_\pi} A^{a_\pi b_\pi} \wedge d A^{\eta^\pi \zeta^\pi} \right\} \\
\eL_{\text{CS}}^{E_0} &= 2\, K_{ijk a_0} A^{ij} \wedge d A^{k a_0} - \tfrac{1}{3} K_{ikl a_0} K_{jmn a_0} A^{ij} \wedge A^{kl} \wedge A^{mn} + \half L_{ijkl} A^{ij} \wedge d A^{kl}~.
\end{aligned}
\end{equation}
These expressions are valid only up to total derivative terms that will be discarded.

Clearly there is a certain degree of factorisation for the
Bagger--Lambert lagrangian into separate terms living on the different
components of $\bigoplus_{\alpha =1}^N W_\alpha \oplus
\bigoplus_{\pi=1}^M E_\pi \oplus E_0$.  Indeed let us define
accordingly $\eL^W = -\half \sum_{\alpha =1}^N D_\mu X_I^{a_\alpha}
D^\mu X_I^{a_\alpha} + \eV^W (X) + \eL_{\text{CS}}^W$ and likewise for
$E$ and $E_0$.  This is mainly for notational convenience however and
one must be wary of the fact that $\eL^E$ and $\eL^{E_0}$ could have
some fields, namely components of $A^{ij}$, in common.

To relate the full lagrangian $\eL$ with a super Yang-Mills theory,
one has first to identify and integrate out those fields which are
auxiliary or appear linearly as Lagrange multipliers.  This will be
most easily done by considering $\eL^W$, $\eL^E$ and $\eL^{E_0}$ in
turn.

\subsubsection{$\eL^W$}
\label{sec:L-W}

The field $B^{a_\alpha}$ appears only algebraically as an auxiliary field in $\eL^W$.  Its equation of motion implies
\begin{equation}
\label{eq:eomB}
2 \, Y^{\kappa^\alpha \kappa^\alpha} B^{a_\alpha} = X_I^{\kappa^\alpha} \eD X_I^{a_\alpha} + {*\eF}^{a_\alpha}~,
\end{equation}
for each value of $\alpha$.  Substituting this back into $\eL^W$ then gives
\begin{equation}
\label{eq:intB}
-\half \sum_{\alpha =1}^N D_\mu X_I^{a_\alpha} D^\mu X_I^{a_\alpha} + \eL_{\text{CS}}^W = \sum_{\alpha =1}^N \left\{ - \half P_{IJ}^{\kappa^\alpha} \eD_\mu X_I^{a_\alpha} \eD^\mu X_J^{a_\alpha} - \tfrac{1}{4 Y^{\kappa^\alpha \kappa^\alpha}} \eF_{\mu\nu}^{a_\alpha} \eF^{\mu\nu\, a_\alpha} \right\}~,
\end{equation}
where, for each $\alpha$, $P_{IJ}^{\kappa^\alpha} := \delta_{IJ} -
\frac{X_I^{\kappa^\alpha} X_J^{\kappa^\alpha}}{Y^{\kappa^\alpha
    \kappa^\alpha}}$ is the projection operator onto the hyperplane
$\RR^7 \subset \RR^8$ which is orthogonal to the 8-vector
$X_I^{\kappa^\alpha}$ that $\kappa_i^\alpha$ projects the constant
$X_I^i$ onto.

Furthermore, in terms of the Lie bracket $[-,-]_\alpha$ on $\fg_\alpha$, the scalar potential can be written
\begin{equation}
\label{eq:potB}
\eV^W (X) = - \tfrac{1}{4} \sum_{\alpha =1}^N Y^{\kappa^\alpha \kappa^\alpha} \, P_{IK}^{\kappa^\alpha} P_{JL}^{\kappa^\alpha} \, [ X_I , X_J ]_\alpha^{a_\alpha} [ X_K , X_L ]_\alpha^{a_\alpha}~.
\end{equation}

In conclusion, we have shown that upon integrating out $B^{a_\alpha}$ one can identify
\begin{equation}
\label{eq:LalphaSYM}
\eL^W = \sum_{\alpha =1}^N {\eL}^{SYM} \left( \eA^{a_\alpha} , P_{IJ}^{\kappa^\alpha} X_J^{a_\alpha} , \| X^{\kappa^\alpha} \| | \fg_\alpha \right)~.
\end{equation}
The identification above with the lagrangian in \eqref{eq:SYM} has
revealed a rather intricate relation between the data
$\kappa_i^\alpha$ and $\fg_\alpha$ on $W_\alpha$ from
Theorem~\ref{thm:main} and the physical parameters in the super
Yang--Mills theory.  In particular, the coupling constant for the super
Yang--Mills theory on $W_\alpha$ corresponds to the $SO(8)$-norm of
$X_I^{\kappa^\alpha}$.  Moreover, the direction of
$X_I^{\kappa^\alpha}$ in $\RR^8$ determines which hyperplane the seven
scalar fields in the super Yang--Mills theory must occupy and thus may
be different on each $W_\alpha$.  The gauge symmetry is based on the
euclidean Lie algebra $\bigoplus_{\alpha =1}^N \fg_\alpha$.

The main point to emphasise is that it is the projections of the
individual $\kappa_i^\alpha$ onto the vacuum described by constant
$X_I^i$ (rather than the vacuum expectation values themselves) which
determine the physical moduli in the theory.  For example, take $N=1$
with only one simple Lie algebra structure $\fg = \fsu(n)$ on $W$.  The
lagrangian \eqref{eq:LalphaSYM} then describes precisely the
low-energy effective theory for $n$ coincident D2-branes in type IIA
string theory, irrespective of the index $r$ of the initial 3-Lie
algebra.  The only difference is that the coupling $\|
X^{\kappa} \|$, to be interpreted as the perimeter of the
M-theory circle, is realised as a different projection for different
values of $r$.

Thus, in general, we are assuming a suitably generic situation wherein
none of the projections $X_I^{\kappa^\alpha}$ vanish identically.  If
$X_I^{\kappa^\alpha} =0$ for a given value of $\alpha$ then the
$W_\alpha$ part of the scalar potential \eqref{eq:BLpot} vanishes
identically and the only occurrence of the corresponding
$B^{a_\alpha}$ is in the Chern--Simons term \eqref{eq:BLCS2}.  Thus,
for this particular value of $\alpha$, $B^{a_\alpha}$ has become a
Lagrange multiplier imposing $\eF^{a_\alpha} = 0$ and so
$\eA^{a_\alpha}$ is pure gauge.  The resulting lagrangian on this
$W_\alpha$ therefore describes a free $N=8$ supersymmetric theory for
the eight scalar fields $X_I^{a_\alpha}$.

\subsubsection{$\eL^E$}
\label{sec:L-E}

The field $\epsilon^{a_\pi b_\pi} A^{a_\pi b_\pi}$ appears
only linearly in one term in $\eL_{\text{CS}}^E$ and is therefore a
Lagrange multiplier imposing the constraint $A^{\eta^\pi
  \zeta^\pi} = d \phi^{\eta^\pi \zeta^\pi}$, for some some
scalar fields $\phi^{\eta^\pi \zeta^\pi}$, for each value of
$\pi$.  The number of distinct scalars $\phi^{\eta^\pi
  \zeta^\pi}$ will depend on the number of linearly independent
2-planes in $\RR^r$ which the collection of all $\eta^\pi \wedge
\zeta^\pi$ span for $\pi =1,...,M$.  Let us henceforth call this
number $k$, which is clearly bounded above by $\binom{r}{2}$.

Moreover, up to total derivatives, one has a choice of taking just one
of the two gauge fields $A^{\eta^\pi a_\pi}$ and $A^{\zeta^\pi
  a_\pi}$ to be auxiliary in $\eL^E$.  These are linearly independent
gauge fields by virtue of the fact that $\eta^\pi \wedge
\zeta^\pi$ span a 2-plane in $\RR^r$ for each value of
$\pi$.  Without loss of generality we can take $A^{\eta^\pi
  a_\pi}$ to be auxiliary and integrate it out in favour of
$A^{\zeta^\pi a_\pi}$.  After implementing the Lagrange multiplier
constraint above, one finds that the equation of motion of
$A^{\eta^\pi a_\pi}$ implies
\begin{multline}
\label{eq:eomAeta}
2\, Y^{\zeta^\pi \zeta^\pi} A^{\eta^\pi a_\pi} = -
\epsilon^{a_\pi b_\pi} \left\{ X_I^{\zeta^\pi} \left( d
    X_I^{b_\pi} +2\, \epsilon^{b_\pi c_\pi} \left( X_I^{c_\pi}
      d \phi^{\eta^\pi \zeta^\pi} + X_I^{\eta^\pi}
      A^{\zeta^\pi c_\pi}  \right) \right) \right. \\
\left. {} + 2\, {*\left( d
      A^{\zeta^\pi b_\pi} + 2\, \epsilon^{b_\pi c_\pi} d
      \phi^{\eta^\pi \zeta^\pi} \wedge A^{\zeta^\pi c_\pi}
    \right)} \right\}~.
\end{multline}
Substituting this back into $\eL^E$ then, following a rather lengthy but straightforward calculation, one finds that
\begin{equation}
\label{eq:intAeta}
\begin{aligned}
-\half \sum_{\pi =1}^M D_\mu X_I^{a_\pi} D^\mu X_I^{a_\pi} + \eL_{\text{CS}}^E =& - \half \sum_{\pi =1}^M P_{IJ}^{\zeta^\pi} \left( \partial_\mu X_I^{a_\pi} +2\, \epsilon^{a_\pi b_\pi} \left( X_I^{b_\pi} \partial_\mu \phi^{\eta^\pi \zeta^\pi} + X_I^{\eta^\pi} A_\mu^{\zeta^\pi b_\pi} \right) \right) \\
&\hspace*{.9in} \times \left( \partial^\mu X_J^{a_\pi} +2\, \epsilon^{a_\pi c_\pi} \left( X_J^{c_\pi} \partial^\mu \phi^{\eta^\pi \zeta^\pi} + X_J^{\eta^\pi} A^{\mu\, \zeta^\pi c_\pi} \right) \right) \\
& - \sum_{\pi =1}^M \tfrac{4}{Y^{\zeta^\pi \zeta^\pi}} \left( \partial_{[\mu} A_{\nu ]}^{\zeta^\pi a_\pi} + 2\, \epsilon^{a_\pi b_\pi} \partial_{[\mu} \phi^{\eta^\pi \zeta^\pi} A_{\nu ]}^{\zeta^\pi b_\pi} \right) \\
&\hspace*{.9in} \times \left( \partial^{\mu} A^{\nu\, \zeta^\pi a_\pi} + 2\, \epsilon^{a_\pi c_\pi} \partial^{\mu} \phi^{\eta^\pi \zeta^\pi} A^{\nu\, \zeta^\pi c_\pi} \right) ~,
\end{aligned}
\end{equation}
where, for each $\pi$, $P_{IJ}^{\zeta^\pi} := \delta_{IJ} -
\frac{X_I^{\zeta^\pi} X_J^{\zeta^\pi}}{Y^{\zeta^\pi
    \zeta^\pi}}$ projects onto the hyperplane $\RR^7 \subset \RR^8$
orthogonal to the 8-vector $X_I^{\zeta^\pi}$ which $\zeta_i^\pi$
projects the constant $X_I^i$ onto.

We have deliberately written \eqref{eq:intAeta} in a way that is
suggestive of a super Yang--Mills description for the fields on $E$
however, in contrast with the preceding analysis for $W$, the gauge
structure here is not quite so manifest.  To make it more transparent,
let us fix a particular value of $\pi$ and consider a 4-dimensional
lorentzian vector space of the form $\RR e_+ \oplus \RR e_- \oplus
E_\pi$, where the particular basis $( e_+ , e_- )$ for the two null
directions obeying $\left< e_+ , e_- \right> =1$ and $\left< e_\pm ,
  e_\pm \right> = 0 = \left< e_\pm , e_{a_\pi} \right>$ can of
course depend on the choice of $\pi$ (we will omit the $\pi$ label
here though).  If we take $E_\pi$ to be a euclidean 2-dimensional
abelian Lie algebra then we can define a lorentzian metric Lie algebra
structure on $\RR e_+ \oplus \RR e_- \oplus E_\pi$ given by the
double extension $\fd ( E_\pi , \RR )$.  The nonvanishing Lie
brackets of $\fd ( E_\pi , \RR )$ are
\begin{equation}
\label{eq:deE-2}
[ e_+ , e_{a_\pi} ] = - \epsilon_{a_\pi b_\pi} e_{b_\pi}~, \quad\quad [ e_{a_\pi} , e_{b_\pi} ] = - \epsilon_{a_\pi b_\pi} e_-~.
\end{equation}
This double extension is precisely the Nappi--Witten Lie algebra.

For each value of $\pi$ we can collect the following sets of scalars
${\sf X}_I^\pi := ( X_I^{\eta^\pi} , X_I^{\zeta^\pi} ,
X_I^{a_\pi} )$ and gauge fields ${\sf A}^\pi := ( 2\, d
\phi^{\eta^\pi \zeta^\pi} , 0 , -2\, A^{\zeta^\pi a_\pi} )$
into elements of the aforementioned vector space $\RR e_+ \oplus \RR
e_- \oplus E_\pi$.  The virtue of doing so being that if ${\sf D} = d
+ [ {\sf A},-]$, for each value of $\pi$, is the canonical
gauge-covariant derivative with respect to each $\fd ( E_\pi , \RR
)$ then $( {\sf D} {\sf X}_I )^{a_\pi} = d X_I^{a_\pi} +2\,
\epsilon^{a_\pi b_\pi} \left( X_I^{b_\pi} d \phi^{\eta^\pi
    \zeta^\pi} + X_I^{\eta^\pi} A^{\zeta^\pi b_\pi} \right)$
while the associated field strength ${\sf F}_{\mu\nu} = [ {\sf D}_\mu
, {\sf D_\nu} ]$ has ${\sf F}^{a_\pi} = -2\, \left( d A^{\zeta^\pi
    a_\pi} + 2\, \epsilon^{a_\pi b_\pi} d \phi^{\eta^\pi
    \zeta^\pi} \wedge A^{\zeta^\pi b_\pi} \right)$.  These are
exactly the components appearing in \eqref{eq:intAeta}!

Moreover, the scalar potential $\eV^{E} (X)$ can be written
\begin{equation}
\label{eq:potAeta}
\eV^E (X) = - \tfrac{1}{4} \sum_{\pi =1}^M Y^{\zeta^\pi \zeta^\pi} \, P_{IK}^{\zeta^\pi} P_{JL}^{\zeta^\pi} \, [ {\sf X}_I , {\sf X}_J ]^{a_\pi} [ {\sf X}_K , {\sf X}_L ]^{a_\pi}~,
\end{equation}
where $[-,-]$ denotes the Lie bracket on each $\fd ( E_\pi , \RR )$ factor.

Thus it might appear that $\eL^E$ is going to describe a super
Yang--Mills theory whose gauge algebra is $\bigoplus_{\pi =1}^M \fd
( E_\pi , \RR )$, which indeed has a maximally isotropic centre and
so is of the form noted in section~\ref{sec:reviewSYM}.  However,
this need not be the case in general since the functions
$\phi^{\eta^\pi \zeta^\pi}$ appearing in the $e_+$ direction of
each ${\sf A}^\pi$ must describe the same degree of freedom for
different values of $\pi$ precisely when the corresponding 2-planes
in $\RR^r$ spanned by $\eta^\pi \wedge \zeta^\pi$ are linearly
dependent.  Consequently we must identify the $( e_+ , e_- )$
directions in all those factors $\fd ( E_\pi , \RR )$ for which the
associated $\eta^\pi \wedge \zeta^\pi$ span the same 2-plane in
$\RR^r$.  It is not hard to see that, with respect to a general basis
on $\bigoplus_{\pi =1}^M E_\pi$, the resulting Lie algebra $\fk$
must take the form $\bigoplus_{{[\pi]} =1}^k \fd ( E_{[\pi]} , \RR )$ of
an orthogonal direct sum over the number of independent 2-planes $k$
spanned by $\eta^{[\pi]} \wedge \zeta^{[\pi]}$ of a set of $k$ double
extensions $\fd ( E_{[\pi]} , \RR )$ of even-dimensional vector spaces
$E_{[\pi]}$, where $\bigoplus_{\pi =1}^M E_\pi = \bigoplus_{{[\pi]}
  =1}^k E_{[\pi]}$.  That is each ${[\pi]}$ can be thought of as
encompassing an equivalence class of $\pi$ values for which the
corresponding 2-forms $\eta^\pi \wedge \zeta^\pi$ are all
proportional to each other.  The data for $\fk$ therefore corresponds
to a set of $k$ nondegenerate elements $J_{[\pi]} \in \fso( E_{[\pi]}
)$ where, for a given value of ${[\pi]}$, the relative eigenvalues of
$J_{[\pi]}$ are precisely the relative proportionality constants for
the linearly dependent 2-forms $\eta^\pi \wedge \zeta^\pi$ in the
equivalence class.  Clearly $\fk$ therefore has index $k$, dimension $2
\left( k + \left[ \tfrac{{\mathrm{dim}}\, W_0}{2} \right] \right)$ and
admits a maximally isotropic centre.

Putting all this together, we conclude that
\begin{equation}
  \label{eq:LbetaSYM}
  \eL^E = \sum_{{[\pi]} =1}^k {\eL}^{SYM} \left( {\sf A}^{{[\pi]}} ,
    P_{IJ}^{\zeta^{[\pi]}} {\sf X}_J^{{[\pi]}} , \| X^{\zeta^{[\pi]}}
    \| \middle | \fd ( E_{[\pi]} , \RR ) \right)~.
\end{equation}
One can check from \eqref{eq:BLpot} and \eqref{eq:intAeta} that the
contributions to the Bagger--Lambert lagrangian on $E$ coming from
different $E_\pi$ factors, but with $\pi$ values in the same
equivalence class ${[\pi]}$, are precisely accounted for in the
expression \eqref{eq:LbetaSYM} by the definition above of the elements
$J_{[\pi]}$ defining the double extensions.

The identification above again provides quite an intricate relation
between the data on $E_\pi$ from Theorem~\ref{thm:main} and the
physical super Yang--Mills parameters.  However, we know from
section~\ref{sec:reviewSYM} that the physical content of super
Yang--Mills theories whose gauge symmetry is based on a lorentzian Lie
algebra corresponding to a double extension is rather more simple,
being described in terms of free massive vector supermultiplets.  Let
us therefore apply this preceding analysis to the theory above.

The description above of the lagrangian on each factor $E_\pi$ has
involved projecting degrees of freedom onto the hyperplane $\RR^7
\subset \RR^8$ orthogonal to $X_I^{\zeta^\pi}$.  The natural analogy
here of the six-dimensional subspace occupied by the massive scalar
fields in section~\ref{sec:reviewSYM} is obtained by projecting
onto the subspace $\RR^6 \subset \RR^8$ which is orthogonal to the
plane in $\RR^8$ spanned by $X^{\eta^\pi} \wedge X^{\zeta^\pi}$,
i.e. the image in $\Lambda^2 \RR^8$ of the 2-form $\eta^\pi \wedge
\zeta^\pi$ under the map from $\RR^r \to \RR^8$ provided by the
vacuum expectation values $X_I^i$.  This projection operator can be
written
\begin{equation}
  \label{eq:2proj}
  P_{IJ}^{\eta^\pi \zeta^\pi} = \delta_{IJ} - X_I^{\eta^\pi} Q_J^{\eta^\pi} - X_I^{\zeta^\pi} Q_J^{\zeta^\pi}~,
\end{equation}
where
\begin{equation}
  \label{eq:dualX}
  \begin{aligned}
    Q_I^{\eta^\pi} &:= \frac{1}{(\Delta_{\eta^\pi \zeta^\pi} )^2} \left( Y^{\zeta^\pi \zeta^\pi} X_I^{\eta^\pi} - Y^{\eta^\pi \zeta^\pi} X_I^{\zeta^\pi} \right) \\
    Q_I^{\zeta^\pi} &:= \frac{1}{(\Delta_{\eta^\pi \zeta^\pi}
      )^2} \left( Y^{\eta^\pi \eta^\pi} X_I^{\zeta^\pi} -
      Y^{\eta^\pi \zeta^\pi} X_I^{\eta^\pi} \right)~,
  \end{aligned}
\end{equation}
and
\begin{equation}
  \label{eq:areamass}
  (\Delta_{\eta^\pi \zeta^\pi} )^2 := \| X^{\eta^\pi}
  \wedge X^{\zeta^\pi} \|^2 \equiv Y^{\eta^\pi \eta^\pi}
  Y^{\zeta^\pi \zeta^\pi} - ( Y^{\eta^\pi \zeta^\pi} )^2~.
\end{equation}
The quantities defined in \eqref{eq:dualX} are the dual elements to
$X_I^{\eta^\pi}$ and $X_I^{\zeta^\pi}$ such that $Q_I^{\eta^\pi}
X_I^{\eta^\pi} = 1 = Q_I^{\zeta^\pi} X_I^{\zeta^\pi}$ and
$Q_I^{\eta^\pi} X_I^{\zeta^\pi} = 0 = Q_I^{\zeta^\pi}
X_I^{\eta^\pi}$.  The expression \eqref{eq:areamass} identifies
$\Delta_{\eta^\pi \zeta^\pi}$ with the area in $\RR^8$ spanned by
$X^{\eta^\pi} \wedge X^{\zeta^\pi}$.  From these definitions, it
follows that $P_{IJ}^{\eta^\pi \zeta^\pi}$ in \eqref{eq:2proj}
indeed obeys $P_{IJ}^{\eta^\pi \zeta^\pi} = P_{JI}^{\eta^\pi
  \zeta^\pi}$, $P_{IK}^{\eta^\pi \zeta^\pi} P_{JK}^{\eta^\pi
  \zeta^\pi} = P_{IJ}^{\eta^\pi \zeta^\pi}$ and
$P_{IJ}^{\eta^\pi \zeta^\pi} X_J^{\eta^\pi} = 0 =
P_{IJ}^{\eta^\pi \zeta^\pi} X_J^{\zeta^\pi}$.

The scalar potential \eqref{eq:potAeta} on $E$ has a natural expression in terms of the objects defined in \eqref{eq:2proj} and
\eqref{eq:areamass} as
\begin{equation}
\label{eq:pot2proj}
\eV^E (X) = - \tfrac{1}{2} \sum_{\pi =1}^M (\Delta_{\eta^\pi \zeta^\pi} )^2 \, P_{IJ}^{\eta^\pi \zeta^\pi} X_I^{a_\pi} X_J^{a_\pi}~.
\end{equation}

Furthermore, using the identity
\begin{equation}
  \label{eq:2projid}
  P_{IJ}^{\eta^\pi \zeta^\pi} \equiv P_{IJ}^{\zeta^\pi} -
  \frac{(\Delta_{\eta^\pi \zeta^\pi} )^2}{Y^{\zeta^\pi
      \zeta^\pi}} Q_I^{\eta^\pi} Q_J^{\eta^\pi}~,
\end{equation}
allows one to reexpress the remaining terms
\begin{equation}
  -\half \sum_{\pi =1}^M D_\mu X_I^{a_\pi} D^\mu X_I^{a_\pi} + \eL_{\text{CS}}^E
\end{equation}
in \eqref{eq:intAeta} as
\begin{multline}
  \label{eq:intAeta2}
\sum_{\pi =1}^M -\half \, P_{IJ}^{\eta^\pi
      \zeta^\pi} {\mathcal{D}}_\mu X_I^{a_\pi} {\mathcal{D}}^\mu
    X_J^{a_\pi} - \tfrac{1}{Y^{\zeta^\pi \zeta^\pi}} \left( 2\,
      {\mathcal{D}}_{[\mu} A_{\nu ]}^{\zeta^\pi a_\pi} \right)
    \left( 2\, {\mathcal{D}}^{\mu} A^{\nu\, \zeta^\pi a_\pi} \right)
    \\
    - \half \sum_{\pi =1}^M \frac{Y^{\zeta^\pi
        \zeta^\pi}}{(\Delta_{\eta^\pi \zeta^\pi} )^2} \left(
      X_I^{\eta^\pi} P_{IJ}^{\zeta^\pi} {\mathcal{D}}_\mu
      X_J^{a_\pi} + 2\, \frac{(\Delta_{\eta^\pi \zeta^\pi}
        )^2}{Y^{\zeta^\pi \zeta^\pi}} \epsilon^{a_\pi b_\pi}
      A_\mu^{\zeta^\pi b_\pi} \right) \\
     \times \left( X_K^{\eta^\pi} P_{KL}^{\zeta^\pi}
      {\mathcal{D}}^\mu X_L^{a_\pi} + 2\, \frac{(\Delta_{\eta^\pi
          \zeta^\pi} )^2}{Y^{\zeta^\pi \zeta^\pi}}
      \epsilon^{a_\pi c_\pi} A^{\mu \, \zeta^\pi c_\pi}
    \right)~,
\end{multline}
where we have introduced the covariant derivative ${\mathcal{D}}
\Phi^{a_\pi} := d \Phi^{a_\pi} + 2\, \epsilon^{a_\pi b_\pi} \,
d \phi^{\eta^\pi \zeta^\pi} \wedge \Phi^{b_\pi}$ for any
differential form $\Phi^{a_\pi}$ on $\RR^{1,2}$ taking values in
$E_\pi$.  Similar to what we saw in section~\ref{sec:reviewSYM},
the six projected scalars $P_{IJ}^{\eta^\pi \zeta^\pi}
X_J^{a_\pi}$ in the first line of \eqref{eq:intAeta2} do not couple
to the gauge field $A^{\zeta^\pi a_\pi}$ on each
$E_\pi$.  Moreover, the remaining scalar in the second line of
\eqref{eq:intAeta2} can be eliminated from the lagrangian, for each
$E_\pi$, using the gauge symmetry under which $\delta A^{i a_\pi}
= {\mathcal{D}} \Lambda^{i a_\pi}$ for any parameter $\Lambda^{i
  a_\pi}$ to fix $\Lambda^{\zeta^\pi a_\pi} = - \half
\frac{Y^{\zeta^\pi \zeta^\pi}}{(\Delta_{\eta^\pi \zeta^\pi}
  )^2} \epsilon^{a_\pi b_\pi} X_I^{\eta^\pi}
P_{IJ}^{\zeta^\pi} X_J^{b_\pi}$.  There is a remaining gauge
symmetry under which $\delta \phi^{\eta^\pi \zeta^\pi} =
\Lambda^{\eta^\pi \zeta^\pi}$ and $\delta \Phi^{a_\pi} = -2\,
\Lambda^{\eta^\pi \zeta^\pi} \epsilon^{a_\pi b_\pi}
\Phi^{b_\pi}$ where the gauge parameter $\Lambda^{\eta^\pi
  \zeta^\pi} = \eta_i^\pi \zeta_j^\pi \Lambda^{ij}$, under which
the derivative ${\mathcal{D}}$ transforms covariantly.  This can also
be fixed to set ${\mathcal{D}} = d$ on each $E_\pi$.  Notice that one
has precisely the right number of these gauge symmetries to fix all
the independent projections $\phi^{\eta^\pi \zeta^\pi}$ appearing
in the covariant derivatives.

After doing this one combines \eqref{eq:pot2proj} and \eqref{eq:intAeta2} to write
\begin{equation}
\label{eq:LESYMgf}
\begin{aligned}
  \eL^E =& \sum_{\pi =1}^M -\half P_{IJ}^{\eta^\pi
    \zeta^\pi} \partial_\mu X_I^{a_\pi} \partial^\mu X_J^{a_\pi}
  -\half (\Delta_{\eta^\pi \zeta^\pi} )^2 P_{IJ}^{\eta^\pi
    \zeta^\pi} X_I^{a_\pi} X_J^{a_\pi}  \\
  &+\sum_{\pi =1}^M -\tfrac{1}{Y^{\zeta^\pi \zeta^\pi}} (
  2\, \partial_{[ \mu} A_{\nu ]}^{\zeta^\pi a_\pi} ) (
  2\, \partial^{[ \mu} A^{\nu ]\, \zeta^\pi a_\pi} ) -
  \tfrac{2}{Y^{\zeta^\pi \zeta^\pi}} (\Delta_{\eta^\pi
    \zeta^\pi} )^2 A_\mu^{\zeta^\pi a_\pi} A^{\mu\, \zeta^\pi
    a_\pi}~,
\end{aligned}
\end{equation}
describing precisely the bosonic part of the lagrangian for free
decoupled abelian $N=8$ massive vector supermultiplets on each
$E_\pi$, whose bosonic fields comprise the six scalars
$P_{IJ}^{\eta^\pi \zeta^\pi} X_J^{a_\pi}$ and gauge field $-2
\tfrac{1}{\| X^{\zeta^\pi} \|} A^{\zeta^\pi
  a_\pi}$, all with mass $\Delta_{\eta^\pi \zeta^\pi}$ on each
$E_\pi$.  It is worth stressing that we have presented
\eqref{eq:LESYMgf} as a sum over all $E_\pi$ just so that the masses
$\Delta_{\eta^\pi \zeta^\pi}$ on each factor can be written more
explicitly.  We could equally well have presented things in terms of a
sum over the equivalence classes $E_{[\pi]}$, as in
\eqref{eq:LbetaSYM}, whereby the relative proportionality constants
for the $\Delta_{\eta^\pi \zeta^\pi}$ within a given class
${[\pi]}$ would be absorbed into the definition of the corresponding
$J_{[\pi]}$.

The lagrangian on a given $E_\pi$ in the sum \eqref{eq:LESYMgf} can
also be obtained from the truncation of an $N=8$ super Yang--Mills
theory with euclidean gauge algebra $\fg$ via the procedure described
at the end of section~\ref{sec:reviewSYM}.  In particular, let us
identify a given $E_\pi$ with the Cartan subalgebra of a semisimple
Lie algebra $\fg$ of rank two.  Then we require $- \|
X^{\zeta^\pi} \|^2 \, ( \ad_y )^2 =
(\Delta_{\eta^\pi \zeta^\pi} )^2 \, {\bf 1}_2$ on $E_\pi$ for
some constant $y \in E_\pi^\perp \subset \fg$.  In this case $\fg$
must be either $\fsu(3)$, $\fso(5)$, $\fso(4)$ or $\fg_2$ and
$E_\pi^\perp$ is identified with the root space of $\fg$ whose
dimension is 6, 8, 4 or 12 respectively.  A solution in this case is to
take $y$ proportional to the vector with only +1/-1 entries along the
positive/negative roots of $\fg$.  The proportionality constant here
being $\frac{\Delta_{\eta^\pi \zeta^\pi}}{\sqrt{h ( \fg
    )} \| X^{\zeta^\pi} \|}$ where $h( \fg )$ is the
dual Coxeter number of $\fg$ and equals 3, 3, 2 or 4 for
$\fsu(3)$, $\fso(5)$, $\fso(4)$ or $\fg_2$ respectively (it is
assumed that the longest root has norm-squared equal to 2 with
respect to the Killing form in each case).

Recall from \cite{LambertTong} that several of these rank two Lie
algebras are thought to correspond to the gauge algebras for $N=8$
super Yang--Mills theories whose IR superconformal fixed points
are described by the Bagger--Lambert theory based on $S_4$ for two
M2-branes on $\RR^8 / \ZZ_2$ (with Lie algebras $\fso(4)$,
$\fso(5)$ and $\fg_2$ corresponding to Chern--Simons levels
$k=1,2,3$).  It would interesting to understand whether there is
any relation with the aforementioned truncation beyond just
numerology! The general mass formulae we have obtained are
somewhat reminiscent of equation (26) in \cite{LambertTong} for
the BLG model based on $S_4$ which describes the mass in terms of
the area of the triangle formed between the location of the two
M2-branes and the orbifold fixed point on $\RR^8 / \ZZ_2$.  More
generally, it would be interesting to understand whether there is
a specific D-brane configuration for which $\eL^E$ is the
low-energy effective lagrangian?

\subsubsection{$\eL^{E_0}$}
\label{sec:L-Ezero}

The field $A^{i a_0}$ appears only linearly in one term in
$\eL_{\text{CS}}^0$ and is therefore a Lagrange multiplier imposing
the constraint $K_{ijk a_0} A^{jk} = d \gamma_{i a_0}$, where
$\gamma_{i a_0}$ is a scalar field on $\RR^{1,2}$ taking values in
$\RR^r \otimes E_0$.

Substituting this condition into the lagrangian allows us to write
\begin{equation}
\label{eq:intgamma}
\begin{aligned}
  -\half D_\mu X_I^{a_0} D^\mu X_I^{a_0} + \eL_{\text{CS}}^{E_0} =&
  -\half \partial_\mu \left( X_I^{a_0} - \gamma_i{}^{a_0} X_I^i
  \right) \partial^\mu \left( X_I^{a_0} - \gamma_j{}^{a_0} X_I^j
  \right) \\
  &- \tfrac{1}{3} A^{ij} \wedge d \gamma_{i a_0} \wedge d
  \gamma_{ja_0} + \half L_{ijkl} A^{ij} \wedge d A^{kl}~.
\end{aligned}
\end{equation}
The first line shows that we can simply redefine the scalars
$X_I^{a_0}$ such that they decouple and do not interact with any other
fields in the theory.

Notice that none of the projections $A^{\eta^\pi \zeta^\pi} = d
\phi^{\eta^\pi \zeta^\pi}$ of $A^{ij} $ that appeared in $\eL^E$
can appear in the second line of \eqref{eq:intgamma} since the
corresponding terms would be total derivatives.  Consequently, our
indifference to $\eL^{E_0}$ in the gauge-fixing that was described for
$\eL^E$, resulting in \eqref{eq:LESYMgf}, was indeed
legitimate.  Furthermore, there can be no components of $A^{ij}$ along
the 2-planes in $\RR^r$ spanned by the nonanishing components of
$K_{ijk a_0}$ here for the same reason.

The contribution coming from the Chern--Simons term in the second line
of \eqref{eq:intgamma} is therefore completely decoupled from all the
other terms in the lagrangian.  It has a rather unusual-looking
residual gauge symmetry, inherited from that in the original
Bagger--Lambert theory, under which $\delta \gamma_{i a_0} = \sigma_{i
  a_0} := K_{i a_0 kl} \Lambda^{kl}$ and $L_{ijkl} \left( \delta
  A^{kl} - d \Lambda^{kl} \right) = \sigma_{[i}{}^{a_0} d \gamma_{j]
  a_0}$ for any gauge parameter $\Lambda^{ij}$.  In addition to the
second line of \eqref{eq:intgamma} being invariant under this gauge
transformation, one can easily check that so is the tensor $L_{ijkl} d
A^{kl} - d \gamma_{i a_0} \wedge d \gamma_{j a_0}$.  This is perhaps
not surprising since the vanishing of this tensor is precisely the
field equation resulting from varying $A^{ij}$ in the second line of
\eqref{eq:intgamma}.  The important point though is that this
gauge-invariant tensor is exact and thus the field equations resulting
from the second line of \eqref{eq:intgamma} are precisely equivalent
to those obtained from an abelian Chern--Simons term for the gauge
field $C_{ij} := L_{ijkl} A^{kl} - \gamma_{[i}{}^{a_0} \wedge d
\gamma_{j] a_0}$ (where the $[ij]$ indices do not run over any
2-planes in $\RR^r$ which are spanned by the nonvanishing components
of $\eta_{[i}^\pi \zeta_{j]}^\pi$ and $K_{ijk a_0}$).

In summary, up to the aforementioned field redefinitions, we have found that
\begin{equation}
\label{eq:intgamma2}
\eL^{E_0} = -\half \partial_\mu X_I^{a_0} \partial^\mu X_I^{a_0} + \half M^{ijkl} C_{ij} \wedge d C_{kl}~,
\end{equation}
for some constant tensor $M^{ijkl}$, which can be taken to obey
$M^{ijkl} = M^{[ij][kl]} = M^{klij}$, that is generically a
complicated function of the components $L_{ijkl}$ and $K_{ijk
  a_0}$.  Clearly this redefined abelian Chern--Simons term is only
well-defined in a path integral provided the components $M^{ijkl}$ are
quantised in suitable integer units.  However, since none of the
dynamical fields are charged under $C_{ij}$ then we conclude that the
contribution from $\eL^{E_0}$ is essentially trivial.

\subsection{Examples}
\label{sec:BLexamples}

Let us end by briefly describing an application of this formalism to describe the unitary gauge theory resulting from the
Bagger--Lambert theory associated with two of the admissible index-$2$ 3-Lie algebras in the IIIb family from \cite{2p3Lie} that
were detailed in section~\ref{sec:examples}.

\subsubsection{$V_{\text{IIIb}}(0,0,0,\fh,\fg,\psi)$}
\label{sec:noJ}

The data needed for this in Theorem~\ref{thm:main} is $\kappa |_{\fh} = (0,1)^t$, $\kappa |_{\fg_{\alpha}}= (\psi_\alpha, 1)^t$.
The resulting Bagger--Lambert lagrangian will only get a contribution from $\eL^W$ and describes a sum of separate $N=8$ super
Yang--Mills lagrangians on $\fh$ and on each factor $\fg_\alpha$, with the respective euclidean Lie algebra structures describing
the gauge symmetry.  The super Yang--Mills theory on $\fh$ has coupling $\| X^{u_2} \|$ and the seven scalar fields occupy the
hyperplane orthogonal to $X^{u_2}$ in $\RR^8$.  Similarly, the $N=8$ theory on a given $\fg_\alpha$ has coupling $\| \psi_\alpha
X^{u_1} + X^{u_2} \|$ with scalars in the hyperplane orthogonal to $\psi_\alpha X^{u_1} + X^{u_2}$.  This is again generically a
super Yang--Mills theory though it degenerates to a maximally supersymmetric free theory for all eight scalars if there are any
values of $\alpha$ for which $\psi_\alpha X^{u_1} + X^{u_2} =0$.

\subsubsection{$V_{\text{IIIb}}(E,J,0,\fh,0,0)$}
\label{sec:soJ}

The data needed for this in Theorem~\ref{thm:main} is $\kappa |_{\fh} = (0,1)^t$ and $J^\pi = \eta^\pi \wedge \zeta^\pi$ where
$\eta^\pi$ and $\zeta^\pi$ are 2-vectors spanning $\RR^2$ for each value of $\pi$ and $E = \bigoplus_{\pi =1}^M E_\pi$.  The data
comprising $J^\pi$ can also be understood as a special case of a general admissible index $r$ 3-Lie algebra having all $\eta^\pi
\wedge \zeta^\pi$ spanning the same 2-plane in $\RR^r$ (when $r=2$ this is unavoidable, of course).  The resulting Bagger--Lambert
lagrangian will get one contribution from $\eL^W$, describing precisely the same $N=8$ super Yang--Mills theory on $\fh$ we saw
above, and one contribution from $\eL^E$.  The latter being the simplest case of the lagrangian \eqref{eq:LbetaSYM} where there is
just one equivalence class of 2-planes spanned by all $\eta^\pi \wedge \zeta^\pi$ and the gauge symmetry is based on the
lorentzian Lie algebra $\fd (E, \RR )$.  The physical degrees of freedom describe free abelian $N=8$ massive vector
supermultiplets on each $E_\pi$ with masses $\Delta_{\eta^\pi \zeta^\pi}$ as in \eqref{eq:LESYMgf}.  Mutatis mutandis, this
example is equivalent to the Bagger--Lambert theory resulting from the most general finite-dimensional 3-Lie algebra example
considered in section 4.3 of \cite{Ho:2009nk}.

\section*{Acknowledgments}

EME would like to thank Iain Gordon for useful discussions.  PdM is
supported by a Seggie-Brown Postdoctoral Fellowship of the School of
Mathematics of the University of Edinburgh.

\bibliographystyle{utphys}
\bibliography{AdS,AdS3,ESYM,Sugra,Geometry,Algebra}

\end{document}